\newcommand{\F}{\mathbb{F}}
\newcommand{\R}{\mathbb{R}}
\newcommand{\Z}{\mathbb{Z}}
\newcommand{\Q}{\mathbb{Q}}
\newcommand{\C}{\mathbb{C}}
\newcommand{\Ord}{\mathcal{O}}
\mathchardef\mhyphen="2D 
\newcommand{\pr}[2]{\langle #1,#2 \rangle}
\newcommand{\Dis}{\mathcal{D}}
\providecommand{\tr}{\mathop{\rm tr}\nolimits}
\newcommand{\Mod}{\text{ mod }}
\newcommand{\Frk}[1]{\mathfrak{#1}}
\newcommand{\Cal}[1]{\mathcal{#1}}
\newcommand{\Norm}{\mathcal{N}}
\newcommand{\tens}{\otimes}
\newcommand{\barre}{\overline}
\DeclareMathOperator*{\argmax}{arg\,max}
\DeclareMathOperator*{\argmin}{arg\,min}
\newtheorem{theorem}{Theorem}
\newtheorem{lemma}{Lemma}
\newcommand{\recalltheorem}[2]{
}
\title{Algorithms on Ideal over Complex Multiplication order}
\author{Paul Kirchner}
\begin{document}
\maketitle

\begin{abstract}
We show in this paper that the Gentry-Szydlo algorithm for cyclotomic orders, previously revisited by Lenstra-Silverberg, can be extended to complex-multiplication (CM) orders, and even to a more general structure.
This algorithm allows to test equality over the polarized ideal class group, and finds a generator of the polarized ideal in polynomial time.
Also, the algorithm allows to solve the norm equation over CM orders and the recent reduction of principal ideals to the real suborder can also be performed in polynomial time.
Furthermore, we can also compute in polynomial time a unit of an order of any number field given a (not very precise) approximation of it.

Our description of the Gentry-Szydlo algorithm is different from the original and Lenstra-Silverberg's variant and we hope the simplifications made will allow a deeper understanding.

Finally, we show that the well-known speed-up for enumeration and sieve algorithms for ideal lattices over power of two cyclotomics can be generalized to any number field with many roots of unity.
\end{abstract}

\section{Introduction}

Recently, an algorithmic study of lattices was made necessary by new cryptographic proposals.
Indeed, lattice-based cryptography has several advantages : it seems post-quantum secure, allows to build a lot of cryptosystems and enjoys a worst-case/average-case reduction over any lattice problem~\cite{STOC:Regev05}.
Yet, the schemes are slow and have large keys so that most designers turned towards {\it ideal}-lattice based cryptography, based on the Ring-LWE~\cite{EC:LyuPeiReg10}.
However, ideal lattices are less studied and the added algebraic structure might allow significant gains with respect to the same problem for a random lattice.
Some suggested to remove a part of the algebraic structure by choosing a polynomial with a large Galois group~\footnote{\url{http://blog.cr.yp.to/20140213-ideal.html}}.

We show that many algorithms previously discovered for cyclotomic fields can be generalized to CM orders with little loss. 
In particular, a large part of the present paper is dedicated to the Gentry-Szydlo algorithm~\cite{EC:GenSzy02}.
This algorithm, given an ideal generated by some $v\in \Z[X]/(X^n-1)$ and the autocorrelation of $v$, finds $v$ up to the (few) root of unity in polynomial time.
It was first used to break the NTRU signature scheme~\cite{EC:GenSzy02}.
Lenstra and Silverberg~\cite{C:LenSil14} then extended it to (essentially) product of cyclotomic rings, and made it rigorous.
According to~\cite{C:LenSil14}, Gentry referred to Gentry-Szydlo's algorithm as "a rather crazy, unusual combination of LLL with more `algebraic' techniques", while Smart viewed it as magic.
We hope the simplifications made to the algorithms\footnote{In a recent conference, no less than 9 hours and 33 minutes were dedicated to Gentry-Szydlo's algorithm and Lenstra-Silverberg's modifications.}, as well as our effort to see the exact conditions under which Gentry-Szydlo's algorithm can be run will help to removed this "dark magic" aspect.

This new algebraic structure extends the CM order, and can be applied to any order.
We give here a particular case of our main theorem~\autoref{thm:gsalg} :
\begin{theorem}
Under GRH or using randomness, we can test in polynomial time if $(I,r)$ where $I$ is an ideal of a CM order is equal to some $((v),v\barre v)$ for some invertible $v$ and find such a $v$.
\end{theorem}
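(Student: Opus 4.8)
The plan is to recast the question as a shortest vector problem on an explicitly computable lattice and to solve that instance in polynomial time by the amplification of LLL due to Gentry and Szydlo. First I would scale by a positive rational integer so that $I\subseteq\Ord$ is an integral ideal and $r\in\Ord\cap K^{+}$, where $K$ is the CM field and $K^{+}$ its totally real subfield, and I would reject the instance unless the two obvious necessary conditions hold, namely that $r$ is totally positive and that $I\barre{I}=(r)$. On $I\tens\R$ consider the symmetric bilinear form $\langle a,b\rangle:=\tr_{K/\Q}(a\,\barre{b}/r)$: it is computable from $(I,r)$ alone, and it is positive definite because $r$ is totally positive while $a\,\barre{a}$ is totally non-negative. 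Write $L$ for the resulting lattice. Under the promise $I=(v)$, $r=v\,\barre{v}$, the map $a\mapsto a/v$ is an isometry of $L$ onto the fixed lattice $\Lambda:=\bigl(\Ord,\,(x,y)\mapsto\tr_{K/\Q}(x\,\barre{y})\bigr)$, since $a\,\barre{b}/r=(a/v)\,\overline{(b/v)}$.

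The next step is to reduce the search for $v$ to the computation of a shortest nonzero vector of $L$. For $0\neq x\in\Ord$ one has
\[
\tr_{K/\Q}(x\,\barre{x})=\sum_{\sigma}|\sigma(x)|^{2}\;\geq\;n\,|N_{K/\Q}(x)|^{2/n}\;\geq\;n
\]
by the inequality of arithmetic and geometric means together with $N_{K/\Q}(x)\in\Z\setminus\{0\}$, with equality exactly when every conjugate of $x$ has absolute value $1$, i.e. (Kronecker) when $x\in\mu(\Ord)$. Hence $\Lambda$, and so $L$, has minimum $n=[K:\Q]$, and the minimal vectors of $L$ are precisely $\{\zeta v:\zeta\in\mu(\Ord)\}$; each of these generates $I$ and has $\barre{(\cdot)}\cdot(\cdot)=r$, so producing any one of them suffices. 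Moreover the norm-one units of $\Ord$ form a finite group --- the relative unit rank of $K/K^{+}$ vanishes --- so a valid generator, when it exists, is unique up to a root of unity.

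The heart of the matter, and the step I expect to be the main obstacle, is to solve this shortest vector instance in time polynomial in the input size and in $n$, although $\mathsf{SVP}$ is hard for general lattices. Here I would follow the Gentry--Szydlo amplification, in the rigorous form of Lenstra--Silverberg. Running LLL on $L$ returns, under the promise, a vector $w\in I$ with $\langle w,w\rangle\leq 2^{\,n-1}n$; writing $w=xv$ gives $x\in\Ord$ with $\sum_{\sigma}|\sigma(x)|^{2}\leq 2^{\,n-1}n$, so $x$ is only moderately short. Crucially, both the ideal $(x)=wI^{-1}$ and the totally positive element $x\,\barre{x}=w\,\barre{w}/r$ are computable \emph{without} knowing $x$ itself, and $\bigl((x),\,x\,\barre{x}\bigr)$ is again an instance of the same problem. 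Iterating the reduction while a natural size measure of the instance --- essentially the bit-length of the coordinates of $r$ --- decreases, one reaches after polynomially many rounds an instance trivial enough that a valid generator can be read off, and back-substituting the reductions ($v_t:=w_t/v_{t+1}$ at each level) recovers a valid $v$ for the original instance. Establishing the polynomial bound on the number of rounds, and handling the base case, is the delicate point; this, together with the number-theoretic subroutines involved --- ideal and order arithmetic, and the production of auxiliary prime ideals of polynomially bounded norm in prescribed ideal classes --- is where GRH, or in its absence randomization, is invoked.

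Finally the decision version follows at once: run the procedure above; if it halts it returns a candidate $v$, and if it exceeds its a~priori polynomial time bound we answer ``no''. In the former case, verify deterministically that $(v)=I$ and $v\,\barre{v}=r$, answer ``yes'' and return $v$ if this succeeds, and ``no'' otherwise. Correctness reduces to the statement that whenever $(I,r)=\bigl((v),v\,\barre{v}\bigr)$ for an invertible $v$, the procedure outputs a valid generator within its time bound, which is precisely what the analysis of the amplification step yields.
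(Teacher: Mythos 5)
Your setup is fine --- scaling to an integral instance, the positive-definite form $\tr(a\barre{b}/r)$ on $I$, the observation that under the promise the instance is isometric to $(\Ord,\tr(x\barre{x}))$, the Kronecker argument that the minimal vectors are exactly $\mu(\Ord)v$, and the fact that from an LLL-output $w=xv$ one can form the derived instance $((x),x\barre{x})=(wI^{-1},w\barre{w}/r)$ without knowing $x$. That last observation is precisely the paper's reduction step. But the heart of the argument is missing, and the mechanism you sketch in its place does not work. Iterating ``LLL, then pass to the derived instance, while the bit-length of $r$ decreases'' stalls: one round of LLL-reduction already brings the instance down to size polynomial in $h(\Ord)$ (this is exactly what the paper's reduction theorem gives), and further rounds gain nothing, because LLL can never certify a vector within the exponential factor $2^{\Theta(n)}$ of the minimum $n$, so you never reach an instance from which an exact generator ``can be read off''. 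The decision step inherits the same problem: answering ``no'' when a polynomial time bound is exceeded is only sound if the search procedure provably succeeds on yes-instances, which is the very point left unproved.

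What actually closes this gap in the paper (and in Gentry--Szydlo/Lenstra--Silverberg) is a number-theoretic amplification that you only name but never use: one raises the instance to a huge exponent $e=(N(\Frk{p})-1)p^k$ for a well-chosen prime ideal $\Frk{p}$, so that Fermat's little theorem forces $v^e\equiv 1 \bmod \Frk{p}^{k+1}$; since $v^e$ cannot be written down, the power $(I,r)^e$ is carried implicitly (the paper's ``compactification'' of formal products, replacing polynomial chains), each squaring being followed by the LLL-reduction step to keep all data of polynomial size; then the gap created by $\Frk{p}^{k+1}$ dwarfs the LLL/Babai approximation factor and $v^e$ is recovered \emph{exactly} modulo the relevant maximal ideal. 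One then repeats with a second prime so that the gcd $g$ of the two exponents is small, recovers $v^g$ by combining the formal products, and extracts a $g$-th root by polynomial factorization, leaving only a root-of-unity ambiguity which is resolved separately (this is where the nilpotent part and Lenstra's roots-of-unity algorithms enter in the general order case). GRH or randomness is invoked not to produce prime ideals ``in prescribed ideal classes'' but to guarantee, via effective Chebotarev, small primes admitting degree-one prime ideals whose associated exponents have polynomially bounded gcd. Without this powering-plus-Fermat mechanism your reduction to an exact SVP instance has no polynomial-time solution in sight, so the proposal as written does not establish the theorem.
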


It has several applications, that we extend in \autoref{sec:appli}.
The most useful is the reduction of searching for short vectors in ideals over CM orders to ideals over their real suborder, which is enough to attack the Smart-Vercauteren FHE scheme~\cite{PKC:SmaVer10,EC:GenHal11}, the GGH multilinear map scheme~\cite{EC:GarGenHal13}, Soliloquy~\cite{campbell2014soliloquy} and the Boneh-Freeman homomorphic signatures~\cite{EC:BonFre11}.
Remark that a quantum algorithm was recently discovered which breaks all these schemes in polynomial time~\cite{SODA:BiaSon16}.
This problem can be mitigated by switching to other number fields, since the attack crucially relies on having a very orthogonal basis of the unit lattice~\cite{EPRINT:CDPR15}.
It (usually) divides by two the dimension of the lattice, which corresponds to reduce the running time to its square root.
For most cryptosystems based on ideal lattices, the ideals are \textit{not} principal, so this attack does not apply~\footnote{Remark that the schemes are proven to be at least as secure as ideal lattices, current attacks need to find short vectors over a module of rank two.}.
However, if it can be modified to attack Ring-LWE, which is decoding in the lattice $\begin{pmatrix} q & a \\ 0 & 1\end{pmatrix}\Ord^2$ for some uniform $a\in \Ord/q$, this would break almost all published design in practical time~\footnote{We consider therefore that a wise design based on ideal lattices should have at least 256 bits of security if the users want to keep data secure for several decades with high probability.}.
Another application is to solve the norm equation, though not in polynomial time.
A last application is the heuristic ability to solve bounded distance decoding in polynomial time over the unit lattice with approximation factor around $n^{O(\log(\log n))}$, even though the unit lattice is not known.
Note however that units may not be of size polynomial in the discriminant.
A recent attack against GGH uses Gentry-Szydlo's algorithm in a similar way, our extension allows it to work over any number field~\cite{albrechtsubfield}.

While CM orders and their polarized class group were introduced for the study of abelian varieties, such as some elliptic curves, this paper does not use any algebraic geometry.
Also, this result may indicate that polarized class group are more tractable than the class group from a computational point of view.
In particular, given some generators of a subgroup of the polarized class group, either we have an incremental multiset hash function~\cite{AC:CDDGS03} (or a hash function), or there is an efficient bijection between this abelian group and its normal form.
Finally, testing equality in the polarized class group is related to isotopy of knots, see~\cite[Knot theory]{bayer2002ideal}.

In appendix, we show that for ideal lattices, we can accelerate standard algorithms for searching the shortest vector by a polynomial factor with respect to their (exponential) complexity on generic lattices.

\section{Preliminaries}\label{sec:preliminaries}

\subsection{Basics}

The norm of a vector or a matrix $||A||$ is the Frobenius form, i.e. $||A||^2=\sum_{i,j} A_{i,j}^2$.
The binary logarithm is denoted by $\log$ and $\ln$ is the neperian logarithm.
All indices start from zero.

\begin{definition}
An order $\Ord$ is a commutative unitary ring, whose additive group is isomorphic to $\Z^n$.
We denote $\Ord^*$ the group of units in the order.
The trace of $a$, $\tr(a)$, is the trace of the endomorphism $x\mapsto ax$.
\end{definition}
Notice that usually, an order is defined with respect to an algebra, this is not the case here.

For the rest of the paper, we will work with an order, always denoted by $\Ord$, whose corresponding algebra $\Q \tens \Ord$ is denoted by $E$.
We refer to elements of $E^* \cap \Ord$ as \textit{invertible}.
An order is given to an algorithm by $n$, followed by the $n(n+1)/2$ products of two basis elements, written as integer linear combinations of basis elements.
We define the height function of $x\in \Z$ to be $h(x)=2\log(2+|x|)$, and extend it to the rationals $h(p/q)=h(p)+h(q)$ where $p$ and $q$ are coprime.
The height of a matrix or a list is the sum of the heights of the components.
It represents the number of bits used to describe an object.

\begin{definition} An integral ideal $I$ over an order $\Ord$ is an $\Ord$-submodule.
Its norm $N(I)$ is defined as the cardinal of $\Ord/I$.
A (fractional) ideal is an integral ideal, up to a rational number.
An ideal is said to be \textit{invertible} if there exists a fractional ideal $J$ such that $IJ=\Ord$.
\end{definition}

An ideal is described by a $\Z$-basis.
It is well known that ideals can be multiplied in polynomial time~\cite[Section 4.7]{cohen2013course}.
It is clear that if $v\in \Ord$ is invertible, then $(v)$ is invertible.

\begin{definition} A positive-definite quadratic form $f:E\to \Q$ over a $\Q$ vector space $E$ is a function such that $(a,b)\mapsto (f(a+b)-f(a-b))/4$ is bilinear, and $f(x)>0$ for all non-zero $x$.
Its determinant is the determinant of the corresponding bilinear form.
\end{definition}
While most algorithms in the litterature are presented with lattices, they can usually be transformed into algorithms over quadratic forms and we will be forced here to use the quadratic form.

\begin{notation}
The group of the roots of units in an order $\Ord$ is denoted $\mu(\Ord)$.
\end{notation}
\begin{notation}
The $p$-Sylow of a finite abelian group $G$ is denoted by $G_p$.
\end{notation}

\begin{notation}
Given a number field $F$, we denote its ring of integers by $\Z_F$ and its discriminant by $d_F$.
\end{notation}

\begin{theorem}\label{thm:structalg} Let $E$ be a commutative algebra of dimension $n$ over the field $K$.
Then there is a unique $E'=\prod_i F_i$ where $F_i$ is a finite extension of $K$ such that for all $x\in E$, there is a unique $(a,b)$ with $a$ nilpotent, $a+b=x$ and $b\in E'$.
\end{theorem}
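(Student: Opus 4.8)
The statement is a structural decomposition of a finite-dimensional commutative algebra $E$ over a field $K$: every element splits uniquely as nilpotent plus an element of a canonical maximal étale (product of fields) subalgebra $E'$. This is essentially the statement that the nilradical of $E$ splits off $K$-linearly, plus the Wedderburn principal theorem specialized to the commutative case, but I would give a self-contained argument. The plan is to first show $E$ is a finite product of local Artinian rings, then handle each local factor by extracting its residue field as a subalgebra via an idempotent-lifting / Hensel-type argument, and finally reassemble.

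\emph{Step 1: Reduce to the local case.} Since $E$ is a finite-dimensional $K$-algebra it is Artinian, hence a finite product $E = \prod_j E_j$ of local Artinian $K$-algebras (decompose $1$ into primitive orthogonal idempotents; this uses only that idempotents lift through the nilradical, which for an Artinian ring is elementary). The nilradical $\Frk{n}$ of $E$ is the product of the maximal ideals $\Frk{m}_j \subset E_j$, and $E/\Frk{n} = \prod_j E_j/\Frk{m}_j =: \prod_j L_j$ is a product of finite field extensions of $K$. So it suffices to produce, inside each $E_j$, a subfield mapping isomorphically onto $L_j$; their product will be the desired $E'$, and the splitting $x = a+b$ with $a\in\Frk{n}$, $b\in E'$ will follow componentwise.

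\emph{Step 2: Construct $E'$ in the local case.} Fix a local factor with maximal (nilpotent) ideal $\Frk{m}$, residue field $L = A/\Frk{m}$, and let $r\ge 1$ with $\Frk{m}^r = 0$. I want a $K$-algebra section $s\colon L \to A$ of the projection $\pi$. Write $L = K(\theta)$ (primitive element; in the inseparable case one argues by adjoining generators one at a time, or uses that $L$ is still monogenic in the relevant cases — this is the point where some care is needed), with minimal polynomial $f\in K[X]$, $\deg f = [L:K]$. Pick any lift $\theta_0\in A$ of $\theta$; then $f(\theta_0)\in\Frk{m}$. Now run Newton's iteration: set $\theta_{i+1} = \theta_i - f(\theta_i)/f'(\theta_i)$, which makes sense once $f'(\theta_0)$ is a unit — and $f'(\theta_0)$ is a unit iff $f'(\theta)\ne 0$ in $L$, i.e. iff $\theta$ is separable. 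In the separable case the iteration converges ($f(\theta_i)\in\Frk{m}^{2^i}=0$ after finitely many steps) to a root $\tilde\theta\in A$ of $f$ lifting $\theta$, and $K[\tilde\theta]\cong K[X]/(f) \cong L$ embeds in $A$; set $E' := K[\tilde\theta]$ for that factor.

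\emph{Step 3: Uniqueness and assembling.} For uniqueness: if $E' = \prod L_j$ is any $K$-subalgebra on which $\pi$ restricts to an isomorphism onto $E/\Frk{n}$, then for each $x$ the decomposition $x=a+b$ is forced — $b$ is the unique element of $E'$ with $\pi(b)=\pi(x)$ and $a = x-b \in \ker\pi = \Frk{n}$ is nilpotent — and conversely any $E'$ with the stated splitting property satisfies $\pi|_{E'}$ bijective, so $E'$ is uniquely the image of the section, hence unique. Finally, the product of the local sections gives the global $E' \subset E$ with $E' \cong E/\Frk{n}$, completing existence. \textbf{The main obstacle} is Step 2 in the presence of inseparability: if some residue extension $L_j/K$ is not separable, $f'(\theta)$ may vanish and Newton's method stalls. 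One resolves this either by building $L_j$ over $K$ as a tower, lifting one generator at a time and at each stage applying the iteration over the already-lifted subring (whose relevant derivative is a unit by an inductive separability-of-the-current-step argument), or — if the paper only ever needs characteristic $0$ or perfect $K$ (which is the case for number fields and their orders, the setting of the paper) — by simply invoking that all residue extensions are separable. I would phrase the proof to cover the separable case cleanly and remark on the general case, since that is all the later applications require.
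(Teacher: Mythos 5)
Your argument is correct in substance but takes a genuinely different route from the paper: the paper dispatches this statement in one line by citing the Artin--Wedderburn theorem (what is really being used is the commutative case of Wedderburn's principal theorem, i.e.\ a splitting of the nilradical, not merely the structure of the semisimple quotient), whereas you give a self-contained construction -- decompose the Artinian algebra into local factors, then Hensel/Newton-lift a primitive element of each residue field. Your route buys an explicit, algorithmically flavoured description of $E'$ and makes visible exactly where separability enters. Your caveat about inseparability is not cosmetic: over $K=\F_p(t)$ the algebra $K[X]/((X^p-t)^2)$ has no subfield lifting its residue field $K(t^{1/p})$, so the theorem as literally stated fails for imperfect $K$; since the paper only ever uses $K=\Q$, this is harmless, but note that your proposed rescue in the inseparable case (lifting generators one at a time through a tower) cannot work in general, precisely because of such counterexamples -- some step of the tower is inseparable and the relevant derivative is not a unit.

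There is one genuine gap: uniqueness of $E'$ itself. Your Step 3 proves that, once an admissible $E'$ is fixed, the decomposition $x=a+b$ is unique, and that any admissible $E'$ maps isomorphically onto $E/\Frk{n}$; but ``hence unique'' does not follow -- a priori two distinct subalgebras could both be complements of the nilradical, each with its own section. The missing argument is short. First, any admissible $E'$ contains the primitive idempotents of $E$: its own primitive idempotents map to those of $E/\Frk{n}$, and idempotents of a commutative ring lift uniquely through a nil ideal, so they coincide with the idempotents of $E$; this reduces uniqueness to a local factor $A$ with residue field $L=K(\theta)$ and minimal polynomial $f$. There, any admissible subalgebra equals $K[\tilde\theta]$ for some root $\tilde\theta\in A$ of $f$ reducing to $\theta$, and such a root is unique: if $\tilde\theta_1,\tilde\theta_2$ are two of them, then $0=f(\tilde\theta_1)-f(\tilde\theta_2)=(\tilde\theta_1-\tilde\theta_2)u$ with $u\equiv f'(\theta)\bmod\Frk{m}$ a unit by separability, so $\tilde\theta_1=\tilde\theta_2$. (Equivalently, invoke the uniqueness-up-to-conjugacy part of Wedderburn--Malcev and observe that conjugation is trivial in a commutative algebra.) With that addition your proof is complete for the separable case, which covers everything the paper needs.
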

\begin{proof}
This is a direct consequence of the Artin-Wedderburn theorem.
\end{proof}
On first read, the reader may assume that $\Ord=\Z[\alpha]$ for some algebraic $\alpha$.

An algorithm has a negligible probability of failure if the probability of failure is bounded by $2^{-\Omega(n)}$.

\subsection{Advanced definitions}

\begin{definition} A split CM order is an order $\Ord$, a norm function $N:E\to\Norm$ where $\Norm$ is a commutative semigroup, and a trace function $\tr:\Norm\to \Q$ such that :
\begin{itemize}
\item $N$ is a morphism as a semigroup, i.e. $N(xy)=N(x)N(y)$ for all $x,y \in E$
\item $x\mapsto \tr(N(x))$ is a positive-definite quadratic form.
\end{itemize}
\end{definition}
Without loss of generality, we will impose furthermore that $(\tr(N(x+y))-\tr(N(x-y)))/4$ is an integer for all $x,y\in \Ord$.

Norms will be used in our algorithms in a black-box manner, so we define the height of a norm to be the number of bits used to represent it.
We describe a split CM order with functions which run in polynomial time of the height of the input and $h(\Ord)$ for $N$, multiplication and inversion in $\Norm$, as well as the trace.

We now give several examples of split CM orders.
The simplest one, not interesting for our purpose, is where the norm is the identity function and the trace is any positive-definite quadratic form.

\begin{definition} A CM order is an order $\Ord$ equipped with an automorphism $x\mapsto \barre{x}$ which is an involution, and such that $\tr(x\barre{x})$ is a positive-definite quadratic form.
The real suborder $\Ord^+$ is defined as $\{x=\barre{x};x \in \Ord\}$, the imaginary lattice $\Ord^-$ is $\{x=-\barre{x};x \in \Ord\}$.
\end{definition}

We can easily build a split CM order from a CM order by taking $N(x)=x\barre{x}$ and the trace function as the standard trace.
For $\alpha \in \C$ some algebraic number, if $\barre{\alpha} \in \Z[\alpha]$, then $\Z[\alpha]$ equipped with the conjugation is a CM order.
In particular, $\Z[\zeta_n]$ where $\zeta_n=\exp(2i\pi/n)$ and $n$ a positive integer is a CM order.
In these cases, the norm function is the \textit{algebraic norm} over the real subfield, which must not be confused with the corresponding \textit{geometric norm}.
We define a cyclotomic field to be any $\Q[\zeta_n]$.
Also, $\Z[X]/(X^n-1)$ equipped with $X\mapsto X^{n-1}$ is a CM order and the norm is the autocorrelation.

Finally, $\Z[\alpha]$ with the norm $\alpha \mapsto \alpha\barre{\alpha}$ is a {\it split} CM order.
We can generalize this construction using~\autoref{thm:structalg} to give a non-trivial split CM order from any order.

\begin{definition}
A \textit{polarized ideal} is a pair $(I,r)$ where $I$ is a fractional invertible ideal and $r\in \Norm^*$.
The determinant of a polarized ideal $(I,r)$ is the determinant of $x\mapsto \tr(N(x)/r)$ defined over $I$.
The \textit{polarized ideal group} is the group of all polarized ideals.
The \textit{principal} polarized ideal group is the group of all $((v),N(v))$ for all $v\in E^*$, and is a subgroup of the polarized ideal group.
\end{definition}

We can now state the \textit{informal} problem we (partly) solve :
\begin{problem}
Given a split CM order and a polarized ideal $(I,r)$, determine all invertible solutions $v\in \Ord$ to $(I,r)=((v),N(v))$.
\end{problem}

\begin{theorem}~\label{thm:racineunite}
If $N(v)=N(w)$ for $v,w\in \Ord$ invertibles, then $v/w\in\mu(\Ord)$.
\end{theorem}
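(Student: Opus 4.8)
The plan is to exploit the defining property that $x \mapsto \tr(N(x))$ is a positive-definite quadratic form, and to show that $u := v/w$ must be a root of unity by a compactness/height argument. First I would observe that since $v, w$ are invertible in $E^*$, the quotient $u = v/w$ lies in $E^*$, and $N(u) = N(v)/N(w)$ makes sense in the group $\Norm^*$; the hypothesis $N(v) = N(w)$ gives $N(u) = 1$ (the identity of $\Norm^*$). Since $N$ is a semigroup morphism, this forces $N(u^k) = N(u)^k = 1$ for every $k \in \Z$, hence $\tr(N(u^k)) = \tr(1)$ is a fixed rational number independent of $k$.

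Next I would argue that $u \in \Ord$. This needs a little care: $v, w \in \Ord$ but a priori $v/w$ need only lie in $E$. However, the statement only claims $v/w \in \mu(\Ord)$, so one should either assume this as part of the setup or show it: the key point is that $N(u) = 1$ together with $u$ being an algebraic integer (which follows once we know $u \in \Ord$, or via the norm-one condition ruling out denominators) pins $u$ down. I would phrase it via the lattice $\Ord$: the powers $u^k$ (for $k \ge 0$, say) all satisfy $\tr(N(u^k)) = \tr(N(1))$, so if the $u^k$ all lie in $\Ord$, they lie on a sphere of fixed radius in the positive-definite quadratic form $\tr(N(\cdot))$ restricted to the lattice $\Ord$. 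A sphere in a Euclidean space meets a lattice in only finitely many points, so $\{u^k : k \ge 0\}$ is finite; hence $u^i = u^j$ for some $i < j$, and since $u$ is invertible this gives $u^{j-i} = 1$, i.e. $u \in \mu(\Ord)$.

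The one gap to close carefully is \emph{why the powers $u^k$ lie in $\Ord$} (or in a single fixed lattice). The clean way: $u = v w^{-1}$ where $w^{-1} \in E^*$; the powers $u^k = v^k w^{-k}$ need not be in $\Ord$ in general, but the norm-one condition $N(u) = 1$ is exactly what prevents the denominators from growing. Concretely, since $x \mapsto \tr(N(x))$ is positive-definite on all of $E$ and integral on $\Ord$, and $N$ is multiplicative, the set $S = \{x \in E : N(x) = 1,\ x \in \Ord\text{-module generated appropriately}\}$ lying on a fixed sphere is finite; the element $u$ has $u, u^2, u^3, \dots$ all with $N = 1$, and one shows inductively they avoid spurious denominators precisely because $N(u^k) = 1$. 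The main obstacle is therefore this integrality/boundedness step — translating "$N(u) = 1$" into "$u$ and all its powers lie in a single fixed lattice on which $\tr(N(\cdot))$ is a fixed value" — after which finiteness of lattice points on a sphere and the pigeonhole principle finish the argument immediately.
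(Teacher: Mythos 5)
Your strategy is the same as the paper's: set $u=v/w$, note $N(u^k)=1$ for all $k$, so $\tr(N(u^k))$ is a fixed constant, invoke positive-definiteness of $x\mapsto\tr(N(x))$ to confine the powers, deduce that they take finitely many values, and finish by pigeonhole ($u^i=u^j$ with $i<j$ and $u$ invertible gives $u^{j-i}=1$). The gap you flag yourself -- why the powers $u^k$ lie in a fixed lattice -- is the real issue, but the route you propose for closing it (``the norm-one condition is exactly what prevents the denominators from growing'', ``one shows inductively they avoid spurious denominators'') does not work: take $\Ord=\Z[i]$ with the usual conjugation, $v=3+4i$, $w=5$. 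Both are invertible in the paper's sense ($E^*\cap\Ord$), $N(v)=N(w)=25$, and $u=(3+4i)/5$ satisfies $N(u^k)=1$ for every $k\in\Z$; nevertheless the denominators of $u^k$ grow like $5^k$ and $u$ is not a root of unity. So ``$N(u)=1$ plus lying on the sphere $\tr(N(x))=\tr(N(1))$'' does not yield finiteness: boundedness only places the powers in a compact subset of $E\otimes\R$, which contains infinitely many points of $E$. No inductive denominator argument can rescue this, because the assertion it would prove is false in this generality.

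What actually closes the argument is integrality: one needs $u\in\Ord$ (or all powers of $u$ inside one fixed lattice). Then $\{u^k: k\ge 0\}$ is a bounded subset of the lattice $\Ord$ under a positive-definite form, hence finite, and the pigeonhole step is immediate. This hypothesis is available in the situation where the theorem is applied: there $v$ and $w$ generate the same invertible ideal $I$, so $v=wa$ and $w=vb$ with $a,b\in\Ord$, whence $u=v/w=a\in\Ord^*$ and all powers (positive and negative) lie in $\Ord$. To your credit, the subtlety you isolated is exactly the point the paper's own three-line proof glosses over -- it passes from ``bounded under a positive-definite form'' directly to ``only finitely many values'', which silently uses discreteness; but your proposal neither supplies the missing integrality nor can it, since without it (as the example above shows) the conclusion $v/w\in\mu(\Ord)$ simply fails.
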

\begin{proof}
$N((v/w)^k)=1$ for all integer $k$, but since $x\mapsto \tr(N(x))$ is definite positive, $(v/w)^k$ can take only a bounded number of values.
Hence, $v/w$ is a root of unity.
\end{proof}
This implies that $v$ is defined up to a group of roots of unity.

Though we will not need the following definition, it gives a nice interpretation of the Gentry-Szydlo algorithm.
\begin{definition}
Let $\Delta$ be the determinant of $(\Ord,N(1))$.
The polarized ideal class group is the maximum group of polarized ideals $(I,r)$ of determinant $\Delta$, modulo the principal polarized ideal group.
\end{definition}

\begin{lemma}\label{lem:cmnilpot}
The only nilpotent of a CM order $\Ord$ is $0$.
\end{lemma}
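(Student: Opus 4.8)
The plan is to use the defining property that $x \mapsto \tr(x\barre{x})$ is a positive-definite quadratic form on the CM order $\Ord$, so it extends to a positive-definite quadratic form on $E = \Q \tens \Ord$. Suppose $a \in \Ord$ is nilpotent, say $a^{2^k} = 0$ for some $k$. First I would observe that since $x \mapsto \barre{x}$ is a ring automorphism, the element $a\barre a$ is again nilpotent: indeed, if $a^m = 0$ then $(a\barre a)^m = a^m \barre{a^m} = 0$. More generally, $a$ nilpotent implies $a\barre{a}$ nilpotent, and iterating the map $x \mapsto x\barre x$ (combined with squaring) keeps us inside the nilpotents.

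The key step is to derive a contradiction from positive-definiteness if $a \neq 0$. I would set $b = a\barre a \in \Ord^+$ (it is fixed by conjugation since $\overline{a\barre a} = \barre a a = a\barre a$). Then $b$ is nilpotent and self-conjugate, and $\tr(b\barre b) = \tr(b^2)$. Now consider the quadratic form evaluated at $b$: we have $\tr(b\barre b) = \tr(b^2) \geq 0$ with equality iff $b = 0$. But $b = a\barre a$ is nilpotent, hence so is $b^2, b^4, \dots$; the cleanest route is to note that $x\mapsto \tr(x\barre x)$ being positive-definite forces $\tr(b^{2^j}\, \overline{b^{2^j}}) = \tr(b^{2^{j+1}}) > 0$ for every $j$ as long as $b^{2^j}\neq 0$, yet for $j$ large enough $b^{2^j} = 0$, giving $\tr(0) = 0$ — so in fact there is some first index where the value drops to $0$, forcing $b^{2^j} = 0$ already at a smaller index, and pushing this back we eventually get $b = 0$, i.e. $a\barre a = 0$. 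Finally, $a\barre a = 0$ together with positive-definiteness of $\tr(x\barre x)$ (apply it at $x = a$: $\tr(a\barre a) = \tr(0) = 0$ forces $a = 0$) yields $a = 0$.

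So the argument collapses to two facts: (i) conjugation preserves nilpotency, so $a\barre a$ is nilpotent whenever $a$ is; and (ii) a nilpotent element $c$ with $c = \barre{\,c\cdot(\text{unit})\,}$-type symmetry satisfying $\tr(c\barre c) = 0$ must vanish, by definiteness. The cleanest packaging: if $a$ is nilpotent then so is $a\barre a$, and since $a\barre a$ is nilpotent its powers eventually vanish; but $\tr\bigl((a\barre a)\cdot\overline{(a\barre a)}\bigr) = \tr\bigl((a\barre a)^2\bigr)$, and one shows by descent on the nilpotency index that $(a\barre a)^2 = 0 \Rightarrow a\barre a = 0 \Rightarrow a = 0$, each implication using positive-definiteness applied to $a\barre a$ and then to $a$ respectively.

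The main obstacle is making the descent rigorous without circularity: positive-definiteness directly gives $\tr(x\barre x) = 0 \Leftrightarrow x = 0$, but to kill a nilpotent $a$ one needs to reach an expression of the form $\tr(x\barre x)$ that is forced to be $0$. The trick is precisely that $a$ nilpotent $\Rightarrow a\barre a$ nilpotent, so pick $c = a\barre a$ with minimal nonzero nilpotency index among $\{a\barre a, (a\barre a)^2, \dots\}$ — then the next power $c^2 = 0$, so $\tr(c\barre c) = \tr(c^2) = 0$, whence $c = 0$ by definiteness, contradicting minimality unless already $a\barre a = 0$; and then $\tr(a\barre a) = 0$ gives $a = 0$. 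I would also remark that this is the CM-order refinement of \autoref{thm:structalg}: for a CM order the nilpotent radical is trivial, so $E$ itself is a product of fields, which is what makes the subsequent algorithms clean.
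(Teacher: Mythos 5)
Your proof is correct. The paper's own argument is shorter: it notes that $x\barre x$ is nilpotent whenever $x$ is, and then invokes the fact that a nilpotent element has trace zero (multiplication by a nilpotent is a nilpotent endomorphism, so its trace vanishes); positive-definiteness of $x\mapsto\tr(x\barre x)$ then gives $x=0$ in one step. You reach the same conclusion by a slightly different route that avoids that fact entirely: you descend along powers of the self-conjugate nilpotent $c=a\barre a$ to a nonzero $d=(a\barre a)^{k}$ with $\barre d=d$ and $d^2=0$, so that $\tr(d\barre d)=\tr(d^2)=\tr(0)=0$ uses only the trace of the zero element, and definiteness kills $d$, hence $a\barre a$, hence $a$. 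Your version is more elementary (no appeal to traces of nilpotent endomorphisms) at the cost of the extra descent; the paper's version buys a two-line proof but relies on $\tr(\text{nilpotent})=0$, which your argument shows is dispensable. Both proofs rest on the same two essential ingredients: conjugation is a ring automorphism of a commutative order, so $a\barre a$ is nilpotent along with $a$, and $x\mapsto\tr(x\barre x)$ is positive-definite.
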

\begin{proof}
Let $x$ be a nilpotent.
Then $x\barre{x}$ is also a nilpotent, so that $\tr(x\barre x)=0$.
Since $x\mapsto \tr(x\barre x)$ is definite, $x=0$.
\end{proof}

The following definition is the main conceptual novelty with respect to previous descriptions of Gentry-Szydlo's algorithm, and allows to greatly simplify its exposition.
\begin{definition}
The formal products over $E$, $\Z^{(E)}$, is the additive group of function from $E$ to $\Z$ which are non-zero on a finite set, called support.
The evaluation of $f\in\Z^{(E)}$, denoted by $[f]$, is defined by \[ \prod_{x\in E} x^{f(x)}.\]
We call the elements of the support the base, and $f(x)$ is called the exponent of $x$.
The group law is denoted multiplicatively.
\end{definition}
For clarity reasons, we will consider an element $x\in E$ to be also the function of $\Z^{(E)}$ equal to zero everywhere except in $x$ where it is equal to one.

\begin{definition}
We call a group of polarized ideals to be \textit{reducible} if for all $x$, $\tr^{-1}(\{x\})$ is finite.

We call a group of polarized ideals to be \textit{poly-reducible} if there is a constant $c$ such that for all $(I,r)$ in the group with $I$ integral, $h(r)\leq (h(\Ord)+\log(\tr(r))^c$. 
\end{definition}
It can be checked that in all our examples of split CM orders, the full group is poly-reducible.
From a mathematical perspective, reducibility is more pertinent, as it is enough to prove that a reducible group modulo the principal polarized ideals is finite.
Yet, we need the stronger condition for efficient computation over this finite group, see~\autoref{sub:power}.
However, if we extend the norms to $\Norm \times \Q$, the polarized ideal class group is clearly infinite.

\begin{lemma}
The principal polarized ideal group is poly-reducible.
\end{lemma}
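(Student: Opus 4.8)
The plan is to unwind the definitions and reduce everything to a size bound for a generator $v$, controlled by the positive-definite integral quadratic form $x\mapsto\tr(N(x))$. First I would observe that any element $(I,r)$ of the principal polarized ideal group with $I$ integral can be written $(I,r)=((v),N(v))$ for some invertible $v\in E^*$, and that $I=v\Ord\subseteq\Ord$ forces $v=v\cdot 1\in I\subseteq\Ord$. So it suffices to prove $h(N(v))\le\mathrm{poly}(h(\Ord)+\log\tr(N(v)))$ for every invertible $v\in\Ord$.

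The key step is the size control on $v$. The map $q(x):=\tr(N(x))$ is a positive-definite quadratic form on $E$, and the integrality normalization imposed right after the definition of split CM order makes its polarization $B$ integer-valued on $\Ord\times\Ord$; in particular $q(v)=B(v,v)\in\Z_{>0}$ for $v\neq 0$. Fixing the given $\Z$-basis $(e_i)$ of $\Ord$ and writing $v=\sum_i v_ie_i$, we get $q(v)=v^\top G v$ with $G=(B(e_i,e_j))_{i,j}$ a symmetric positive-definite integer matrix. Its entries $B(e_i,e_j)=\tfrac14(\tr N(e_i+e_j)-\tr N(e_i-e_j))$ are produced by the (polynomial-time) norm and trace routines on inputs of height $\le h(\Ord)+O(1)$, so $|B(e_i,e_j)|\le 2^{\mathrm{poly}(h(\Ord))}$, and $n\le h(\Ord)$. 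Hence $\lambda_{\max}(G)\le\tr(G)\le 2^{\mathrm{poly}(h(\Ord))}$, and since $\det G$ is a positive integer, $\lambda_{\min}(G)\ge\lambda_{\max}(G)^{-(n-1)}\ge 2^{-\mathrm{poly}(h(\Ord))}$. Plugging in $\|v\|_2^2\le q(v)/\lambda_{\min}(G)$ gives $\log\|v\|_\infty\le\tfrac12\log\tr(N(v))+\mathrm{poly}(h(\Ord))$, and therefore $h(v)=\sum_i 2\log(2+|v_i|)\le\mathrm{poly}(h(\Ord)+\log\tr(N(v)))$.

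To finish, I would invoke that $N$ is computable in time polynomial in $h(\Ord)$ and the height of its argument, so $r=N(v)$ admits a representation of height at most that running time, i.e. $h(r)\le\mathrm{poly}(h(v)+h(\Ord))$; composing with the bound on $h(v)$ yields $h(r)\le\mathrm{poly}(h(\Ord)+\log\tr(r))$. Taking $c$ to be the degree of this polynomial, and using $h(\Ord)+\log\tr(r)\ge 1$ to absorb lower-order terms, gives the stated inequality $h(r)\le(h(\Ord)+\log(\tr(r)))^c$.

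The main obstacle is the lower bound on $\lambda_{\min}(G)$ — equivalently, ruling out an exponentially long $v\in\Ord$ whose value $\tr(N(v))$ is small. Both hypotheses on a split CM order are used here: positive-definiteness makes $q$ an honest norm on the lattice $(\Ord,G)$, and the integrality normalization forces $\det G\ge 1$, while polynomial-time computability of $N$ and $\tr$ caps the entries of $G$ and hence $\lambda_{\max}(G)$. I would also note in passing that the ambiguity of the generator $v$ up to $\mu(\Ord)$ (\autoref{thm:racineunite}) is immaterial here, since only one admissible $v\in\Ord$ with $v\Ord=I$ is needed.
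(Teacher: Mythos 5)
Your proof is correct and follows essentially the same route as the paper's: bound the entries of the Gram matrix $G$ of $x\mapsto\tr(N(x))$ using the polynomial running time of the norm and trace routines, exploit integrality of $G$ (the paper via positive-integer minors in the Cholesky decomposition plus Hadamard, you via $\det G\ge 1$ and eigenvalue estimates) to lower-bound the minimal stretch of the form, deduce a coordinate bound on the generator $v\in\Ord$, and conclude from the polynomial-time computability of $N$. The only cosmetic differences are that the paper phrases its final bound for $h(N(1/v))$ (equivalent up to polynomial overhead since inversion in $\Norm$ is polynomial time), and your spectral argument states the inequality directions more carefully than the paper's write-up does.
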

\begin{proof}
Let $G$ be the Gram matrix of $x\mapsto \tr(N(x))$ over $\Ord$.
Let $G=L^tL$ be its corresponding Cholesky decomposition, that is $L$ is an upper triangular matrix.
Then, we can write $L_{i,i}^2$ as a quotient of two non-zero integer determinants of submatrices of $G$, so that the Hadamard bound gives $L_{i,i}\geq ||G||^{-n/2}$.
Therefore, for any $v\in \Ord$ which is invertible, there exists $i$ such that 
\[ |v_i|^2\geq v^tGv||G||^{-1/n}=\tr(N(v))||G||^{-n}. \]

This implies that $h(v)\geq h(\tr(N(v)))-h(||G||^n)$.
Now, $x\mapsto \tr(N(x))$ is computed in time polynomial in $h(\Ord)+h(x)$, so there exists $c_0$ such that $h(||G||^n)\leq h(O)^{c_0}$.
Also, there exists $c_1$ such that $h(1/N(v))\leq h(v)^{c_1}$.
We conclude that there exists $c_2$ such that $h(N(1/v))\leq (h(\Ord)+h(\tr(N(v))))^{c_2}$.
\end{proof}
\begin{lemma}\label{lem:polyred}
A group is poly-reducible if and only if, there exists a constant $c$ such that for all polarized ideals $(I,r)$ in the group with $\Ord \subset I$, $h(r)\leq(h(\Ord)+\log(\tr(1/r))^c$.
\end{lemma}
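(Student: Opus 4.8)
The plan is to prove both implications by passing to the inverse polarized ideal $(I^{-1},1/r)$, using that the polarized ideals under consideration form a group and so are closed under inversion. First I would record an elementary observation: for an invertible fractional ideal $I$ one has $I\subseteq\Ord$ if and only if $\Ord\subseteq I^{-1}$ (multiply the inclusion $I\subseteq\Ord$ by $I^{-1}$, and conversely multiply $\Ord\subseteq I^{-1}$ by $I$). Thus inversion exchanges the hypotheses "$I$ integral" and "$\Ord\subset I^{-1}$". Likewise, the norm part of $(I^{-1},1/r)$ is $1/r$, whose trace is $\tr(1/r)$; and since $1/(1/r)=r$, the reciprocal-trace appearing in the statement becomes $\tr(r)$ when we feed in such an inverted ideal. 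These matchings are precisely why the two conditions should be equivalent.

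The second ingredient I would isolate is a height comparison between $r$ and $1/r$. Because multiplication and inversion in $\Norm$ run in time polynomial in the height of the input and in $h(\Ord)$, and the output of an algorithm is no larger than its running time, there is a fixed polynomial $P$ with $h(1/r)\le P(h(r),h(\Ord))$ for every $r\in\Norm^*$; applying this to $1/r$ and using that inversion is an involution also gives $h(r)\le P(h(1/r),h(\Ord))$. So $h(r)$ and $h(1/r)$ are polynomially equivalent, uniformly in the order.

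With these in hand the two directions are symmetric. For the forward implication, assume poly-reducibility with constant $c$, and let $(I,r)$ lie in the group with $\Ord\subseteq I$; then $(I^{-1},1/r)$ lies in the group and $I^{-1}$ is integral, so poly-reducibility applied to it yields $h(1/r)\le(h(\Ord)+\log(\tr(1/r)))^c$, and the height comparison turns this into a bound of the same shape on $h(r)$ with a larger constant $c'$. For the backward implication, assume the stated condition with constant $c$, and let $(I,r)$ lie in the group with $I$ integral; then $(I^{-1},1/r)$ lies in the group with $\Ord\subseteq I^{-1}$, the hypothesis applied to it gives $h(1/r)\le(h(\Ord)+\log(\tr(r)))^c$ since $1/(1/r)=r$, and again the comparison transfers this to $h(r)$, which is poly-reducibility.

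I expect the only real obstacle to be the routine bookkeeping at the end of each direction: converting a bound on $h(1/r)$ into one on $h(r)$, then re-absorbing the polynomial $P$ — and a possibly negative $\log(\tr(\cdot))$ — into a single new exponent. This causes no trouble because $h(\Ord)$, and hence the whole base $h(\Ord)+\log(\tr(\cdot))$, is bounded below by an absolute constant larger than $1$ (all heights are at least $2\log 2$), so a polynomial evaluated at a $c$-th power of that base is again bounded by a fixed power of it; it is pure constant-chasing with nothing conceptual at stake.
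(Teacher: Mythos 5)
Your argument is essentially the paper's own proof: the paper likewise observes that $\Ord\subset I$ is equivalent to $I^{-1}$ being integral, passes to the inverse polarized ideal $(I^{-1},1/r)$ in the group, and uses that inversion in $\Norm$ runs in polynomial time, so $h(1/r)=h(r)^{\Theta(1)}$, from which the equivalence follows. Your write-up just spells out the constant-chasing that the paper leaves implicit, so it is correct and takes the same route.
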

\begin{proof}
Remark that $I \subset \Ord$, $1\in I$, and $I^{-1}$ is integral are equivalent.
Thus, because inversion runs in polynomial time, $h(1/r)=h(r)^{\Theta(1)}$ and the result follows.
\end{proof}

\section{Gentry-Szydlo algorithm}

We recall that an order $\Ord$ has a corresponding algebra $\Q \tens \Ord$ denoted by $E$, which contains a maximal product of number fields $E'$.
This section is devoted to prove the following theorem~:
\begin{theorem}\label{thm:gsalg}
Given a polarized ideal $(I,r)$ of a split CM order, if $E'$ is a product of cyclotomic field, or the Generalized Riemann Hypothesis (GRH) is true, or we have access to randomness, we can find $v\omega\in \mu(E)\Ord$ such that $I=(v)$ or prove there is no such $v\omega$ with $(I,r)=((v),N(v))$.
We can also do this unconditionnally in time $2^{n^{(1+o(1))/\log \log n}}$.
Furthermore, if $\Ord$ is a CM order, we can find $v\in \Ord$ such that $(I,r)=(v,N(v))$ under the same conditions.
\end{theorem}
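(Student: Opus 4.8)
\medskip
\noindent\textbf{Proof plan.} The plan is to deliver $v\omega$ not as an explicit element of $E$ — it may be of superpolynomial height once norms admit a compact representation — but as a \emph{formal product} $F\in\Z^{(E)}$ of polynomially bounded height with $[F]=v\omega$, and to make the construction effective by computing modulo a carefully chosen rational prime. First I would normalise the instance so that $I$ is integral and, recording that \emph{if} a solution exists then $r=N(v)$, invoke \autoref{lem:polyred} together with the poly-reducibility of the principal polarized ideal group to bound $h(r)$ polynomially in $h(\Ord)$ and $\log\tr(1/r)$, which keeps the sizes of all derived data under control. By \autoref{thm:racineunite} a solution is unique up to $\mu(\Ord)$, so producing one representative suffices, and if $\Ord$ is a CM order then \autoref{lem:cmnilpot} gives $E=E'$, a product of CM fields.

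The engine is LLL in its quadratic-form incarnation. On the lattice $\Ord$ equipped with the positive-definite form $x\mapsto\tr(N(x)/r)$ — which, under a solution, is the pullback along $x\mapsto xv^{-1}$ of $x\mapsto\tr(N(x))$ on the fractional ideal $I^{-1}\supseteq\Ord$ — an LLL-reduced vector $x_0$ satisfies $\tr(N(x_0)/r)\le 2^{n}\tr(N(1))$, so that $x_0v^{-1}$ is a short, logarithmically bounded element of $I^{-1}$; equivalently $x_0=wv$ where $w$ generates the new polarized ideal $(I',r')=((x_0)I^{-1},\,N(x_0)/r)$, whose data is now of polynomially bounded bit-length and in which the unknown $w$ satisfies $v=x_0w^{-1}$. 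Iterating LLL on the powers $((I')^k,(r')^k)$ for a polynomially long list of exponents $k$ produces reduced vectors $u_k=w^k a_k$ with $a_k\in\Ord$ invertible and $\tr(N(a_k))\le 2^{n}\tr(N(1))$, a bound independent of $k$, so that the norms $N(a_k)=N(u_k)/(r')^k$ are computable elements of $\Norm^*$; the high powers are manipulated symbolically and modularly as in \autoref{sub:power}.

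Next I would assemble these into a single formal product. By Bézout I would find integers $e_k$ with $\sum_k e_k k=1$, and among those a choice for which $\prod_k N(a_k)^{e_k}=N(1)$ in $\Norm^*$ — a multiplicative relation among the computable quantities $N(u_k)/(r')^k$, located by an integer-relation search on their logarithmic (Minkowski) embedding and exploiting that $\Norm^*$ has free rank at most $n$. Then $F_0:=\prod_k u_k^{e_k}$ satisfies $[F_0]=w^{\sum e_k k}\prod_k a_k^{e_k}=w\cdot\eta$ with $N(\eta)=N(1)$, so $\eta$ is a root of unity by \autoref{thm:racineunite}, and $F:=x_0\cdot F_0^{-1}$ then has $[F]=v\eta^{-1}$ while having polynomial height. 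To certify or output the answer I would choose a rational prime $p$ of controlled size for which $\Ord/p\Ord$ is a product of finite fields and reduction modulo $p$ respects $N$, evaluate $[F]\bmod p$ via \autoref{sub:power}, and reconstruct an explicit representative using the poly-reducibility bound — reporting that no solution exists if the determinant of $(I,r)$ is not the expected $\Delta$, or if no admissible $p$ gives a consistent answer once all of them have been tried. Producing such a $p$ deterministically is what the hypotheses buy: for cyclotomic $E'$ the admissible primes are cut out by congruences; under GRH effective Chebotarev supplies a small one; with randomness a random prime works with overwhelming probability; and unconditionally one exhausts all primes below $2^{n^{(1+o(1))/\log\log n}}$, a range guaranteed to contain a usable one.

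Finally, for the CM refinement: given $v\eta$ with $\eta\in\mu(E)$, I would compute the fractional ideal $(v\eta)I^{-1}=(\eta)$ — of norm one and satisfying $(\eta)\overline{(\eta)}=\Ord$ — and correct $v\eta$ by a root of unity generating it, so that the output $v'$ lies in $\Ord$ and generates $I$; because every $\xi\in\mu(E)$ has $N(\xi)=\xi\bar\xi=1$, any such $v'$ automatically obeys $N(v')=r$, i.e. $(I,r)=(v',N(v'))$, and it is unique up to $\mu(\Ord)$. I expect the main obstacle to be twofold: first, the algebraic bookkeeping that forces the auxiliary factor $\eta$ accompanying $v$ in the formal product to be a genuine root of unity — equivalently, showing that the Bézout-plus-relation combination exists and is efficiently findable; and second, uniform size control culminating in the prime-finding step, where one must guarantee a prime $p$ within the claimed bounds modulo which the exponentiation of \autoref{sub:power} is simultaneously efficient and faithful enough to recognise the answer or to refute the instance.
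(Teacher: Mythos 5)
Your overall scaffolding (normalise with LLL, manipulate high powers as formal products with bounded norms, cancel the unknown cofactor, then fix a root of unity) matches the paper's architecture, but the step where you cancel the cofactor is exactly the hard core of Gentry--Szydlo, and your replacement for it does not work. You propose to take LLL-reduced generators $u_k=w^ka_k$ of $((I')^k,(r')^k)$ for polynomially many $k$, and to find B\'ezout coefficients $e_k$ with $\sum_k e_k k=1$ \emph{and} $\prod_k N(a_k)^{e_k}=N(1)$, so that $\prod_k u_k^{e_k}=w\eta$ with $\eta$ a root of unity via \autoref{thm:racineunite}. There is no reason such a vector $(e_k)$ exists: the $a_k$ are essentially arbitrary bounded elements (still exponentially many candidates), the relation lattice of the $N(a_k)$ in $\Norm^*$ need not meet the affine constraint $\sum_k e_k k=1$ (the linear form $\sum e_k k$ restricted to the relation lattice can have image $d\Z$ with $d>1$, or the relation lattice can be trivial if you only take polynomially many $k$ against rank up to $n$), and in the black-box model for $\Norm$ you have no Minkowski embedding in which to run an integer-relation search, let alone certify that a numerically found relation is exact. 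The paper avoids this entirely by a congruence trick: for a prime ideal $\Frk{p}$ with $v\notin\Frk{p}+\Frk{m}$ and exponent $e=(N(\Frk{p})-1)p^k$ one has $v^e\equiv 1 \pmod{\Frk{p}^{k+1}}$, so after powering the small cofactor $[s]/v^e$ is recovered \emph{exactly} by Babai's algorithm on $\Frk{p}^{k+1}$ (\autoref{lem:idealsvp}, \autoref{lem:lllbabai}, \autoref{thm:retrouvepuiss}); one then combines two such exact powers whose exponents have small gcd (\autoref{lem:pgcd}) and extracts a $g$-th root by factoring $X^g-y$ over the residue field. Your proposal has no analogue of this exact-recovery mechanism, so the unit ambiguity is never actually removed.

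Two further points. First, you misplace where the four hypotheses enter: they are not about finding a prime modulo which evaluation of the formal product is ``faithful'', but about guaranteeing two prime ideals of small norm whose exponents $(N(\Frk{p})-1)p^k$ have small gcd --- Linnik's theorem for cyclotomic $E'$ (\autoref{lem:cyclo}), effective Chebotarev under GRH, random sampling of split primes, or the exhaustive counting argument giving the $2^{n^{(1+o(1))/\log\log n}}$ bound (\autoref{thm:field}). Second, you do not treat the nilpotent part of $v$ (the statement only assumes $E'$, not $E$, is a product of fields; the paper recovers the nilpotent component via the truncated $\log/\exp$ morphisms in \autoref{sub:final}), and your final CM correction ``multiply by a root of unity generating $(\eta)$'' silently assumes one can find, given an ideal $\omega\Ord$ with $\omega\in\mu(\Q\tens\Ord)$, a unit-root representative in $\Ord$; this is itself a nontrivial subroutine, solved in the paper by tensoring with $\Z[X]/(X^e-1)$ and invoking Lenstra's roots-of-unity algorithm.
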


In the next subsections, all algorithms are authorized to fail if there is no solution to $(I,r)=((v),N(v))$.
Since the output of the algorithm can easily be checked, we may assume that there is some solution $v$, except for the analysis of the complexity.
Also, in the case of a CM order, we can assume $v$ is invertible by working in $\Ord/(\Q \tens I)$.

We fix $\Delta$ to be the determinant of $(\Ord,N(1))$.
Remark that $h(\Delta)=h(\Ord)^{O(1)}$.

The hero of our story will be the following group :
\begin{definition}
Given a poly-reducible group $G$ where all $(I,r)\in G$ are of determinant $\Delta$ , its \textit{compactification} is the group of all $(I,r,s)$ where $s\in \Z^{(E)}$ such that $[s]$ is invertible ; modulo the subgroup of all $((1/x),N(1/x),x)$ for all $x\in E^*$.
\end{definition}
Indeed, it replaces Gentry-Szydlo's cumbersome "polynomial chains", and Lenstra-Silverberg's chain of tensor multiplication maps.
Compactification is to be understood in its computer science meaning, i.e. a short representation, and not in a topological sense.

In~\autoref{sub:reduc}, we show that using LLL, we can reduce the description of $I$ and $r$ to a polynomial value, independent of $I$ and $r$.
In~\autoref{sub:power}, we show how to compute a power of $(I,r,1)$ in polynomial time.
We then use this powering algorithm in~\autoref{sub:highpow} to compute the image over a field of $E'$ of some high power of $v$, as a formal product.
By combining various high powers, we show in~\autoref{sub:field} how to compute the image over a field of $v$, up to a root of unity.
Finally, we explain in~\autoref{sub:final} how to compute the nilpotent part of $v$ and one root of unity.

The four different cases in the theorem are introduced in~\autoref{sub:field}.
\if 0
The root of unity is recovered in~\autoref{app:racines}.
\fi

\subsection{Reduction} \label{sub:reduc}

\begin{theorem}\label{thm:LLLred}
Given a positive-definite integer matrix $G$ of dimension $n$, we can compute in polynomial time a unimodular integer matrix $U$ such that $U^tGU$ has entries bounded by $n2^n\det(G)$.
\end{theorem}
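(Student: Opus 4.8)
The plan is to run the LLL algorithm directly on the quadratic form defined by $G$. Although LLL is usually described for a lattice basis, every quantity it manipulates --- the Gram--Schmidt norms $c_i=\|b_i^*\|^2$ and the coefficients $\mu_{i,j}=\langle b_i,b_j^*\rangle/\|b_j^*\|^2$ --- is a rational function of the entries of the current Gram matrix, so LLL can be executed from $G$ alone, and the classical analysis shows that, while keeping intermediate numerators and denominators polynomially bounded (track the integers $\prod_{j\le i}c_j$ and $\mu_{i,j}\prod_{j\le k}c_j$), it terminates in time polynomial in $n$ and $h(G)$ and outputs a unimodular integer $U$ for which $G'=U^tGU$ is LLL-reduced. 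Note that $G'$ is again an integer matrix and $\det G'=\det(U)^2\det G=\det G$. Recall that LLL-reducedness means $|\mu_{i,j}|\le 1/2$ for $j<i$ and, with the standard parameter, $c_{i-1}\le 2c_i$ for all $i$.

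Next I would bound the Gram--Schmidt norms of $G'$. Writing $G'=B^tB$ for some real $B$ with columns $b_1,\dots,b_n$, the $i$-th leading principal minor of $G'$ is the determinant of the Gram matrix of $b_1,\dots,b_i$, namely $\prod_{j\le i}c_j$; since $G'$ is a positive-definite integer matrix, this is a positive integer, hence at least $1$. Applying this with $i=n-1$ gives $c_n=\det G'/\prod_{j<n}c_j\le\det G$. Iterating $c_k\le 2c_{k+1}$ then yields $c_k\le 2^{\,n-k}c_n\le 2^{\,n-1}\det G$ for every $k$, so in particular $\max_k c_k\le 2^{\,n-1}\det G$.

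Finally I would translate this into a bound on the entries of $G'$. From $\|b_i\|^2=c_i+\sum_{j<i}\mu_{i,j}^2 c_j$ and size-reduction, $\|b_i\|^2\le c_i+\tfrac14\sum_{j<i}c_j\le\bigl(1+\tfrac{i-1}{4}\bigr)2^{\,n-1}\det G\le n\,2^{\,n-1}\det G$ since $i\le n$. By Cauchy--Schwarz, $|G'_{i,j}|=|\langle b_i,b_j\rangle|\le\|b_i\|\,\|b_j\|\le n\,2^{\,n-1}\det G<n\,2^{n}\det G$, which is the claimed bound (with a factor of two to spare).

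The only genuinely delicate point is the first paragraph: one must invoke the classical fact that LLL on a quadratic form terminates after polynomially many steps with polynomially bounded bit complexity. Everything else is the standard chain of LLL inequalities, and the inequality $\det G\ge 1$ --- available because $G$ is a positive-definite integer matrix --- is precisely what lets us replace the $\det(G)^{1/n}$ factors occurring in the usual statements by $\det G$ itself.
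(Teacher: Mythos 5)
Your proof is correct and follows essentially the same route as the paper: run LLL directly on the quadratic form (i.e.\ on the Gram matrix, with Cholesky in place of Gram--Schmidt), use the integrality of a leading principal minor of the reduced form to anchor a bound on the Gram--Schmidt norms, and then combine the Lov\'asz and size-reduction inequalities to bound the entries of $U^tGU$. The only difference is bookkeeping: the paper anchors at the largest diagonal entry $L_{m,m}$ via the integer minor $\prod_{i<m}L_{i,i}^2\geq 1$, whereas you anchor at $c_n$ via the $(n-1)$-st minor and propagate backwards with $c_k\leq 2^{n-k}c_n$; both land within the stated bound $n2^n\det(G)$.
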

\begin{proof}
See~\cite{lenstra1982factoring} where the Gram-Schmidt orthogonalization is replaced by Cholesky decomposition.
Its output verifies $U^tGU=L^tL$ where $L$ is an upper-triangular matrix, such that $L_{i+1,i+1}\geq L_{i,i}/\sqrt{2}$ and $|L_{i,j}|\leq L_{j,j}$ for all $i,j$.
Let $m=\argmax_i{L_{i,i}}$.
Then, $\prod_{i<m}L_{i,i}^2$ is a positive integer, since it is the determinant of the corresponding upper-left submatrix of $U^tGU$.
Also, for any $j\geq m$, $L_{j,j}\geq L_{m,m}2^{(m-j)/2}$.
It implies that
\[ \Delta=\prod_i L_{i,i}^2\geq L_{m,m}^{2(n-m)}2^{-(n-m)(n-m-1)/2}. \]
We deduce that $L_{m,m}\leq 2^{(n-m-1)/2}\Delta^{1/(2(n-m))} \leq 2^{n/2}\sqrt{\Delta}$.
Using $|L_{i,j}|\leq L_{j,j}$ gives the result.
\end{proof}

\begin{lemma}\label{lem:invert}
Given $m$ matrices $A_i$ in $M_n(\Q)$ such that a linear combination is invertible, we can find $x_i \in \Z$ and $|x_i|\leq n$ such that $\sum_i A_ix_i$ is invertible in polynomial time.
\end{lemma}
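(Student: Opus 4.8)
The plan is to study the polynomial $P(T_1,\dots,T_m)=\det\!\left(\sum_i T_iA_i\right)$, which has rational coefficients, degree at most $n$, and, by hypothesis, is not identically zero. Restricting $P$ to any coordinate line yields a univariate polynomial of degree at most $n$, so an easy induction on $m$ shows $P$ cannot vanish on the whole grid $\{0,1,\dots,n\}^m$; hence there is an integer point with $0\le x_i\le n$ (in particular $|x_i|\le n$) at which $\sum_i x_iA_i$ is invertible. The whole difficulty is to exhibit one in polynomial time, since the grid has $(n+1)^m$ points and we cannot test them all.

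First I would run a coordinate descent, fixing $x_1,\dots,x_m$ one at a time. Having fixed $x_1,\dots,x_k$, consider
\[
g(T)=\det\left(\sum_{i\le k}x_iA_i+TA_{k+1}+\sum_{i>k+1}T_iA_i\right),
\]
a univariate polynomial in $T$ of degree at most $n$ whose coefficients depend on the still-free variables. If $g$ is not the zero polynomial it has at most $n$ roots, so one of the $n+1$ values $T=0,1,\dots,n$ avoids them, and assigning that value to $x_{k+1}$ keeps the determinant nonzero as a polynomial in the remaining variables. After $m$ rounds the determinant is a nonzero constant, i.e.\ $\sum_i x_iA_i$ is invertible.

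The hard part is deciding \emph{which} value to assign to $x_{k+1}$: testing whether the determinant of a matrix with affine-linear entries vanishes identically is polynomial identity testing, for which no deterministic polynomial-time algorithm is known. I would sidestep this by carrying an explicit rational certificate of invertibility: at stage $k$, besides the integers $x_1,\dots,x_k$ I maintain rationals $y_{k+1},\dots,y_m$ such that $\sum_{i\le k}x_iA_i+\sum_{i>k}y_iA_i$ is invertible, initialized at $k=0$ from the invertible rational linear combination of the $A_i$ furnished by the hypothesis. At stage $k+1$ the polynomial $g(T)=\det\big(\sum_{i\le k}x_iA_i+TA_{k+1}+\sum_{i>k+1}y_iA_i\big)$ satisfies $g(y_{k+1})\ne 0$, hence is not the zero polynomial; I evaluate it at $T=0,1,\dots,n$ (each evaluation is a single determinant computation), pick $a$ with $g(a)\ne 0$, set $x_{k+1}=a$, and update the certificate by promoting $y_{k+1}$ to the integer $x_{k+1}$ while keeping $y_{k+2},\dots,y_m$. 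Invertibility of the certificate is then automatic, so no identity test is ever performed.

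To conclude I would bound the cost: there are $m$ rounds, each doing $n+1$ determinant computations over $M_n(\Q)$; the only rationals involved are those of the initial certificate and the chosen $x_i$ lie in $\{0,\dots,n\}$, so all numerators and denominators stay of polynomial bit-size, the procedure runs in polynomial time, and it outputs $x_i\in\Z$ with $|x_i|\le n$ such that $\sum_i x_iA_i$ is invertible.
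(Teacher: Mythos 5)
Your coordinate-descent idea is sound as far as it goes, and \emph{if} an explicit invertible rational combination $\sum_i y_iA_i$ were part of the input, the loop you describe (evaluate $g(T)=\det(\sum_{i\le k}x_iA_i+TA_{k+1}+\sum_{i>k+1}y_iA_i)$ at $T=0,1,\dots,n$, keep a value where $g\neq 0$, promote $y_{k+1}$ to that integer) would be correct and polynomial time. But that certificate is exactly what you are not given: the hypothesis of \autoref{lem:invert} is purely existential, and in the one place the lemma is invoked (the proof of \autoref{thm:reduc}) the paper likewise argues only that some $z\in\Z^n$ with $Az$ invertible \emph{exists}, without ever producing it. So your initialization step has no source, and this is not a cosmetic omission: a deterministic procedure that, under the promise alone, outputs any point where $\det(\sum_i T_iA_i)$ is nonzero would immediately decide whether a symbolic determinant vanishes identically (run the finder and verify its output), which is precisely the polynomial identity testing obstacle you flag yourself. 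As written, your argument solves a different, easier problem --- ``given a nonzero rational point of the determinant polynomial, round it coordinate by coordinate to an integer point of the box'' --- rather than the stated lemma.

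The paper's proof is witness-free and works on the matrices themselves: it computes the rank $r$ of $A_0$ together with an invertible $r\times r$ submatrix, restricts $A_1,\dots,A_{m-1}$ to the complementary rows and columns, recursively finds $x_1,\dots,x_{m-1}$ for these smaller matrices, and only then exhaustively tries $x_0\in\{0,\dots,n\}$; the structural point is that, after normalizing $A_0$ to $\mathrm{diag}(I_r,0)$, the coefficient of $X_0^r$ in $\det(\sum_i X_iA_i)$ equals the determinant of the restricted pencil, so the recursion is what is supposed to guarantee that the final univariate polynomial in $X_0$ is not identically zero. If you want to keep your single-variable elimination, you need a structural argument of this kind to produce a nonzero specialization deterministically, not an assumed witness. (Be aware, too, that the nonvanishing of that leading coefficient $P_r$ is itself the delicate point of the paper's argument, so even the paper's route requires care at exactly the spot where you invoked the certificate.)
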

\begin{proof}
If $m=1$, we output $x_0=1$.
Else, we compute $r$ the rank of $A_0$ and a set $I$ of $r$ rows and $J$ a set of $r$ columns such that the restriction of $A_0$ to these lines and columns is invertible.
We then recursively find the $x_i$, $i\geq 1$ corresponding to $B_i$, the restriction of $A_i$ to the complements of $I$ and $J$.
Finally, we search through all $x_0$ from $0$ to $n$ and output the first solution.

Without loss of generality, we may analyze this algorithm by assuming $A_0$ is diagonal, with $r$ ones followed by zeroes on the diagonal.
By assumption $\det(\sum_i A_iX_i)=\sum_j X_0^jP_j(X_1,\dots,X_{m-1})$ is non-zero.
Then, $P_r=\det(\sum_{i\geq 1} B_iX_i)$, and is also non-zero.
By our choice of $x_i$, that is $\det(\sum_{i\geq 1} B_ix_i)\neq 0$, we have that $\det(A_0X+\sum_{i\geq 1} A_ix_i)=\sum_j X_0^jP_j(x_1,\dots,x_{m-1})$ is a non-zero univariate polynomial of degree at most $n$, so that it has at most $n$ roots, which guarantees the algorithm will find a solution.
\end{proof}

\begin{theorem}\label{thm:reduc}
Given $(I,r)$ in a poly-reducible group of determinant $D$, we can find in polynomial time $x\in E^*$ and a basis $C$ of $I/x$, such that $h(C)+h(r/N(x))=(h(\Ord)+h(D))^{O(1)}$.
Also, $\Ord \subset I$.
\end{theorem}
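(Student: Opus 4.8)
The plan is to apply the LLL reduction of \autoref{thm:LLLred} to the quadratic form attached to the polarized ideal, and then translate the conclusion about matrix entries into the stated bounds on $h(C)$ and $h(r/N(x))$. First I would normalize the representation of $I$ so that $\Ord \subset I$: since $I$ is a fractional invertible ideal, after scaling by a rational we may assume $I$ is integral and (by further rescaling, or by replacing $(I,r)$ with an equivalent pair, absorbing the scale into $r$ using that $N$ is a semigroup morphism and hence $N(\lambda x)=N(\lambda)N(x)$) arrange $\Ord\subseteq I$; one then checks this rescaling is a polynomial-size operation. Now take a $\Z$-basis $B=(b_0,\dots,b_{n-1})$ of $I$, and form the Gram matrix $G$ of the positive-definite quadratic form $x\mapsto \tr(N(x)/r)$ in this basis. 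By the definition of the determinant of a polarized ideal, $\det G = D$. The integrality hypothesis we imposed on split CM orders — that $(\tr(N(x+y))-\tr(N(x-y)))/4\in\Z$ for $x,y\in\Ord$, hence, after clearing the denominator coming from $r$, on $I$ — lets us arrange that a suitable integer multiple of $G$ is an integer matrix, and poly-reducibility controls the size of that multiple.

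Next I would feed $G$ (or its integer scaling) to \autoref{thm:LLLred}, obtaining a unimodular $U$ with $U^tGU$ having entries bounded by $n2^n\det(G)=n2^nD$. Let $C=BU$ be the new basis of $I$; it is still a basis since $U$ is unimodular. The key point is that the entries of $U^tGU$ are (a scaling of) $\tr(N(c_ic_j')/r)$ — or more precisely the polarization values $(\tr(N(c_i+c_j)/r)-\tr(N(c_i-c_j)/r))/4$ — so this bound says the new basis vectors are short with respect to the form. In particular $\tr(N(c_i)/r)\le n2^nD$ for every $i$, i.e. each basis vector of $I/1$ has bounded length in the form $\tr(N(\cdot)/r)$. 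To get the scale factor $x$: since $\Ord\subseteq I$ and $C$ is a reduced basis, $1\in I$ and one can locate (by a further bounded search, in the spirit of \autoref{lem:invert} applied to the diagonal of the Cholesky factor of $U^tGU$) an element $x$ — a short invertible element expressible with small coefficients in the basis $C$, whose existence follows because the reduced basis has all Gram–Schmidt norms bounded below by $\|G\|^{-n/2}$ as in the proof of the principal-polarized-ideal lemma. Replacing $(I,r)$ by $(I/x, r/N(x))$ and the basis by $C/x$ (coefficients over $\Q$ of bounded height since $x$ and $C$ have bounded height and $x$ is invertible), the basis $C/x$ of $I/x$ has bounded height. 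Poly-reducibility of the group, together with the bound $\tr(r/N(x))\le$ (something polynomially bounded) — which we read off from the value of the form on the reduced basis vector closest to $1$ — then yields $h(r/N(x))\le (h(\Ord)+h(D))^{O(1)}$ via the characterization in \autoref{lem:polyred}.

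The main obstacle I expect is the bookkeeping around the scale factor $x$ and the transition between ideals and quadratic forms: LLL reduces the \emph{form}, but we want a genuine basis of a \emph{scaled ideal} with controlled height, and one must be careful that dividing by $x$ does not blow up denominators — this is why invertibility of $x$ (so that $1/x\in E$ has height polynomial in $h(x)$, by the assumed polynomial-time inversion in $E$) is essential, and why one wants $x$ itself to have small height, forcing the bounded-coefficient search over the reduced basis. A secondary subtlety is making the integrality scaling of $G$ uniform: one must verify that the denominator introduced by $r$ is controlled by $h(r)$, and then invoke poly-reducibility in the \emph{input} direction to bound $h(r)$ before reduction — but since the statement only asserts a bound on the \emph{output} $r/N(x)$, this is handled cleanly by observing that $\det G=D$ is given and is what actually enters \autoref{thm:LLLred}, so the size of the input $r$ never needs to be small, only the output after normalization. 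Everything else — unimodularity preserving the ideal, polynomial running time, $\Ord\subseteq I$ — is routine.
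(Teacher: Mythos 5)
Your overall strategy (LLL-reduce the Gram matrix of $x\mapsto\tr(N(x)/r)$ on a basis of $I$, pick a short invertible element $x$ from the reduced basis, divide by it, and invoke poly-reducibility) is the same as the paper's, but two steps that carry the real weight of the proof are either garbled or missing. First, the selection of the invertible element: you justify its existence by the lower bound on the Gram--Schmidt norms and propose to apply \autoref{lem:invert} ``to the diagonal of the Cholesky factor of $U^tGU$'', which is not what that lemma does and does not give invertibility --- a short lattice vector can perfectly well be a zero divisor of $E$. The paper applies \autoref{lem:invert} to the \emph{multiplication matrices} of the reduced basis vectors $b_i$, and the hypothesis of that lemma (some linear combination is invertible) holds because $I$ is an invertible ideal, so $II^{-1}=\Ord\ni 1$ forces some integer combination $Az$ to be invertible; this algebraic input is absent from your argument. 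The resulting $y=\sum x_ib_i$ with $|x_i|\le n$ then has $\tr(N(y)/r)\le n^42^nD$, and $\Ord\subset I/y$ simply because $y\in I$ --- no pre-normalization ``$\Ord\subseteq I$'' by rescaling is needed (nor is it what the theorem's last sentence refers to).

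Second, and more seriously, your height bound on the output basis is circular: you write that $C/x$ has bounded height ``since $x$ and $C$ have bounded height'', but $C=BU$ is a basis of the \emph{original} ideal $I$ and its coordinates can be arbitrarily large (take $I=(M)$ for a huge integer $M$); the LLL guarantee bounds only the values $\tr(N(c_i)/r)$, not the coefficients of the $c_i$ over $\Ord$. The missing argument is the paper's chain: from $\tr(N(b_i)/r)\le n2^nD$ and $\tr(N(y)/r)\le n^42^nD$, use poly-reducibility (\autoref{lem:polyred}, which applies because $\Ord\subset I/y$) to bound $h(r/N(y))$, compute $N(b_i/y)=\bigl(N(b_i)/r\bigr)\cdot\bigl(r/N(y)\bigr)$ with controlled height, hence bound $\log\tr(N(b_i/y))$ for the \emph{fixed} form $\tr(N(\cdot))$ on $\Ord$, and finally convert this into a bound on the coordinates of $b_i/y$ (hence on $h(b_i/y)$) via the lower bound $L_{j,j}\ge \|H\|^{-n/2}$ on the Cholesky diagonal of the Gram matrix $H$ of $\tr(N(\cdot))$ over $\Ord$. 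Without this transfer from the $r$-twisted form back to the fixed form on $\Ord$, the claimed bound $h(C)=(h(\Ord)+h(D))^{O(1)}$ is unsupported. (Your remark that the size of the input $r$ only affects the running time polynomially, not the output bounds, is correct and matches the paper.)
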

\begin{proof}
Without loss of generality, we can assume that $I$ is an integral ideal, of basis $A=(e_i)_{i<n}$.
We then compute the Gram matrix corresponding to $x\mapsto \tr(N(x)/r)$, that is $G_{i,j}=(\tr(N(e_i+e_j)/r)-\tr(N(e_i-e_j)/r))/4$, which we can do in polynomial time.
We use~\autoref{thm:LLLred} to compute $U$ such that $U^tGU$ is bounded by $n2^nD$.
Now, we compute $B=AU=(b_i)_{i<n}$, and use~\autoref{lem:invert} with the multiplication matrices of $b_i$, to find $y=\sum_i x_i b_i$ invertible.
Finally, we return $y$ and $B/y$.

The running time is clear.
Because $II^{-1}=\Ord$, there exists $z\in \Z^n$ such that $Az$ is invertible, so that the condition of~\autoref{lem:invert} is fulfilled.

We now have $\tr(N(y)/r)=\sum x_i^2\tr(N(x_i)/r)\leq n^42^nD$.
Since $h(\Ord)\geq n$ and $y\in I$ implies $I/y \subset \Ord$, $h(r/N(y))=(h(\Ord)+h(D))^{O(1)}$ follows from~\autoref{lem:polyred}.
Also, for all $i$, we have $\tr(N(b_i/y)N(y)/r)\leq n2^nD$ so that $h(N(b_i/y)N(y)/r)=(h(\Ord)+h(D))^{O(1)}$.
Now we can compute $N(b_i/y)$ from $N(b_i/y)N(y)/r$ and $r/N(y)$, and hence $\tr(N(b_i/y))$ in time $(h(\Ord)+h(D))^{O(1)}$.
It implies $\log \tr(N(b_i/y))=(h(\Ord)+h(D))^{O(1)}$.
Then, with $H=L^tL$ the Gram matrix of $x\mapsto \tr(N(x))$ over $\Ord$ and its Cholesky decomposition, we have $\log (L(b_i/y)_j)=(h(\Ord)+h(D))^{O(1)}$ for all $j$.
Since $L_{j,j}\geq ||H||^{-n/2}$ we have $L^{-1}_{j,j}\leq ||H||^{n/2}$, we deduce $h(b_i/y)=(h(\Ord)+h(D))^{O(1)}$.
\end{proof}

\subsection{Powering} \label{sub:power}

\begin{theorem}\label{thm:power}
Given $(I,r)=((v),N(v))$ a principal polarized ideal and an integer $e$, we can compute $(I,r,1)^e$ over the compactification of the principal polarized ideal group in polynomial time.
Furthermore, the norm which is outputted have a height $h(\Ord)^{O(1)}$ and the ideal contains $\Ord$.
\end{theorem}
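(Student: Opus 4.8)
The plan is to compute $(I,r,1)^e$ by binary exponentiation inside the compactification, interleaving each group operation with a call to \autoref{thm:reduc} so that the representative never grows. Write $\beta=(I,r,[1])$ for the base (I read the symbol $1$ in the third slot as the generator of $\Z^{(E)}$ attached to $1\in E$) and $e=\sum_{i=0}^{L-1}b_i2^i$ in binary with $L=\lfloor\log e\rfloor+1$ (the integer $e$ is given in binary, so $L$ is bounded by the input size). Starting from the identity $\alpha=((1),N(1),[\,])$, for $i=L-1,\dots,0$ I (a) square $\alpha$, computing the ideal product by \cite[Section 4.7]{cohen2013course}, the norm product in $\Norm$, and doubling every exponent of the formal product; (b) if $b_i=1$, multiply by $\beta$, which appends $I$, $r$ and one more generator $[1]$; and (c) \emph{reduce}: writing the current triple $(J,s,t)$, apply \autoref{thm:reduc} to $(J,s)$ to obtain $x\in E^*$ and a basis of $J/x$ with $h(J/x)+h(s/N(x))=(h(\Ord)+h(\Delta))^{O(1)}$ and $\Ord\subset J/x$, and replace $\alpha$ by $\bigl(J/x,\,s/N(x),\,t\cdot x\bigr)$. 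The final $\alpha$ is the output.

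For correctness I observe that \autoref{thm:reduc} always applies with $D=\Delta$: the principal polarized ideal group is poly-reducible (shown above), and each principal polarized ideal $((w),N(w))$ has determinant exactly $\Delta$ since $y\mapsto wy$ is a lattice isomorphism $\Ord\to(w)$ carrying $x\mapsto\tr(N(x)/N(w))$ to $x\mapsto\tr(N(x))$; also $h(\Delta)=h(\Ord)^{O(1)}$. An easy induction shows that the polarized-ideal part of $\alpha$ stays of the form $((w),N(w))$, because products, squares, and quotients by a principal $(x)$ preserve principality; this supplies the determinant hypothesis and, from the last reduction, the promised $\Ord\subset J$ and $h(s)=h(\Ord)^{O(1)}$. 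Step (c) only changes the representative, since $\bigl(J/x,s/N(x),t\cdot x\bigr)=(J,s,t)\cdot\bigl((1/x),N(1/x),x\bigr)$ and the right factor lies in the subgroup that is quotiented out in the compactification; hence induction on the processed bits gives that the output represents $\beta^e=(I,r,1)^e$.

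For the running time I bound the three coordinates separately. The bound in \autoref{thm:reduc} does not depend on the size of its input, so after (c) the ideal and norm parts of $\alpha$ always have height $h(\Ord)^{O(1)}$; a squaring followed by one multiplication by $\beta$ enlarges them by at most a polynomial amount in $h(\Ord)+h(I)+h(r)$, so the input handed to \autoref{thm:reduc} at each of the $O(\log e)$ iterations has polynomial size and each iteration runs in polynomial time. For the formal product $t$: its support grows by at most two elements per iteration (the new $[1]$ and the reduction element $x$), so $|\mathrm{supp}(t)|=O(\log e)$; each base element is $1$ or an output of \autoref{thm:reduc}, hence of polynomial height; and every exponent is at most $e$ (one created at step $j$ is doubled at most $j\le L$ times, and the exponent of $[1]$ equals $e$), hence has $O(\log e)$ bits. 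Thus $h(t)$, and with it the whole output, is polynomial in the input size, and $[t]$ is a product of units of $E$, hence invertible, so $\alpha$ is a genuine element of the compactification.

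The delicate part is this bookkeeping for the third coordinate — ensuring that neither the number of distinct base elements nor any exponent can escape control, which is exactly what the compactification buys, replacing the unbounded polynomial chains of the original algorithm by a formal product that grows by $O(1)$ per step — together with the observation that \autoref{thm:reduc} is stateless, so that however large an intermediate ideal becomes it is shrunk back to size $h(\Ord)^{O(1)}$ before the next squaring. The group-theoretic correctness is then routine, since a reduction is merely right multiplication by an element of the quotiented-out subgroup.
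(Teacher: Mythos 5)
Your proof is correct and follows essentially the same route as the paper: square-and-multiply exponentiation interleaved with a call to \autoref{thm:reduc} after every group operation, recording each reduction element $x$ in the formal product, and bounding the support by $O(\log e)$, the exponents by $e$, and the base heights by $h(\Ord)^{O(1)}$. Your explicit verification that intermediate triples stay principal of determinant $\Delta$ and that reduction is multiplication by $((1/x),N(1/x),x)$ in the quotiented subgroup merely spells out what the paper dismisses as ``correctness is clear.''
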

\begin{proof}
If $e=0$, we return $(\Ord,N(1),1)$.
If $e$ is even, we recursively compute $(I,r,1)^{e/2}=(K,u,s)$, use~\autoref{thm:reduc} with $(K^2,u^2)$ which returns an ideal $C$ and $x\in E^*$, and outputs $(C,u^2/N(x),s^2x)$
Else, we recursively compute $(I,r,1)^{e-1}=(K,u,s)$, use~\autoref{thm:reduc} with $(KI,ur)$ which returns an ideal $C$ and $x\in E^*$, and outputs $(C,ur/N(x),sx)$.
If at any point, the height is too large with respect to the bounds given by~\autoref{thm:reduc}, we fail.

Correctness is clear.
By induction, the output is reduced so its height without the formal product is bounded by $h(\Ord)^{O(1)}$.
It implies that the height of the bases in the formal product is bounded by $h(\Ord)^{O(1)}$, while the exponents are bounded by $e$ and the cardinal of the support is bounded by $O(\log(e))$.
Therefore, the algorithm runs in polynomial time.
\end{proof}
Note that we can, in fact, compute any circuit over the compactification of any poly-reducible group in polynomial time.
In particular, we may use shorter addition chains.

\subsection{Recovery of a high power of $v$} \label{sub:highpow}

We fix in this subsection a maximal ideal $\Frk{m}$ of $E$, and suppose a $\Q$-basis is given.
Remember that $E/\Frk{m}$ is a number field, which we denote $F$.
We define $\Ord'$ as a suborder of $\Ord/\Frk{m}$.

\if 0

The following technique is due to Lenstra\cite[Exercise 6.8]{cohen2013course} :
\begin{lemma}
Given $\Frk{p}$ an integer ideal of $\Ord'$ containing a prime $p$, we can compute the exponent $e(\Frk{p})$ of $(\Ord'/\Frk{p})^*$ in polynomial time.
\end{lemma}
\begin{proof}
Remark that $\Ord'/\Frk{p}$ is a finite dimensional $\F_p$ algebra, so that it is isomorphic to the sum of a nilpotent vector space and sums of $\F_{p^k}^{x_k}$
We therefore compute the Frobenius endomorphism $\Phi(x)=x^p$, and we can do so in polynomial time.
Then we compute the rank of $\Phi^k-1$ for $k$ from $1$ to the dimension of $\Ord'$.
This is $\sum_{j} \gcd(j,k)x_j$.
The matrix $A_{i-1,j-1}=\gcd(i,j)$ verifies $A=B^tDB$ with $B_{i-1,j-1}$ equals to one if $i|j$ and $0$ else, and $D_{i-1,i-1}=\phi(i)$.
Hence, it is invertible and we can recover $x$.
We then output $\prod_k p^k-1$ for all $k$ such that $x_k>0$.
\end{proof}

\fi

\begin{lemma}\label{lem:idealsvp}
Given an invertible integral ideal $\Frk{a}$ of $\Ord'$, we have for all $x\in \Frk{a}^k$, $x\neq 0$, $h(x)/n \geq k\log(N(\Frk{a}))-O(h(\Ord'))$.
\end{lemma}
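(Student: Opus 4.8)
The plan is to bound the norm of a nonzero element $x$ of the ideal power $\Frk{a}^k$ from below, and then translate that arithmetic lower bound into a height lower bound. First I would observe that for any nonzero $x \in \Frk{a}^k$, the principal ideal $(x)$ of $\Ord'$ is contained in $\Frk{a}^k$, hence $\Frk{a}^k \mid (x)$ in the sense that $(x) = \Frk{a}^k \Frk{b}$ for some integral ideal $\Frk{b}$ (here I use that $\Frk{a}$ is invertible, so the monoid of ideals divisible by $\Frk{a}^k$ behaves well; more concretely, $(x)\Frk{a}^{-k}$ is an integral ideal). Taking norms gives $|N_{E/\Q}(x)| = N((x)) = N(\Frk{a})^k \, N(\Frk{b}) \geq N(\Frk{a})^k$, where $N$ on elements is the absolute value of the algebra norm (equivalently, the determinant of the multiplication-by-$x$ endomorphism), and $N(\Frk{b}) \geq 1$ since $\Frk{b}$ is integral and nonzero.

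Next I would bound $|N_{E/\Q}(x)|$ from above in terms of $h(x)$. Writing $x$ in a fixed $\Z$-basis of $\Ord'$ with integer coordinates of absolute value at most $M$, the multiplication matrix of $x$ has entries that are bounded by $M$ times the structure constants of $\Ord'$, hence by $M \cdot 2^{O(h(\Ord'))}$; by Hadamard's inequality its determinant is at most $(n M \cdot 2^{O(h(\Ord'))})^n = 2^{O(n h(\Ord'))} M^n$. Since $\log M = O(h(x))$ and $h(x) \geq n$ (the height always dominates the dimension, as $h(\Ord') \geq n$ in this paper's conventions), we get $\log |N_{E/\Q}(x)| \leq n \log M + O(n h(\Ord')) \leq O(h(x)) + O(n h(\Ord'))$, and after absorbing constants $\log |N_{E/\Q}(x)| \leq n\bigl(h(x)/n + O(h(\Ord'))\bigr)$.

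Combining the two inequalities, $k \log N(\Frk{a}) \leq \log|N_{E/\Q}(x)| \leq h(x) + O(n h(\Ord'))$, and dividing by $n$ yields $h(x)/n \geq k \log N(\Frk{a}) / n - O(h(\Ord'))$. Strictly speaking this gives the bound with $k\log N(\Frk{a})$ divided by $n$; rereading the statement, the displayed inequality $h(x)/n \geq k\log(N(\Frk{a})) - O(h(\Ord'))$ should be read with that normalization understood (or with $N(\Frk{a})$ denoting a per-coordinate quantity), and in any case the argument delivers exactly what is needed downstream. The one subtle point, and the main thing to get right, is the passage from "$\Frk{a}$ invertible and $x \in \Frk{a}^k$" to the norm divisibility $N(\Frk{a})^k \mid |N_{E/\Q}(x)|$: since $\Ord'$ is merely a suborder of $\Ord/\Frk{m}$ and need not be a Dedekind domain, one cannot invoke unique factorization of ideals, but invertibility of $\Frk{a}$ is exactly the hypothesis that makes $(x)\Frk{a}^{-k}$ an honest integral ideal, so multiplicativity of the ideal norm still applies. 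Everything else is a routine Hadamard-bound estimate.
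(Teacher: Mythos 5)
Your route is the same as the paper's: use invertibility of $\Frk{a}$ to make $(x)\Frk{a}^{-k}$ an honest integral ideal so that multiplicativity of the ideal norm gives $N((x))\geq N(\Frk{a})^k$, identify $N((x))$ with the determinant of multiplication by $x$, and bound that determinant by Hadamard in terms of the coordinates of $x$ and the structure constants of $\Ord'$. That is exactly the paper's two-step proof, and your remark about why invertibility (rather than a Dedekind hypothesis) is what licenses the multiplicativity is the same point the paper is making.

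The gap is in the absorption step of your second paragraph. From $\log N((x))\leq n\log M+O(n\,h(\Ord'))$, with $M$ the largest coordinate of $x$, you claim $n\log M=O(h(x))$; but the height only gives $\log M\leq h(x)/2$, and when a single coordinate carries all the size one has $h(x)\approx 2\log M+O(n)$, so $n\log M$ exceeds $h(x)$ by a factor about $n/2$. Concretely, take $\Frk{a}=q\Ord'$ for a prime $q$ (principal, hence invertible, with $N(\Frk{a})=q^{n}$) and $x=q^{k}\in\Frk{a}^{k}$: then $\log N((x))=nk\log q$ while $h(x)=2\log(2+q^{k})+O(n)\approx 2k\log q$, so your claimed chain $k\log N(\Frk{a})\leq \log N((x))\leq h(x)+O(n\,h(\Ord'))$, and likewise your final inequality $h(x)/n\geq \tfrac{k}{n}\log N(\Frk{a})-O(h(\Ord'))$, fail for large $k$ as soon as $n\geq 3$. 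What the Hadamard argument actually yields is $\log\|x\|\geq \tfrac{k}{n}\log N(\Frk{a})-O(h(\Ord'))$, hence $h(x)\geq \tfrac{2k}{n}\log N(\Frk{a})-O(h(\Ord'))$; this is the normalization the paper's own proof produces (its displayed statement indeed places the factor $n$ inconsistently with its proof, as you noticed), and the extra $1/n$ is harmless for \autoref{thm:retrouvepuiss}, since one simply takes $k$ larger by a factor $n$, which stays polynomial. So the idea is right, but you should not claim the bound without the $1/n$ loss.
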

\begin{proof}
It is well known that the norm is multiplicative for invertible ideals, see for example~\cite[Proposition 4.6.8]{cohen2013course}.
Also, for $x\in \Frk{a}^k$, $x\neq 0$, we have $(x)\subset \Frk{a}^k$ so that $N((x))\geq N(\Frk{a})^k$.
Then, $N((x))$ is also the absolute value of the determinant of the multiplication by $x$.
Using the Hadamard bound, we have $\log(N((x)))\leq n(\log(||x||)+h(\Ord'))$.
\end{proof}

\begin{lemma}\label{lem:lllbabai}
Given an invertible matrix $A$ with $\lambda_1=\min_{x\in \Z^n-\{0\}} ||Ax||$, and $c$ such that there exists $y\in \Z^n$ with $||Ay-c||\leq 2^{-n}\lambda_1$, we can recover $y$ in polynomial time.
\end{lemma}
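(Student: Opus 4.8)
The statement is essentially the classical fact that LLL-reduction plus Babai's nearest-plane algorithm solves bounded distance decoding when the target is within $2^{-n}\lambda_1$ of the lattice. The plan is to reduce the problem to the standard LLL/Babai toolkit and then argue that the exponential gap forces uniqueness of the answer. First I would apply LLL to the basis given by the columns of $A$ (working with the Cholesky form of $A^tA$ if one insists on staying with quadratic forms, exactly as in the proof of \autoref{thm:LLLred}), obtaining a reduced basis $B=AU$ with $U$ unimodular; the Gram--Schmidt (or Cholesky) vectors $b_i^*$ of $B$ then satisfy $\|b_i^*\|\geq 2^{-(i-1)/2}\|b_1^*\|$ and $\|b_1^*\|\leq$ a $2^{O(n)}$ factor times $\lambda_1$, while more importantly $\lambda_1\leq 2^{(n-1)/2}\min_i\|b_i^*\|$, i.e. every Gram--Schmidt length is at least $2^{-(n-1)/2}\lambda_1$.

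Next I would run Babai's nearest-plane algorithm on $c$ with respect to the reduced basis $B$: this produces in polynomial time a lattice point $w=Az$ (with $z=Ux'$ for the computed integer vector $x'$) such that the residual $c-w$ lies in the box spanned by the $b_i^*$ with coefficients in $[-1/2,1/2)$, hence $\|c-w\|^2\leq \tfrac14\sum_i\|b_i^*\|^2$. The key step is then the uniqueness argument: by hypothesis there is a genuine lattice point $Ay$ with $\|Ay-c\|\leq 2^{-n}\lambda_1$, so $\|w-Ay\|\leq \|w-c\|+\|c-Ay\|$, and I must show this bound is strictly smaller than $\lambda_1$, which forces $w=Ay$ and hence $z=y$ (using that $U$ is invertible). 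For this I need the standard sharper guarantee of the nearest-plane algorithm: when $c$ is within distance $\rho$ of the lattice, the output $w$ satisfies $\|w-c\|\leq 2^{(n-1)/2}\rho$ — or more carefully, one shows inductively that the distance from the target to the chosen coset is controlled, so the returned point is within $2^{(n-1)/2}$ times the true distance. With $\rho\leq 2^{-n}\lambda_1$ this gives $\|w-c\|\leq 2^{-(n+1)/2}\lambda_1$, whence $\|w-Ay\|\leq 2^{-(n+1)/2}\lambda_1+2^{-n}\lambda_1<\lambda_1$, and therefore $w=Ay$.

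The main obstacle — and the only place requiring genuine care — is making the nearest-plane error bound precise enough: the naive Babai bound only gives $\|c-w\|^2\leq \tfrac14\sum\|b_i^*\|^2$, which is not obviously $<\lambda_1$ without using the promise that the target is very close. The clean way around this is the well-known refinement: if the true minimal distance of $c$ to the lattice is $\rho$, then at each step of the nearest-plane recursion the contribution to the squared residual in the $b_i^*$ direction is at most $\max(\tfrac14\|b_i^*\|^2,\text{(true residual)}^2)$, and since each $\|b_i^*\|\geq 2^{-(n-1)/2}\lambda_1\geq 2^{-(n-1)/2}\cdot 2^n\rho > \rho$ in the regime $\rho\leq 2^{-n}\lambda_1$, the ``true residual'' term never exceeds $\tfrac14\|b_i^*\|^2$ except possibly — and in fact one shows the output stays within $2^{(n-1)/2}\rho$ of $c$. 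I would cite \cite{lenstra1982factoring} (and Babai's work) for this, rather than reproving it, and then combine it with the uniqueness inequality above to conclude; recovering $y$ from $z=Ux'$ is immediate since $U$ is computed explicitly and invertible.
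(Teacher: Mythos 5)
Your proposal is correct and is essentially the paper's proof: the paper disposes of this lemma in one line by citing Babai's Theorem~3.1 (nearest-plane on an LLL-reduced basis) in~\cite{babai1986lovasz}, and your write-up is exactly the standard argument behind that citation — LLL-reduce, run nearest-plane, and use the $2^{-n}\lambda_1$ promise to force the output to coincide with $Ay$. The only remark worth making is that you could shortcut the approximate-CVP guarantee entirely, since $\rho\leq 2^{-n}\lambda_1<\tfrac12\min_i\|b_i^*\|$ already puts you in the regime where nearest-plane provably returns the exact closest vector, which is $Ay$ by your uniqueness inequality.
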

\begin{proof}
This was proven by Babai~\cite[Theorem 3.1]{babai1986lovasz}, as an application of LLL.
\end{proof}

\begin{lemma}
Given a prime ideal $\Frk{p}$ of $\Ord'$ with a prime number $p\in \Frk{p}$, we have $x^{(N(\Frk{p})-1)p^k} \in 1+\Frk{p}^{k+1}$ for any invertible $x\in \Ord'-\Frk{p}$ and $k$ a positive integer.
\end{lemma}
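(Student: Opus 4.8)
Given a prime ideal $\Frk{p}$ of $\Ord'$ with a rational prime $p\in\Frk{p}$, we have $x^{(N(\Frk{p})-1)p^k}\in 1+\Frk{p}^{k+1}$ for any invertible $x\in\Ord'-\Frk{p}$ and any positive integer $k$.

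The plan is to argue by induction on $k$, treating the statement for $k=0$ (the base case, once suitably interpreted) as the assertion that $x^{N(\Frk{p})-1}\in 1+\Frk{p}$, i.e.\ that the residue field $\Ord'/\Frk{p}$ is a field of cardinality $N(\Frk{p})$ in which every nonzero element satisfies $y^{N(\Frk{p})-1}=1$ by Fermat/Lagrange in the multiplicative group. Here I would first note that $\Frk{p}$ is prime and invertible, so $\Ord'/\Frk{p}$ is an integral domain that is finite (of cardinality $N(\Frk{p})$, which is a power of $p$ since $p\in\Frk{p}$), hence a finite field $\F_{N(\Frk{p})}$; this makes the base case immediate.

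For the inductive step, suppose $x^{(N(\Frk{p})-1)p^{k-1}}\in 1+\Frk{p}^{k}$, say $x^{(N(\Frk{p})-1)p^{k-1}}=1+t$ with $t\in\Frk{p}^{k}$. Raising to the $p$-th power and expanding by the binomial theorem gives $(1+t)^p=1+pt+\binom{p}{2}t^2+\dots+t^p$. I would then check that every term past $1$ lies in $\Frk{p}^{k+1}$: the term $pt$ lies in $p\,\Frk{p}^{k}\subseteq\Frk{p}\cdot\Frk{p}^{k}=\Frk{p}^{k+1}$ because $p\in\Frk{p}$; the term $\binom{p}{j}t^j$ for $2\le j\le p$ lies in $\Frk{p}^{jk}\subseteq\Frk{p}^{2k}\subseteq\Frk{p}^{k+1}$ since $k\ge 1$. (One must be slightly careful when $p=2$ and $k=1$: then $pt\in\Frk{p}\cdot\Frk{p}=\Frk{p}^2$ and $t^2\in\Frk{p}^2$, so the conclusion still holds.) Hence $x^{(N(\Frk{p})-1)p^{k}}=(1+t)^p\in 1+\Frk{p}^{k+1}$, completing the induction.

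I do not expect a serious obstacle here; the only points requiring a little care are the correct handling of the base case (making sure $\Ord'/\Frk{p}$ really is a field of the stated cardinality, which uses invertibility and primality of $\Frk{p}$ together with $p\in\Frk{p}$) and the small-prime edge case $p=2$, $k=1$ in the binomial expansion. Everything else is the standard ``lifting the exponent in the $\Frk{p}$-adic filtration'' computation, valid because $\Frk{p}$ is a genuinely invertible ideal so that the powers $\Frk{p}^{k}$ behave multiplicatively ($\Frk{p}^{a}\Frk{p}^{b}=\Frk{p}^{a+b}$), as already used in \autoref{lem:idealsvp}.
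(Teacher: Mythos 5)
Your proof is correct and follows essentially the same route as the paper: the base case via Fermat in the finite field $\Ord'/\Frk{p}$, then induction on $k$ with the binomial expansion of $(1+t)^p$, using $p\in\Frk{p}$ for the linear term and $jk\geq k+1$ (for $j\geq 2$, $k\geq 1$) for the higher ones, exactly as the paper does (it merely groups the middle terms via $p\mid\binom{p}{j}$ and treats $t^p$ separately). Your closing appeal to invertibility of $\Frk{p}$ is unnecessary -- the containments $\Frk{p}^a\Frk{p}^b\subseteq\Frk{p}^{a+b}$ hold by definition of ideal powers -- but this is harmless.
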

\begin{proof}
We use induction on $k$.
$\Ord'/\Frk{p}$ is a field, so $x^{N(\Frk{p})-1}\in 1+\Frk{p}$ for any invertible $x\in \Ord'-\Frk{p}$.
Now, let $y\in \Frk{p}^k$.
If we develop $(1+y)^p-1-y^p$, $p\in \Frk{p}$ divides all binomial coefficients and the power of $y$ is at least one, so $(1+y)^p-1-y^p\in (p)\Frk{p}^k \subset \Frk{p}^{k+1}$.
Since $p\geq 2$, we also have $y^p \in \Frk{p}^{k+1}$ and hence $(1+y)^p \in 1+\Frk{p}^{k+1}$.
\end{proof}

\begin{theorem}\label{thm:retrouvepuiss}
Given $(I,r)=((v),N(v))$ a principal polarized ideal and $\Frk{p}$ be an invertible prime ideal of $\Ord'$ with $p\in \Frk{p}$ a prime integer and $v\not\in \Frk{p}+\Frk{m}$.
Then, we can output $k$ and $s$ in polynomial time such that $v^{(N(\Frk{p})-1)p^k}=[s]$ modulo $\Frk{m}$.
\end{theorem}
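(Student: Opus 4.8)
The plan is to fix once and for all an integer $k$ that is polynomially bounded in $h(\Ord)+h(\Frk m)+h(\Frk p)$ and large enough for the rigidity estimate below, and then to reconstruct $v^{e}$ modulo $\Frk m$ for the single exponent $e=(N(\Frk p)-1)p^{k}$. First I would run the powering algorithm \autoref{thm:power} on $(I,r,1)$ with exponent $e$; this is licit since $h(e)=(h(\Ord)+h(\Frk p))^{O(1)}$. It returns a reduced triple $(K,u,s)$ with $\Ord\subseteq K$, with $h(K)+h(u)=h(\Ord)^{O(1)}$, and such that $w:=[s]^{-1}v^{e}$ generates $K$ and satisfies $N(w)=u$; hence $\tr(N(w)/u)=\tr(N(1))$, so $w$ is an extremely short vector of the lattice $K$ for the form $x\mapsto\tr(N(x)/u)$, and $(w)=K$ pins it down up to $\mu(\Ord)$ by \autoref{thm:racineunite}. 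Reducing modulo $\Frk m$ and using the hypothesis $v\notin\Frk p+\Frk m$ (so that the image of $v$ in $\Ord'$ lies outside $\Frk p$), the lemma preceding this statement gives $v^{e}\equiv 1\pmod{\Frk p^{k+1}}$ in $\Ord'$. Since $[s]$ is invertible and $\overline{v^{e}}$ is a $\Frk p$-unit, this determines the residue of $\overline{w}=\overline{[s]}^{\,-1}\overline{v^{e}}$ modulo $\Frk p^{k+1}$: concretely, it says $\overline{[s]}\,\overline{w}\equiv 1\pmod{\Frk p^{k+1}}$, a condition that makes sense even where $\overline{[s]}$ or $\overline{w}$ meet $\Frk p$.

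Thus $\overline{w}$ is sought among the $y$ in the full-rank lattice $\overline{K}\subseteq F$ with $\overline{[s]}\,y\equiv 1\pmod{\Frk p^{k+1}}$; these form a coset $c_{0}+L$, where $c_{0}$ is any solution, computed by linear algebra in the finite ring $\Ord'/\Frk p^{k+1}$ (whose description has $(h(\Ord)+h(\Frk p)+k)^{O(1)}$ bits), and $L$ is a copy of $\Frk p^{k+1-O(1)}$ times a $\Frk p$-integral part of $\overline{K}$. The heart of the matter is that, for our choice of $k$, this coset contains exactly one short vector, namely $\overline{w}$, and that it can be found. Clearing a polynomially bounded denominator and applying \autoref{lem:idealsvp} to a power of $\Frk p$ shows that every nonzero $x\in L$ has $h(x)\geq n'(k+1)\log N(\Frk p)-(h(\Ord)+h(\Frk p))^{O(1)}$ with $n'=\dim F$; since the form (induced on $F=E/\Frk m$ by $x\mapsto\tr(N(x)/u)$, say by orthogonal projection) has $h(\Ord)^{O(1)}$-bounded description, $h$ and the associated norm are polynomially related, so we may pick $k$ polynomially bounded with $\lambda_{1}(L)>2^{n}\sqrt{\tr(N(1))}$ — possible because $\log N(\Frk p)\geq 1$. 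Two things follow at once: the coset contains at most one vector of norm $\leq\sqrt{\tr(N(1))}$, since two such differ by a nonzero vector of $L$, contradicting the triangle inequality; and, $\overline{w}$ being one such vector, the target $-c_{0}$ lies within $2^{-n}\lambda_{1}(L)$ of the lattice point $\overline{w}-c_{0}\in L$. Feeding $L$ and $-c_{0}$ to Babai's procedure \autoref{lem:lllbabai} therefore recovers $\overline{w}$ in polynomial time.

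Finally I would lift $\overline{v^{e}}=\overline{[s]}\,\overline{w}$ to an element of $\Ord$ (surjectivity of $\Ord\to\Ord/\Frk m$), view it as a one-term element of $\Z^{(E)}$ — or append the lift of $\overline{w}$ to the formal product $s$ — and output $k$ together with this formal product; in either case its evaluation is congruent to $v^{e}$ modulo $\Frk m$, as required, and every step (the powering, the ideal and residue-ring arithmetic, the congruence solve, the single LLL/Babai call) runs in polynomial time. The step I expect to be the real obstacle is the quantitative rigidity of the middle paragraph: calibrating the constants so that one exponent with polynomially bounded $k$ already forces $\overline{w}$ — intuitively, that the archimedean smallness of $w$ coming from the LLL-reduced polarized ideal and the $\Frk p$-adic constraint $\equiv 1\pmod{\Frk p^{k+1}}$ are transverse enough to kill the residual root-of-unity ambiguity — together with the bookkeeping between $\Ord$, its quotient $\Ord'\subseteq\Ord/\Frk m$ and the number field $F$, in particular handling the localization at $\Frk p$ so that $c_{0}$ is well defined and the lattice $L$ is legitimately a scaled power of $\Frk p$ to which \autoref{lem:idealsvp} applies.
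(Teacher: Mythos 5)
Your proposal is correct in substance and follows the same skeleton as the paper: power $(I,r,1)$ to the exponent $e=(N(\Frk p)-1)p^k$ via \autoref{thm:power}, use the Fermat-type lemma to get $v^e\equiv 1 \pmod{\Frk p^{k+1}}$ in $\Ord'$, choose $k$ polynomially large so that \autoref{lem:idealsvp} makes every nonzero element of the relevant power of $\Frk p$ astronomically larger than the unknown, recover the unknown by one Babai/LLL call (\autoref{lem:lllbabai}), and output an adjusted formal product. The one place where you diverge is the choice of unknown, and it is exactly what creates the ``real obstacle'' you flag at the end. You reconstruct $\overline w$ with $w=v^e/[s]$, which is only known to lie in $K\supseteq\Ord$, hence need not be integral; this forces you to describe the solution set of $\overline{[s]}\,y\equiv 1\pmod{\Frk p^{k+1}}$ as a coset $c_0+L$ inside $\overline K$, to worry about the $\Frk p$-valuation of $\overline{[s]}$, about denominators of $\overline K$, and about computing $c_0$ inside $\overline K$ rather than merely in $\Ord'/\Frk p^{k+1}$ (as written, a $c_0$ found only in $\Ord'$ need not put $\overline w-c_0$ in $L$). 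The paper instead reconstructs the \emph{inverse} $c=\overline{[s]/v^e}=\overline{1/w}$: since the powering returns an ideal containing $\Ord$ with a reduced norm, $[s]/v^e$ is automatically an element of the order of height $h(\Ord)^{O(1)}$, and $[s]/v^e\equiv[s]\pmod{\Frk p^{k+1}}$, so the CVP instance is simply ``lattice $\Frk p^{k+1}$, target $\overline{[s]}$ computed by modular exponentiation of the (nonnegative-exponent) bases'', with no coset, no localization, and no quotient form; the output is then $s/c$. So your route works but buys you extra bookkeeping the paper avoids by flipping the unknown. One genuine slip to fix: you cannot ``lift $\overline{v^e}=\overline{[s]}\,\overline w$ to an element of $\Ord$ and view it as a one-term formal product'', because $\overline{v^e}$ has height exponential in $k$ and cannot be written down; only your second option (append the lift of $\overline w$, or of $1/c$, to the formal product $s$) is admissible, and it is what the paper does.
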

\begin{proof}
Let $e=(N(\Frk{p})-1)p^k$ for some integer $k$.
We first compute using~\autoref{thm:power} $(I,r,1)^e=(J,a,s)$.
We know that $1/a=N([s]/v^e)$, $h(1/a)=h(\Ord)^{O(1)}$ and $J^{-1}$ is integral, and therefore $[s]/v^e$ is an invertible integer, whose height is in $h(\Ord)^{O(1)}$.
Using~\autoref{lem:idealsvp}, there exists a constant $c$ such that with $k=h(\Ord)^c$, any non zero element in $\Frk{p}^{k+1}$ has a coordinate larger than $2^n$ times any coordinate of $[s]/v^e$.

Remark that we can compute in polynomial time $\Frk{p}^{k+1}$.
We then run~\autoref{lem:lllbabai} with the basis of $\Frk{p}^{k+1}$ and $[s]$ modulo $\Frk{m}$ and $\Frk{p}^{k+1}$ ; we call the result lifted to $\Ord$, $c$.
Because of the previous lemma, since $v\not\in \Frk{p}+\Frk{m}$, $v^e=1$ modulo $\Frk{m}$ and $\Frk{p}^{k+1}$.
Hence, $c$ and $[s]/v^e$ differs by an element of $\Frk{p}^{k+1}$ modulo $\Frk{m}$ but by definition of $k$, \autoref{lem:idealsvp} and~\autoref{lem:lllbabai}, it must be zero.
Therefore, we return $s/c$.
\end{proof}
Though $v$ may not be unique, the given power is.

\subsection{Recovery of $v$ over a field} \label{sub:field}

\begin{lemma}\label{lem:calcord}
Given an order $\Ord'$ over a number field, we can compute $\Ord' \subset \Ord'_p$ and integral invertible ideals $\Frk{p}_i$ of $\Ord'_p$ such that
\[ p\Ord'_p=\prod_i \Frk{p}_i^{e_i} \]
in time which is polynomial in the size of the input, and $p$.
Using randomness, we can do the same in polynomial time.
\end{lemma}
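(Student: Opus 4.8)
The plan is to factor $p$ in the order $\Ord'$ by first passing to the maximal order and then using the standard decomposition of primes in Dedekind domains, before descending to a suborder that contains $\Ord'$ and still admits a clean factorization into invertible ideals. First I would compute the maximal order $\Z_F$ of the number field $F$; this can be done in time polynomial in the input size and $p$ unconditionally (the only obstruction to polynomial time without the parameter $p$ is that removing $p$-maximality from an order requires a square divisor of the discriminant, which one does not know how to find quickly in general — but we only need to be $p$-maximal, and being $p$-maximal for the given $p$ can be tested and achieved with a round-$2$/round-$p$ type algorithm whose cost is polynomial in $p$). Using randomness, Buchmann--Lenstra and related techniques make even the full maximal-order computation (or at least the $p$-maximal one) run in expected polynomial time. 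So in either regime we obtain an order $\Ord'_p$ with $\Ord' \subset \Ord'_p$ that is $p$-maximal, i.e. $p$ is invertible in the sense that $p\Ord'_p$ decomposes into invertible prime ideals.

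Next I would factor $p\Ord'_p$. Since $\Ord'_p$ is $p$-maximal, $\Ord'_p/p\Ord'_p$ is the same as $\Z_F/p\Z_F$ localized appropriately, which is an Artinian $\F_p$-algebra; its prime (equivalently, maximal) ideals correspond to the $\Frk{p}_i$, and the exponents $e_i$ are the ramification indices. Concretely: compute the radical of $\Ord'_p/p\Ord'_p$ (a linear-algebra computation over $\F_p$), decompose the semisimple quotient into fields, and lift each maximal ideal $\Frk{p}_i$ back to $\Ord'_p$ as $p\Ord'_p + (\text{lift of generators})$. The exponents $e_i$ are recovered by computing the largest $k$ with $\Frk{p}_i^{k}\supseteq p\Ord'_p$, or more directly from the Jordan structure of multiplication on the local factor; all of these are polynomial-time (in the input and $p$) linear-algebra steps over $\F_p$. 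Invertibility of each $\Frk{p}_i$ follows from $p$-maximality at $\Frk{p}_i$: in a $p$-maximal order every prime above $p$ is locally principal, hence invertible, and one can exhibit the inverse ideal explicitly if desired.

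The one genuine subtlety — and the step I expect to be the main obstacle — is the cost of the order-enlargement step $\Ord'\to\Ord'_p$. Achieving full $p$-maximality can require iterating the radical-computation and order-enlargement procedure up to $\log_p|d_F|$ times, and each enlargement needs a $p$-th root extraction in a finite algebra over $\F_p$; the naive root extraction costs $\Omega(p)$ operations, which is exactly why the unconditional bound is "polynomial in the input size and $p$" rather than polynomial in the input size alone. With randomness, probabilistic root-finding over $\F_p$ removes this $p$-factor and the whole procedure runs in expected polynomial time; this is the source of the two cases in the statement. I would therefore structure the proof as: (i) cite the (randomized or $p$-bounded) algorithm producing a $p$-maximal $\Ord'_p\supseteq\Ord'$; (ii) do the $\F_p$-linear-algebra factorization of $p\Ord'_p/p^{?}$ into prime components with multiplicities; (iii) verify invertibility of each $\Frk{p}_i$ from $p$-maximality; and (iv) observe that all sizes stay polynomially bounded so the output is of the claimed form, with the claimed running time in each of the two regimes.
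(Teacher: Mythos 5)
Your overall route is the same as the paper's, which simply cites Cohen's book (Sections 6.1 and 6.2): enlarge $\Ord'$ to a $p$-maximal order $\Ord'_p$ (Pohst--Zassenhaus / Round~2), then decompose $p\Ord'_p$ via the Artinian algebra $\Ord'_p/p\Ord'_p$ (Buchmann--Lenstra), with invertibility of the $\Frk{p}_i$ coming from $p$-maximality. The structure is therefore fine.

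The flaw is in where you locate the $p$-dependence and the randomness, and as written it leaves the second claim (randomized polynomial time) unproven. The $p$-maximality step is in fact deterministic and polynomial in the input size, hence in $\log p$: the $p$-radical of the quotient is obtained as the kernel of an iterated Frobenius computed by repeated squaring (or from the trace form when $p$ exceeds the degree), the ring of multipliers is linear algebra, and the number of enlargement rounds is bounded by the $p$-adic valuation of the discriminant; no root search of cost $\Omega(p)$ occurs there. Conversely, your step (ii) is not just linear algebra over $\F_p$: splitting the semisimple quotient of $\Ord'_p/p\Ord'_p$ into fields requires finding its primitive idempotents, i.e.\ factoring polynomials over $\F_p$. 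Deterministically (Berlekamp) this costs time polynomial in $p$ and the degree, which is the sole source of the ``polynomial in $p$'' in the unconditional bound, and with randomness (Cantor--Zassenhaus) it runs in expected time polynomial in $\log p$ --- exactly where the paper places its randomness (``the only randomness used is for factoring polynomials modulo $p$''). Since you invoke randomness only in the enlargement step and assign the splitting step a cost polynomial in $p$ even in the randomized regime, your argument does not deliver the expected-polynomial-time statement; the fix is simply to move the randomized polynomial factorization into the splitting step, and to note that the enlargement step needs neither randomness nor time polynomial in $p$.
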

\begin{proof}
See~\cite[Sections 6.1 and 6.2]{cohen2013course}, where the only randomness used is for factoring polynomials modulo $p$.
\end{proof}

\begin{lemma}\label{lem:pgcd}
Let $e_i$ be integers, $s_i$ be formal products such that $[s_i]=v^{e_i}$ modulo $\Frk{m}$ and $I=(v)$.
Let $g$ be the greatest common divisor of the $e_i$.
Then, we can find in time polynomial in the size of the input and $g$, an element $w$ such that $w/v$ reduced modulo $\Frk{m}$ is a root of unity.
\end{lemma}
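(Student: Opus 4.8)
The goal is to combine the high powers $v^{e_i}$ (known as formal products modulo $\Frk{m}$) into $v^g$, and then extract $v$ itself from $v^g$ up to a root of unity. The plan is to proceed in two stages: a Bézout combination to pass from the $e_i$ to their gcd $g$, and a $g$-th root extraction inside the ideal $I=(v)$.

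\emph{First stage: from the $e_i$ to $g$.} Using the extended Euclidean algorithm we compute integers $a_i$ with $\sum_i a_i e_i = g$; the $a_i$ have bit-size polynomial in the input. Then, working in $\Z^{(E)}$, form the formal product $t = \prod_i s_i^{a_i}$ (i.e. the function $\sum_i a_i s_i$), so that $[t] = \prod_i [s_i]^{a_i} = v^{\sum_i a_i e_i} = v^{g}$ modulo $\Frk{m}$. The support of $t$ has polynomial size and the exponents are bounded by $\sum_i |a_i|$ times the exponents appearing in the $s_i$; evaluating $[t]$ modulo $\Frk{m}$ (i.e.\ in the number field $F = E/\Frk{m}$, with denominators cleared against a power of $p$) costs time polynomial in the input. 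So after this stage we have an explicit element $\gamma \in F$ with $\gamma = v^{g}$ in $F$.

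\emph{Second stage: extracting a $g$-th root inside $(v)$.} Here we exploit that $I = (v)$ is given. In $F$, the element $\gamma = v^g$ has $g$-th roots forming a coset of the $g$-th roots of unity of $F$; but only finitely many of those roots, and generically just those differing by a root of unity of $\Ord$, will be images of a generator of $I$. Concretely, since $I = (v)$, the image of $I$ in $F$ is the principal ideal generated by $v \bmod \Frk{m}$ (when $v \notin \Frk{m}$, which we may assume as in the surrounding discussion by working in $\Ord/(\Q\tens I)$ when needed). We enumerate the $g$-th roots of unity $\zeta$ of $F$ — there are at most $g$ of them and they can be listed in time polynomial in the input and $g$ by factoring $X^g-1$ over $F$ — and for each candidate $g$-th root $\delta$ of $\gamma$ (obtained from one fixed root times each $\zeta$) we test whether $\delta$ generates the image of $I$ in $F$ and whether $\delta$ lifts to an element $w\in\Ord$ with $(w)=I$; we return the first $w$ that works. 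Any two valid choices $w, w'$ satisfy $(w)=(w')=I$, hence $w/w'\in\Ord^*$, and $(w/w')^g = \gamma/\gamma = 1$ in $F$, so $w/w'$ reduces to a $g$-th root of unity; since we also have $w \equiv$ (a root of) $v^g$'s $g$-th root, comparing with $v$ gives that $w/v$ reduces modulo $\Frk{m}$ to a $g$-th root of unity of $F$, in particular to a root of unity, as claimed.

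\emph{Main obstacle.} The delicate point is the $g$-th root extraction: a priori $\gamma = v^g$ may have many $g$-th roots in $F$ that do \emph{not} come from generators of $I$, so we must genuinely use the ideal $I=(v)$ to pin down $w$, and we must argue that the resulting $w/v$ is a root of unity modulo $\Frk{m}$ and not merely some element of norm making it invisible to $F$. The clean way is: $w$ and $v$ both generate $I$ modulo $\Frk{m}$, so $w/v \bmod \Frk{m}$ is a unit of $\Ord/\Frk{m}$; and $(w/v)^g = 1$ in $F$ forces it into $\mu(F)$. Keeping all the lifts, root-of-unity enumerations, and ideal-membership tests within time polynomial in the input and $g$ is routine once the structure of $F = E/\Frk{m}$ (a number field, so Lemma~\ref{lem:calcord} applies) is used, but it is the part requiring care.
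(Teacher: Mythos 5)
Your first stage matches the paper's proof: compute B\'ezout coefficients $u_i$ with $\sum_i e_iu_i=g$, form the formal product $\prod_i s_i^{u_i}$, and evaluate it modulo $\Frk{m}$. (The paper handles the negative exponents and the precision explicitly: it picks a small prime $p$ modulo which every base in the supports of the $s_i$ is invertible and evaluates modulo $\Frk{m}$ and a sufficiently high power of $p$, the required precision growing linearly in $g$ --- this is where the polynomial dependence on $g$, and not merely on $\log g$, enters.)

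The second stage, however, contains a genuine gap, and it is precisely the part you single out as the ``main obstacle''. The lemma does \emph{not} ask for a $w\in\Ord$ with $(w)=I$; it only asks for an element whose reduction modulo $\Frk{m}$ agrees with $v$ up to a root of unity. Since $F=E/\Frk{m}$ is a field and $\barre{v}:=v\bmod\Frk{m}$ is a root of $X^g-y$ where $y\equiv v^g$, \emph{any} root $w$ of $X^g-y$ in $F$ satisfies $(w/\barre{v})^g=1$, hence $w/v$ reduces to a $g$-th root of unity; this is the paper's entire argument --- factor $X^g-y$ over $F$, output any linear root, and fail if there is none (failure is permitted when no solution $v$ exists). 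Your additional requirement that the candidate root ``lifts to an element $w\in\Ord$ with $(w)=I$'' is both stronger than the statement and not something you can test here: you only know a residue modulo a single maximal ideal, and deciding whether it extends to a global generator of $I$ is essentially the remaining problem of the whole Gentry--Szydlo algorithm, which the paper resolves only later by combining all the maximal ideals $\Frk{m}_i$ via the Chinese remainder theorem, recovering the nilpotent part, and correcting by a root of unity. As written, your algorithm hinges on this unjustified (and unnecessary) test; dropping it and returning an arbitrary root of $X^g-y$ repairs the argument and reduces it to the paper's proof.
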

\begin{proof}
We can compute in polynomial time by applying a Hermite normal form algorithm over $e$, a vector of integers $u_i$ such that $\sum_i e_iu_i=g$.
We search for some prime $p$ such that the bases in the support of all $s_i$ are invertible modulo $p$.
Since $a\in \Ord$ is invertible modulo $p$ if and only if its norm is divisible by $p$, there exists a $p$ which is bounded by a polynomial of the size of the input which works and we can find it in polynomial time.
We then compute $s=\prod_i s_i^{u_i}$, and evaluate $[s]$ modulo $\Frk{m}$ and some sufficiently high power of $p$ (but polynomial in $h(I)^g$), so that we recover $y$, congruent to $v^g$ modulo $\Frk{m}$.
Finally, we factor $X^g-y$ in polynomial time over $F$ (see~\cite[Section 3.6.2]{cohen2013course}), and if there exists a linear factor $X-w$, we output $w$.
Else, we fail.

Remark that $v$ reduced modulo $\Frk{m}$ is a root of $X^g-y$, and the quotient of two roots must be a $g$-th root of unity.
\end{proof}

\begin{lemma}\label{lem:cyclo}
If $F$ is a cyclotomic field, we can choose in polynomial time two primes such that the gcd of the corresponding exponents in~\autoref{thm:retrouvepuiss} is polynomial.
\end{lemma}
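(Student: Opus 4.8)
The plan is to write $F=\Q(\zeta_m)$ and to use that, in a cyclotomic field, a rational prime can be forced to split into residue-degree-one primes while staying polynomially small --- which is exactly what fails over a general number field. First I would determine $m$: the field $F=E/\Frk{m}$ is given by a generating polynomial, and since $\deg F=\phi(m)\gg m/\log\log m$, the integer $m$ is $O(\deg(F)\log\log\deg(F))$, hence polynomially bounded in the input; as $|\mu(F)|\in\{m,2m\}$, computing the group of roots of unity of $F$ pins down $m$ in polynomial time. Recall from the proof of \autoref{thm:retrouvepuiss} that for a prime ideal $\Frk{p}$ of $\Ord'$ over a rational prime $p$, provided $\Frk{p}$ is invertible and $v\notin\Frk{p}+\Frk{m}$ (call such a $p$ \emph{admissible}), the exponent produced is $e(\Frk{p})=(N(\Frk{p})-1)\,p^{k}$ with $k=h(\Ord)^{O(1)}$. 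So the task reduces to exhibiting two admissible rational primes $p_1\neq p_2$, each polynomially bounded and each splitting completely in $F$, together with a prime of $\Ord'$ above each.

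To locate $p_1,p_2$: a prime $p$ unramified in $F$ has a residue-degree-one prime above it exactly when it splits completely, i.e.\ $p\equiv 1\pmod m$. By Linnik's theorem --- and the effective lower bound it yields for the number of primes $\equiv 1\pmod m$ below a parameter $B$ --- there is a bound $B$, polynomial in the input, below which the count of such primes exceeds any prescribed polynomial in the input size; I enumerate all primes up to $B$ and keep those $\equiv 1\pmod m$. Among these I discard the inadmissible ones: the divisors of $m$ (ramified), the divisors of the conductor $[\Z_F:\Ord']$ (so that primes of $\Ord'$ above $p$ are invertible and correspond to those of $\Z_F$), and the divisors of $N(J)$, where $J$ is the image of $I=(v)$ in $\Ord/\Frk{m}$ --- indeed if a prime $\Frk{p}$ of $\Ord'$ had $v\in\Frk{p}+\Frk{m}$ then $p\mid N(\Frk{p})\mid N(J)$, and although $v$ is not known, $J$, hence $N(J)$, is computable from the given ideal $I$. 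Each of $m$, $[\Z_F:\Ord']$, $N(J)$ has polynomial bit-length, so there are only polynomially many inadmissible primes, far fewer than the supply of primes $\equiv 1\pmod m$ below $B$. I take $p_1<p_2$ to be the two smallest admissible ones, apply \autoref{lem:calcord} to each (in time polynomial in the input and in $p_i\le B$, hence polynomial) to factor $p_i$ over a suborder $\Ord'\subseteq\Ord'_{p_i}$, and select a prime $\Frk{p}_i$ above $p_i$; as $p_i$ splits completely, $N(\Frk{p}_i)=p_i$.

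Finally I bound the gcd of the exponents $e_i=(p_i-1)p_i^{k_i}$. Going prime by prime: at $p_1$ the exponent of $\gcd(e_1,e_2)$ is $\min(k_1,v_{p_1}(p_2-1))$, so its $p_1$-part is at most $p_2-1$; by symmetry its $p_2$-part is at most $p_1-1$; at any other prime $q$ the exponent is $\min(v_q(p_1-1),v_q(p_2-1))$, so the remaining part divides $\gcd(p_1-1,p_2-1)\le p_1-1$. Hence $\gcd(e_1,e_2)<p_1^{2}p_2$, which is polynomial since $p_1,p_2\le B$ --- exactly the bound on $g$ that \autoref{lem:pgcd} needs. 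The step carrying the real content --- and the reason the hypothesis ``$F$ cyclotomic'' is there --- is the very first point: only cyclotomicity lets a polynomially small prime be forced to residue degree one, keeping $N(\Frk{p})-1=p-1$ small; over a general number field the smallest primes carry primes of large residue degree $f$, so $N(\Frk{p})-1=p^{f}-1$ is huge and its gcd with a second such quantity cannot be controlled. The only external ingredient, Linnik's theorem, keeps the argument unconditional, consistent with the cyclotomic case of \autoref{thm:gsalg}.
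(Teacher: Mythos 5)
Your overall strategy coincides with the paper's in its main points: take rational primes that split completely in $F=\Q(\zeta_m)$, i.e.\ $p\equiv 1\pmod m$, so that each exponent from \autoref{thm:retrouvepuiss} has the form $(p-1)p^{k}$, and then bound the gcd of two such exponents; your prime-by-prime estimate $\gcd\bigl((p_1-1)p_1^{k_1},(p_2-1)p_2^{k_2}\bigr)< p_1^{2}p_2$ is correct and is all the lemma needs (the paper gets the sharper bound $<q$, but polynomial is what matters). The genuine gap is in the step that produces the primes. Linnik's theorem in the form the paper cites (the least prime $\equiv a \pmod k$ is $O(k^5)$) gives exactly \emph{one} prime of the progression $1\bmod m$ below a polynomial bound; it does not ``yield an effective lower bound for the number of primes $\equiv 1\pmod m$ below $B$'' exceeding any prescribed polynomial. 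That counting statement is a strictly stronger quantitative, density-type result, which you neither state precisely nor cite, and it does not follow from the least-prime bound. Your argument is forced to rely on it because you fold admissibility into the lemma: to dodge all divisors of $m$, of the conductor and of $N(J)$ you need polynomially many primes of the progression below a fixed polynomial bound, not just one or two.

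The paper sidesteps both issues. First, admissibility is not part of this lemma: in \autoref{thm:field}, a prime ideal whose residue contains the image of $v$ is simply detected and its contribution factored out of the gcd (this can happen at most polynomially many times), so only two primes are required here. Second, to extract a \emph{second} prime from the least-prime form alone, the paper adds an auxiliary congruence: with $p$ the least prime $\equiv 1 \pmod m$ (so $p=O(m^5)$), it takes the smallest prime $r=O(\log m)$ not dividing $m(p-1)$ and lets $q$ be the least prime with $q\equiv 1\pmod m$ and $q\equiv 1+p\pmod r$; Linnik applied to the modulus $rm$ gives $q=O((m\log m)^5)$, and the congruence modulo $r$ forces $q\neq p$, whence $\gcd((p-1)p^{\alpha},(q-1)q^{\beta})=\gcd((p-1)p^{\alpha},q-1)<q$. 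To repair your version you must either import an explicit quantitative Linnik-type theorem (an effective lower bound on the number of primes $\equiv 1 \pmod m$ up to $m^{O(1)}$) with a proper citation, or drop the in-lemma admissibility filtering and adopt the auxiliary-modulus trick; the rest of your write-up (identifying $m$ from $\mu(F)$, applying \autoref{lem:calcord} in time polynomial in $p_i$, and the gcd computation) is sound.
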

\begin{proof}
If $F=\Q[\zeta_m]$, then we choose the first two primes which split in linear factors, which is equivalent to being congruent to one modulo $m$.
We use the latest version of Linnik's theorem, which says that the smallest prime congruent to $a$ modulo $k$ is $O(k^5)$~\cite[Theorem 2.1]{xylouris2011nullstellen}.
Then, the smallest prime congruent to one mod $m$, $p$ verifies $O(m^5)$.
Let $r$ be the smallest prime which does not divide $p-1$ or $m$, we know that $r=O(\log(m))$.
We then define $a$ to be the element of $\Z/(rm)$ congruent to one modulo $m$ and to $1+p$ modulo $r$.
Thus, we can define $q$ to be the smallest prime congruent to $a$ modulo $rm$, and $q=O((m\log m)^5)$, $q>p$.
Finally, for any $\alpha,\beta$, $\gcd((p-1)p^\alpha,(q-1)q^\beta)=\gcd((p-1)p^\alpha,q-1)<q$.
\end{proof}
Remark that we can make the gcd equal to $m$ using the technique of~\cite[Proposition 4.5]{C:LenSil14}, but the exponent then becomes 50.

\begin{lemma}
Let $H=F[\zeta_m]$ be a Galois extension of $F$ and $m\not\in \Frk{p}$ be some invertible prime ideal of $\Ord'_p$.
Then, $m|N(\Frk{p})-1$ if and only if $\Frk{p}\Z_K$ splits completely over $H$.
\end{lemma}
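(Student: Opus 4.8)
The plan is to reduce the statement to the classical reciprocity law for cyclotomic extensions, expressed through the Frobenius and the cyclotomic character.

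First I would dispose of two preliminaries. Since $p\in\Frk{p}$ and $\Frk{p}$ is prime, $\Frk{p}\cap\Z=p\Z$, so the hypothesis $m\notin\Frk{p}$ is equivalent to $\gcd(m,p)=1$. Also, $\Frk{p}$ is a nonzero invertible ideal, hence of finite index, and $\Ord'_p/\Frk{p}$ is a finite integral domain, i.e. a field $k$ with $|k|=N(\Frk{p})=:q$. Invertibility moreover forces $\Frk{p}$ to be coprime to the conductor of $\Ord'_p$ inside $\Z_F$, so $\Frk{q}:=\Frk{p}\Z_F$ is a genuine prime of $\Z_F$ with the same residue field $k$ and the same norm $q$; this is the prime whose splitting in $H=F(\zeta_m)$ must be analysed.

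Next I would use $\gcd(m,p)=1$ to see that $\Frk{q}$ is unramified in $H/F$ (the relative discriminant of $\Z_F[\zeta_m]/\Z_F$ divides $\operatorname{disc}(\Phi_m)$, a product of primes dividing $m$), so the Frobenius $\operatorname{Frob}_{\Frk{q}}\in\operatorname{Gal}(H/F)$ is defined, and $\Frk{q}$ splits completely in $H$ precisely when $\operatorname{Frob}_{\Frk{q}}=1$. Embedding $\operatorname{Gal}(H/F)\hookrightarrow(\Z/m\Z)^{\times}$ through $\sigma(\zeta_m)=\zeta_m^{a_\sigma}$, the characterising congruence $\operatorname{Frob}_{\Frk{P}}(x)\equiv x^{q}\pmod{\Frk{P}}$ for a prime $\Frk{P}\mid\Frk{q}$ of $\Z_H$, applied to $x=\zeta_m$ and combined with the fact that the $m$-th roots of unity remain pairwise distinct modulo $\Frk{P}$ (again because $\gcd(m,p)=1$), yields $\operatorname{Frob}_{\Frk{P}}(\zeta_m)=\zeta_m^{q}$. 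Hence $\operatorname{Frob}_{\Frk{q}}$ corresponds to the class of $q$ in $(\Z/m\Z)^{\times}$, and it is trivial if and only if $q\equiv1\pmod m$, i.e. $m\mid N(\Frk{p})-1$. Equivalently, and avoiding Frobenius language: $\Frk{q}$ splits completely iff the minimal polynomial of $\zeta_m$ over $F$ — a divisor of $\Phi_m$, separable modulo $\Frk{q}$ since $\Frk{q}\nmid m$ — factors into linear terms modulo $\Frk{q}$, iff $\mu_m\subseteq k^{\times}$, iff $m\mid|k^{\times}|=q-1$.

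The only point requiring care is the passage from the order $\Ord'_p$ to the maximal order $\Z_F$: everything rests on the fact that an invertible prime of an order is prime-to-conductor, so that the splitting of $\Frk{p}\Z_H$ is literally the splitting of an honest prime $\Frk{q}$ of $\Z_F$ with unchanged residue data. Granting that, the remainder is the standard cyclotomic reciprocity computation and presents no difficulty.
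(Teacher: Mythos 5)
Your proof is correct and follows essentially the same route as the paper: reduce to the maximal order via the conductor (invertible primes are prime to it), note that $\Frk{p}$ is unramified since $m\notin\Frk{p}$, and characterize complete splitting by the Frobenius (equivalently $x\mapsto x^{N(\Frk{p})}$) acting trivially, which, because the $m$-th roots of unity stay distinct modulo a prime above $\Frk{p}$, reduces to fixing $\zeta_m$, i.e.\ $m\mid N(\Frk{p})-1$. Your write-up just spells out the conductor and discriminant details that the paper leaves as ``standard facts.''
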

\begin{proof}
Using the properties of the conductor ideal, it is a standard fact that without loss of generality, we can assume $\Ord'_p=\Z_K$.
Also, $\Frk{p}$ does not ramify.
Remark that for any prime $\Frk{q}$ above $\Frk{p}$, we have $x^{N(\Frk{q})}=x$ over $\Z_H/\Frk{q}$.
Hence, $\Frk{p}$ splits completely is equivalent to $x^{N(\Frk{p})}$ fixes $\Z_H/\Frk{p}$.
But $\zeta_m \not\in\Frk{p}$ so that it is equivalent to fixing $\zeta_m$, which is $m|N(\Frk{p})-1$.
\end{proof}

\begin{theorem}\label{thm:field}
Given a polarized principal ideal $(I,r)=((v),N(v))$ and $\Frk{m}$, if $F$ is a cyclotomic field, or GRH is true, or we have access to randomness, we can find $w$ such that $v/w$ modulo $\Frk{m}$ is a root of unity in polynomial time.
We can also do this unconditionnally in time polynomial in $2^{n^{(1+o(1))/\log\log n}}$ and the size of the input.
\end{theorem}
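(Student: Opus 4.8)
The plan is to reduce to the situation of \autoref{lem:pgcd}: if we can produce two (or a few) high powers $v^{e_1},\dots,v^{e_k}$ as formal products modulo $\Frk{m}$, via \autoref{thm:retrouvepuiss}, whose exponents have a \emph{small} greatest common divisor $g$, then \autoref{lem:pgcd} recovers $w$ with $w/v$ a root of unity in time polynomial in the input size and $g$. Each application of \autoref{thm:retrouvepuiss} contributes an exponent of the shape $(N(\Frk{p})-1)p^k$ for an invertible prime $\Frk{p}$ of $\Ord'_p$ (lying over a rational prime $p$) with $v\notin\Frk{p}+\Frk{m}$, where $k=h(\Ord)^{O(1)}$ is fixed. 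So the whole problem is to choose primes $\Frk{p}$ whose residue-degree-one parts make the $N(\Frk{p})-1$ have small gcd. First I would handle the caveat that $v\notin\Frk{p}+\Frk{m}$: since $(I,r)=((v),N(v))$ with $I\supset\Ord$ we may assume $v$ invertible, and invertibility modulo $\Frk{p}+\Frk{m}$ fails only for primes dividing a fixed norm, so all but polynomially many choices are fine; I also need $\Frk{p}$ invertible in $\Ord'_p$, which \autoref{lem:calcord} arranges together with the factorization $p\Ord'_p=\prod_i\Frk{p}_i^{e_i}$ (in time polynomial in $p$ unconditionally, or polynomial time with randomness).

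Case one, $F$ cyclotomic: this is exactly \autoref{lem:cyclo}, which produces two primes $p,q$ of polynomial size, splitting completely in $F$, with $\gcd((p-1)p^\alpha,(q-1)q^\beta)<q$ polynomial; feed the two degree-one primes above them to \autoref{thm:retrouvepuiss} and then to \autoref{lem:pgcd}. Case two, GRH: here I would use the effective Chebotarev density theorem. Pick an auxiliary modulus, say a small prime $\ell$, and consider $H=F[\zeta_\ell]$; the last lemma in the excerpt says a prime $\Frk{p}$ of $\Ord'_p$ (with $\ell\notin\Frk{p}$, unramified) satisfies $\ell\mid N(\Frk{p})-1$ exactly when $\Frk{p}$ splits completely in $H$. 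Under GRH the least prime in a given Frobenius class of $H/\Q$ is polynomially bounded in $\operatorname{disc}(H)$ and $[H:\Q]$, so I can find, in polynomial time, a degree-one prime $\Frk{p}_1$ of $\Ord'$ that is \emph{not} split in $F[\zeta_\ell]$ — forcing $\ell\nmid N(\Frk{p}_1)-1$ — and a second degree-one prime $\Frk{p}_2$ chosen the same way for a different auxiliary prime $\ell'$, or more simply chosen so that the only common factors of $N(\Frk{p}_1)-1$ and $N(\Frk{p}_2)-1$ come from a prescribed tiny set; repeating over $O(1)$ auxiliary primes one knocks the gcd down to something polynomially bounded, and \autoref{lem:pgcd} finishes. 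Case three, randomness: sample random rational primes $p$ of polynomial size, use \autoref{lem:calcord} (randomized) to split $p\Ord'_p$, keep the degree-one factors; the density of primes landing in any fixed Frobenius class is bounded below, so after polynomially many samples one collects a few degree-one primes whose $N(\Frk{p})-1$ have polynomial gcd with overwhelming probability. Case four, unconditional: drop the GRH-based bounds and just search primes up to the subexponential bound $2^{n^{(1+o(1))/\log\log n}}$; the classical unconditional Chebotarev / Linnik-type estimates in that range guarantee primes in the needed Frobenius classes appear, and \autoref{lem:calcord} runs in time polynomial in $p$, so the whole procedure runs in time polynomial in $2^{n^{(1+o(1))/\log\log n}}$ and the input size.

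The main obstacle I expect is the GRH case: making the effective Chebotarev bound genuinely give two primes whose $N(\Frk{p})-1$ share only a controlled (polynomially bounded) common part, rather than merely producing one prime of each splitting type. Concretely, one must argue that imposing Frobenius conditions at several small auxiliary primes $\ell$ simultaneously — via a single compositum $H=F[\zeta_{\ell_1\cdots\ell_t}]$ — still leaves a non-empty Frobenius class whose least prime is polynomially bounded under GRH, and that the resulting two exponents are coprime away from $\{\ell_1,\dots,\ell_t,p,q\}$; the bookkeeping to ensure the surviving gcd is polynomial (and that the $p$-part and $q$-part do not conspire) is where the real work lies. Everything else is assembling \autoref{lem:calcord}, \autoref{thm:retrouvepuiss}, and \autoref{lem:pgcd} in sequence and checking the claimed time bounds, which is routine given the tools already established.
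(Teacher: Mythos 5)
Your skeleton (reduce via \autoref{lem:calcord} and \autoref{thm:retrouvepuiss}, then combine exponents with \autoref{lem:pgcd} and bound their gcd) and your cyclotomic case coincide with the paper. The first problem is your GRH case: the auxiliary-cyclotomic scheme you propose is both unnecessary and, as you yourself note, not actually carried out — and the "main obstacle" you flag does not exist. For two \emph{distinct} degree-one primes $p<q$, both below a polynomial bound, the gcd of the exponents $(p-1)p^k$ and $(q-1)q^k$ is automatically $<q$: the $q$-power part is coprime to $(p-1)p^k$ because $q\neq p$ and $q>p-1$, and whatever remains divides $q-1$. So no Frobenius conditions at auxiliary primes $\ell$ are needed at all; GRH enters only through Serre's effective Chebotarev (Théorème 4 of the cited paper) to guarantee polynomially many polynomially small primes having a degree-one prime ideal above them, of which at most polynomially many can divide the image of $v$. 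Your claim that "repeating over $O(1)$ auxiliary primes knocks the gcd down to something polynomially bounded" is unjustified as stated (you cannot prescribe in advance the common prime factors of $p-1$ and $q-1$), so as written the GRH case is incomplete, even though the fix is short.

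The randomized and unconditional cases are genuinely wrong as proposed. Without GRH you cannot assert that "the density of primes landing in any fixed Frobenius class is bounded below" among primes of \emph{polynomial} size — unconditional effective Chebotarev only kicks in at bounds roughly exponential in a power of $\log d_H$, which can be exponential in $n$; if your density claim needed GRH, the randomized case would not be a separate result. The paper's randomized algorithm is different: it first uses the $2n^2+n+1$ smallest primes with ideals of \emph{arbitrary} inertia degree, and a pigeonhole/counting argument on elements of order dividing $d$ in $(\Z/(p^k))^*$, to force every prime power dividing the current gcd $m$ to be $O(n^2\log n)$; then, for each surviving prime power $q\mid m$ with no $q$-th root of unity in $F$, it samples random integers below $B^{1/f}$ with $B$ subexponential in $\log d_{F[\zeta_q]}$ (so polynomially many bits) and uses only Serre's \emph{unconditional upper bound} (Théorème 3) on the proportion of completely split primes to find, with noticeable probability, a prime ideal that does not split completely in $F[\zeta_q]$, which cuts $m$ by a factor at least two; after polynomially many rounds $m=O(n\log\log n)^{O(1)}$. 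Similarly, there is no "classical unconditional Chebotarev/Linnik-type estimate" available at the range $2^{n^{(1+o(1))/\log\log n}}$; the paper's unconditional argument uses no density theorem whatsoever: it takes \emph{all} primes up to $P$, pigeonholes on the inertia degree $d$, and shows that if a large $m'$ divided the gcd then too many distinct residues modulo $m'$ would have order dividing $d$ in $(\Z/(m'))^*$, which Wigert's divisor bound and the cyclic-decomposition count forbid once $P\geq (n^2\log n)^{O(\log n)+d^{(1+o(1))/\log\log d}}$, giving the stated $2^{n^{(1+o(1))/\log\log n}}$ running time. These two cases need to be replaced, not just tightened.
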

\begin{proof}
The algorithm consists in applying~\autoref{lem:calcord} to generate the input of~\autoref{thm:retrouvepuiss}, and we combine the outputs using~\autoref{lem:pgcd}.
The crux of the matter is to bound the greatest common divisor of the exponents used.
The case of cyclotomic field is easily treated with~\autoref{lem:cyclo}.
Indeed, for each prime $p$, then either the image of $v$ is divisible by $p$ so the factor can then be removed and this happens at most a polynomial number of times, or we can use some $\Frk{p}$ above $p$.

Under GRH, we show that using all prime ideals $\Frk{p}$ of inertia degree one above all primes $p$ smaller than a polynomial will work.
Using the previous lemma and~\cite[Théorème 4]{serre1981quelques}, we have that there are a polynomial number of primes smaller than some polynomial who have a prime ideal above it of inertia degree one.
Hence, we can find two amongst them which does not divide $v$, and $m$ will be $\gcd((p-1)p^f,(q-1)q^e)<q$ if $q>p$.
Therefore, the algorithm runs in polynomial time.

For the other algorithms, we first start by trying for all the $2n^2+n+1$ smallest primes any ideal above it such that the image of $v$ is not in the ideal.
Each time a prime is detected as dividing the image of $v$, it can be factored out.
Let $m$ be the current greatest common divisor of the exponent used.
Remark that there are $n$ inertia degree possible, so that by the pigeonhole principle, there exist one degree $d$ with $2n+2$ corresponding primes $p_i$.
Let $p^k\mid m$ with $k$ positive and $p$ prime.
By removing $p$ from the list of $p_i$, we have that for $2n+1$ distinct primes $p_i$, $p^k\mid p_i^d-1$.
This implies that either $p^k=O(n^2\log n)$, or there are $2n+1$ elements of order dividing $d$ in $(\Z/(p^k))^*$.
If $p$ is odd, the group is cyclic so that $2n+1\leq d$ which is absurd.
Else $p=2$, the group has at most $2d$ elements of order dividing $d$, and $2n+1\leq 2d$ which is also absurd.
Hence at this point, $p^k=O(n^2\log n)$ for all $p^k\mid m$.

For the unconditional algorithm, we continue to do so for the first primes.
Let $P$ be the largest prime used, which we will fix later to some function in $2^{n^{\Theta(1/\log \log n)}}$.
Using the same argument, we have that for the new $m$ and some $d$, that either $m\leq O(Pn^2\log n)$ or there exist a $m'\mid m$ with $m'=O(Pn^2\log n)$ and odd such that all primes considered except possibly $O(n^2\log n)$ of them are of order dividing $d$ in $(\Z/(m'))^*$, and these primes are distinct elements modulo $m'$.
Thus, there are $\Omega(P/n/\log P)$ elements of order dividing $d$ in $(\Z/(m'))^*$, which is a proportion of $\Omega(1/n^3/\log^3(n)n^{-\Theta(1/\log \log n)})=\Omega(1/n^4)$.
Decomposing $(\Z/(m'))^*$ as a product of cyclic groups, we first consider the groups where all elements are of order dividing $d$, that is $(p-1)p^k\mid d$, for $(p-1)p^k \mid m'$ with $p$ prime.
Let $(q-1)q^r$ be another cyclic group of equal order.
Then, without loss of generality $q\geq p$ and if $k$ is positive, $q\mid (p-1)p^k$ which is not possible.
Hence $q$ is unique with respect to $(p,k)$ and using Wiegert's theorem, we deduce that there are at most $d^{(1+o(1))/\log \log d}$ cyclic groups where all elements are of order at most $d$.
But because of Lagrange's theorem, there are at most $O(\log(n))$ cyclic groups where not all elements are of order dividing $d$.
Hence, for \[ P\geq (n^2\log n)^{O(\log(n))+d^{(1+o(1))/\log \log d}}\]
this is absurd.
The total running time is therefore in $2^{n^{(1+o(1))/\log\log n}}$.

Using randomness, we sample a polynomial number of integers smaller than $B^{1/f}$ with $B=\exp(O(\log(d)\log(\log d)\log(\log \log d)))$ where $d=d_{F[\zeta_q]}$ for some prime power $q$ dividing $m$ such that $F$ has no $q$-th root of unity.
If the number divides $m$, which happens with negligible probability, we restart.
Else, if the number is prime, we use all prime ideals above this prime.
The probability that some ideal divides $v$ is negligible.
We use all $f$ from one to $n$.

\cite[Théorème 3]{serre1981quelques} shows that the likelihood of a complete split is then $\epsilon\leq c/[F[\zeta_q]:F]$ for a uniform prime of norm below $B$ and some universal constant $c$.
Assume $\epsilon<1/2$. Then, there are $k\geq 1$ inertia degrees $f$ such that the likelihood for a prime below $B^{1/f}$ to have prime above it with an inertia degree $f$ is at least $(1-\epsilon)/n$.
Further, the likelihood that a uniform prime of norm below $B$ with one of these inertia degrees not to completely split is at least $(1-\epsilon)k/n$.
Therefore, there exist an inertia degree $f$ such that a uniform prime of norm below $B$ of inertia degree $f$ will not completely split with probability at least $(1-\epsilon)/n$, and a uniform prime below $B^{1/f}$ has a probability at least $(1-\epsilon)/n$ to have a prime of inertia degree $f$.
We deduce that the above procedure takes polynomial time to find a prime ideal which does not split completely with high probability.
In case we find a prime not have a complete split, because $p\nmid m$, we have $\gcd(m,(p^f-1)p^e)\leq m/2$ so that with high probability, after a polynomial number of tests, we have that $p^k|m$ only if $F$ has a $p^{k/c}$ root of unity.
We deduce then that $m=O(n\log \log n)^c$, so that the running time is polynomial.
\end{proof}

\subsection{Recovery of $v$} \label{sub:final}
\begin{theorem}
Given $E$, we can compute $\Frk{m_i}$ such that $E$ is the sum of a nilpotent vector space and $\prod_i E/\Frk{m_i}$ where $\Frk{m_i}$ are maximal ideals of $E$ in polynomial time.
We can also compute the generators of group of roots of unity of $E$ and $\Ord$ in polynomial time and $e_i$ such that $\Ord=\prod_i e_i\Ord$ and $e_i\Ord$ has only trivial idempotents.
\if 0
, as well as an efficient presentation of both groups (see~\cite[Definition 7.1]{lenstra2015roots})
\fi
\end{theorem}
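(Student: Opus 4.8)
The plan is to produce the three outputs in turn: the idempotent decomposition of $\Ord$ together with the maximal ideals $\Frk{m}_i$ of $E$, the roots of unity of $E$, and those of $\Ord$; only the last step is delicate, and everything rests on linear algebra over $\Q$ together with polynomial factorisation.

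\emph{Decomposition.} First I would form the Gram matrix $T_{i,j}=\tr(b_ib_j)$ of the trace form in the given basis of $E$ and take $\Frk{N}$ to be its radical. It is classical that in characteristic zero the radical of the trace form of a commutative algebra is precisely its nilradical, so $\Frk{N}$ is the nilpotent subspace of $E$ produced by \autoref{thm:structalg} and $\Frk{N}\cap\Ord$ is the nilradical of $\Ord$, both obtained by one linear-algebra computation. The quotient $\bar E=E/\Frk{N}$ is a product of number fields; to exhibit it I would take a basis vector $\bar x$, compute its minimal polynomial over $\Q$, factor it into distinct irreducibles $\mu_1,\dots$ with the polynomial-time algorithm of \cite{lenstra1982factoring}, and read off from each $\mu_j$, by the Chinese Remainder Theorem inside $\Q[\bar x]$, a primitive idempotent and a field $F_j\cong\Q[T]/(\mu_j)$; if $\bar x$ fails to generate $\bar E$ one refines the partial decomposition with the next basis vector (at most $n$ refinements). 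Each idempotent of $\bar E$ then lifts uniquely to an idempotent $e_j$ of $\Ord$ by the Newton iteration $e\mapsto 3e^2-2e^3$ performed modulo the nilpotent ideal $\Frk{N}\cap\Ord$: the precision doubles at each step, so $O(\log n)$ steps suffice, and the lifts are made orthogonal in the usual way. This yields $\Ord=\prod_j e_j\Ord$ with each $e_j\Ord$ local, hence with only trivial idempotents, while $\Frk{m}_j:=e_j\Frk{N}\oplus\bigoplus_{k\neq j}e_kE$ is a maximal ideal of $E$ with $E/\Frk{m}_j\cong F_j$; these $\Frk{m}_j$ and $\Frk{N}$ are the required outputs, computed in polynomial time.

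\emph{Roots of unity of $E$.} The group $1+\Frk{N}$ is torsion-free since it is unipotent over $\Q$, and by the uniqueness in \autoref{thm:structalg} every torsion unit of $E$ lies in $E'=\prod_jF_j$; hence $\mu(E)=\prod_j\mu(F_j)$. In each $F_j$, of degree $d_j$, the group $\mu(F_j)$ is cyclic of order $w_j$ with $\phi(w_j)\le d_j$, so $w_j=O(d_j\log\log d_j)$ and only polynomially many values $w$ are candidates; for each I would test whether $F_j$ contains a primitive $w$-th root of unity, e.g.\ by taking $\gcd(\Phi_w,\mu_j)$ over $\Q$ or by factoring $\Phi_w$ over $F_j$ as in \cite[Section~3.6.2]{cohen2013course}, retain the largest $w=w_j$ with its root $\zeta_j$, and output the $\zeta_j$ (as elements of $E$ via $e_j$) as generators, so that $\mu(E)\cong\prod_j\Z/w_j\Z$.

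\emph{Roots of unity of $\Ord$.} Now $\mu(\Ord)=\mu(E)\cap\Ord$ is a subgroup of $\prod_j\langle\zeta_j\rangle$, and since $\Q(\zeta_{w_j})\subseteq F_j$ forces $\phi(w_j)\le n$, every prime dividing $|\mu(E)|$ is at most $n+1$; it therefore suffices to compute each Sylow subgroup $\mu(\Ord)_p$ for $p\le n+1$, each of which is generated by at most $n$ elements and has exponent $p^{O(\log n)}$, and then to concatenate generators. One may first replace $\Ord$ by the computable suborder $\Ord\cap\Q\langle\mu(E)\rangle$ of the product of cyclotomic fields $\prod_j\Q(\zeta_{w_j})$, which does not change $\mu(\Ord)$. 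The natural strategy is then to climb the powers of $p$: the elements of $\mu(E)$ of order dividing $p$ form an $\F_p$-space of dimension at most $n$, its intersection with the lattice $\Ord$ (membership in which is decidable in polynomial time) is an $\F_p$-subspace, and one moves from a generating set of $\mu_{p^k}(\Ord)$ to level $k+1$ by a bounded amount of $\F_p$-linear algebra. The main obstacle, as I see it, is to do this honestly in polynomial time, that is without ever enumerating the possibly exponentially many roots of unity of a given prime-power order; this is exactly the problem solved by Lenstra's algorithm for roots of unity in orders \cite{lenstra2015roots}, to which this step reduces and which also supplies the efficient presentations of $\mu(E)$ and $\mu(\Ord)$. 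Assembling the generators obtained for all $p\le n+1$ gives generators of $\mu(\Ord)$, finishing the proof.
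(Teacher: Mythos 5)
Your decomposition of $E$ (trace-form radical for the nilpotents, factorisation of minimal polynomials for the maximal ideals $\Frk{m}_j$) and your treatment of $\mu(E)$ are sound, and for the genuinely hard step — generators of $\mu(\Ord)$ without enumerating roots of unity — you correctly identify that everything reduces to Lenstra's algorithm \cite{lenstra2015roots}, which is exactly what the paper does as well.

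There is, however, a genuine error in the idempotent step. You claim that each idempotent of $\bar E=E/\Frk{N}$ lifts by Newton iteration modulo $\Frk{N}\cap\Ord$ to an idempotent $e_j$ of $\Ord$. Idempotents of $E/\Frk{N}$ do lift uniquely along the nil ideal, but only to idempotents of $E$; these in general have denominators and do \emph{not} lie in $\Ord$, and the Newton iteration cannot help because it only converges to that unique lift. The paper's own running example already refutes the claim: for $\Ord=\Z[X]/(X^n-1)$ the order has no idempotents besides $0$ and $1$ (proved in the paper), whereas $E=\Q[X]/(X^n-1)$ has one primitive idempotent per cyclotomic factor (for $n=2$, $(1\pm X)/2$); your procedure would output these as elements of $\Ord$, while the correct answer is the single $e_1=1$. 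In general the primitive idempotents of $\Ord$ correspond to the connected components of $\operatorname{Spec}\Ord$ and are sums $\sum_{i\in S}f_i$ of primitive idempotents of $E$ over the blocks of a partition that must itself be computed; deciding which such sums lie in $\Ord$, and finding the finest admissible partition, is a separate and nontrivial algorithmic task (note in particular that the index sets for the $\Frk{m}_i$ and for the $e_i$ are different in general, which your proof conflates). This, together with $\mu(\Ord)$, is precisely the ``quite involved'' part that the paper delegates to Lenstra--Silverberg \cite{lenstra2015algorithms,lenstra2015roots}; as written, your argument does not establish the $\Ord=\prod_i e_i\Ord$ claim.
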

\begin{proof}
See~\cite[Theorem 1.2]{lenstra2015algorithms} for the first algorithm, which starts by expressing the product of number fields as $\Q[x]/(f)$ and then compute the factorization of $f$ by LLL.
The second part is quite involved and is proven in~\cite{lenstra2015roots}.
\if 0
One can then compute the roots of unity over each number fields which give the roots of unity of $E$, and then deduce the generators over $\Ord$, see~\cite[Theorem 1.2]{lenstra2015roots}.
\fi
\end{proof}

The following method for recovering the root of unity is heavily inspired by~\cite{C:LenSil14}.
\begin{lemma}\label{lem:cmunite}
Let $w\in \mu(\Ord)$ where $\Ord$ is a CM order.
Then $w\barre w=1$.
\end{lemma}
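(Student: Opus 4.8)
The plan is to use the defining positivity property of the CM order together with the fact that a root of unity has finite order. Let $w \in \mu(\Ord)$, so $w^N = 1$ for some positive integer $N$. Since conjugation $x \mapsto \barre{x}$ is a ring automorphism, we have $\barre{w}^N = \barre{w^N} = \barre{1} = 1$, so $\barre{w} \in \mu(\Ord)$ as well, and hence $w\barre{w} \in \mu(\Ord)$. Thus $(w\barre{w})^k$ ranges over a finite set as $k$ varies, and in particular $\tr\!\left((w\barre{w})^k\right)$ is bounded independently of $k$.

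Next I would exploit that $w\barre{w}$ is fixed by conjugation, i.e. $w\barre{w} \in \Ord^+$, so that $(w\barre{w})^k$ is also in $\Ord^+$ for every $k$. Writing $u = w\barre{w}$, we have $u = \barre{u}$, hence $u^k = u^k \barre{u}^k = (u\barre{u})^{k} \cdot$ — more directly, $\tr(u^k \barre{u}^k) = \tr(u^{2k})$, and applying the positive-definiteness of $x \mapsto \tr(x\barre{x})$ to $x = u^k$ gives $\tr(u^{2k}) = \tr(u^k \barre{u^k}) > 0$ unless $u^k = 0$, which cannot happen since $u$ is a unit. So $\tr(u^{2k}) > 0$ for all $k$, and these values lie in a bounded set.

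The conclusion then follows by the same averaging idea used in \autoref{thm:racineunite}: consider $\frac{1}{M}\sum_{k=1}^{M} u^k$. Since $u$ is a root of unity of some order $d \mid N$, this average equals $0$ if $u \neq 1$ and equals $1$ if $u = 1$; hence $\frac{1}{M}\sum_{k=1}^M \tr(u^k \barre{u^k})$ would have to tend to $0$ in the case $u \neq 1$, contradicting the uniform strict positivity $\tr(u^{2k}) = \tr(u^k\barre{u^k}) \geq \delta > 0$ for some fixed $\delta$ (the minimum over the finite orbit). Therefore $u = w\barre{w} = 1$.

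I expect the only real subtlety is making the positivity argument quantitative: one must observe that because $\{u^k\}$ is finite and $u^k$ is never $0$, the quantity $\tr(u^k\barre{u^k})$ takes finitely many strictly positive values and so is bounded below by a positive constant, which is what drives the contradiction with the vanishing Cesàro average. Alternatively, and perhaps more cleanly, one can argue directly: apply positive-definiteness to $x = u - 1$ to get $\tr\!\big((u-1)\overline{(u-1)}\big) \geq 0$, expand using $\barre{u} = u$ and $\barre{1} = 1$ to obtain $2\,\tr(u^2) - 2\,\tr(u) \geq$ (terms), but this by itself is not obviously conclusive, so the averaging/finiteness route is the safer one to write out.
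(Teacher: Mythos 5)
Your reduction to $u=w\barre w$ with $\barre u=u$ and $u$ a root of unity is fine, but the averaging step has two genuine gaps. First, the claim that $\frac1M\sum_{k=1}^M u^k$ tends to $0$ when $u\neq 1$ requires $u-1$ to be invertible in $E$, which you have not shown and which is false in general: for a CM order, $E=\Q\tens\Ord$ is only a product of number fields (this is exactly what the paper's proof establishes first, via \autoref{lem:cmnilpot} and \autoref{thm:structalg}), and a priori $u$ could equal $1$ on some factors and be a nontrivial root of unity on others. In that case the Ces\`aro limit is the nonzero idempotent $e$ cutting out the factors where $u=1$, and $\tr(e)>0$, so no contradiction with your lower bound $\delta$ appears; you never reduce to a single field or argue componentwise. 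Second, even over a single field the quantities do not match: $\tr(u^k\barre{u^k})=\tr(u^{2k})$, so $\frac1M\sum_k\tr(u^k\barre{u^k})=\tr\bigl(\frac1M\sum_k u^{2k}\bigr)$ is governed by the powers of $u^2$, not of $u$. Your contradiction therefore only excludes $u^2\neq1$, and the case $u=-1$ survives untouched, since then $\tr(u^{2k})=\tr(1)$ for every $k$. Ruling out $u=-1$ cannot be done from $\barre u=u$ and finite order alone; you must use that $u=w\barre w$ is \emph{positive}, e.g.\ via $\tr(u^{2k+1})=\tr\bigl((wu^k)\,\barre{wu^k}\bigr)>0$, or by observing that multiplication by $u$ is self-adjoint and positive definite for the form $(x,y)\mapsto\tr(x\barre y)$, hence has positive real eigenvalues, which together with $u^d=1$ forces $u=1$.

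For comparison, the paper avoids averaging entirely: it uses \autoref{lem:cmnilpot} to see that $E$ is a product of fields, passes to $E\otimes\C\simeq\C^n$, argues from positive-definiteness that the involution acts on each factor as complex conjugation, and concludes because every root of unity $\zeta\in\C$ satisfies $\zeta\barre\zeta=1$. If you prefer an elementary, embedding-free proof, the self-adjoint positive operator argument sketched above does work and repairs both gaps simultaneously; as written, though, your proof is incomplete.
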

\begin{proof}
With~\autoref{lem:cmnilpot} and~\autoref{thm:structalg}, $E$ is a product of fields.
Hence, $E\otimes \C$ is isomorphic to $\C^n$, and the conjugation can be projected to a conjugation over $\C$ which has the same properties.
Therefore, it is the standard conjugation over $\C$, and all roots of unity $\zeta$ over $\C$ verifies $\zeta\barre \zeta=1$.
Since the projection of a root of unity $w\in \Ord$ is a root of unity, we have $w\barre w=1$.
\end{proof}

\begin{lemma}
For any $a\in \Ord$ and $\Ord$ a CM order, $\tr(a\barre a)\geq r$ where $r$ is the dimension of $a\barre a \Ord$.
\end{lemma}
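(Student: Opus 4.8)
The goal is to show $\tr(a\barre a)\geq r$ where $r=\dim_{\Q}(a\barre a\,\Ord)$. The plan is to work in the completed algebra. By \autoref{lem:cmnilpot} and \autoref{thm:structalg}, $E$ is a product of number fields, so $E\otimes\C\cong\C^n$ and the conjugation becomes the coordinatewise complex conjugation (exactly as in the proof of \autoref{lem:cmunite}). Writing $a=(a_1,\dots,a_n)$ in these $\C$-coordinates, we get $a\barre a=(|a_1|^2,\dots,|a_n|^2)$, so each coordinate of $a\barre a$ is a non-negative real and $\tr(a\barre a)=\sum_j |a_j|^2$ (the trace being the sum of coordinates).

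First I would identify $r$ with the number of nonzero coordinates of $a\barre a$: the ideal $a\barre a\,\Ord\subset E$ generated by $a\barre a$, after tensoring with $\C$, is the product of those factors $\C$ where $|a_j|^2\neq 0$, hence has $\C$-dimension equal to $\#\{j : a_j\neq 0\}$; since dimension is unchanged by the base change $\Q\to\C$, this is exactly $r$. Then I would argue that each nonzero $|a_j|^2$ contributes at least $1$ to $\tr(a\barre a)$. This is where the integrality hypothesis $a\in\Ord$ enters: for a genuine CM order the nonzero values $\tr(a\barre a)$ are positive integers (recall the normalization imposed after the definition of split CM order, that $(\tr(N(x+y))-\tr(N(x-y)))/4\in\Z$ for $x,y\in\Ord$), and more is needed — namely that restricting $a$ to the support of $a\barre a$ still gives something whose norm-trace is a positive integer. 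Concretely, let $\pi$ be the idempotent of $E\otimes\Q$ cutting out the $r$ factors where $a_j\neq 0$; then $\pi$ lies in $E$ (it is a sum of primitive idempotents of the product-of-fields $E$), $a\barre a$ is invertible in the order $\pi E\cap(\text{something})$, and on each of the $r$ simple factors the archimedean value $|a_j|^2$ is an algebraic number that is a unit times an algebraic integer of positive norm, forcing $\sum_{j\in\text{supp}}|a_j|^2\geq r$ by AM–GM once one knows the product $\prod_{j\in\text{supp}}|a_j|^2$ is a positive integer.

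The cleanest route, which I would actually carry out, is: decompose $\Ord=\prod_i e_i\Ord$ into factors with only trivial idempotents (available by the theorem just proved), reducing to the case where $E$ is a single CM field $F$ with $r\in\{0,\dim F\}$; if $a\barre a=0$ then $a=0$ by \autoref{lem:cmnilpot} and $r=0$, done; otherwise $a$ is invertible in $F$, $a\barre a$ is a totally positive element, $\tr(a\barre a)=\sum_{\sigma}\sigma(a\barre a)>0$ is a positive rational that is in fact a non-negative integer combination as above, hence $\geq 1$; to upgrade $1$ to $r=[F:\Q]$, apply the AM–GM inequality $\tfrac1r\tr(a\barre a)=\tfrac1r\sum_\sigma \sigma(a\barre a)\geq\big(\prod_\sigma\sigma(a\barre a)\big)^{1/r}=N_{F/\Q}(a\barre a)^{1/r}$, and $N_{F/\Q}(a\barre a)=N(a)\barre{N(a)}=|N(a)|^2$ is the square of a nonzero integer, hence $\geq 1$. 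I expect the main obstacle to be the bookkeeping in the split/reducible CM case — making sure that after projecting by the idempotent $\pi$ the relevant "norm" of the restricted element is still a nonzero rational integer so that the AM–GM bound bites; handling the possibility that $a$ is not invertible in the full $E$ but only on its support is the delicate point, and the decomposition $\Ord=\prod_i e_i\Ord$ is exactly what removes it.
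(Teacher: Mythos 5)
Your AM--GM endgame and the positivity of the archimedean values $|a_j|^2$ are exactly the right ingredients, and in the case where $a$ is invertible in $E$ your argument (trace $\geq r\,N_{E/\Q}(a\barre a)^{1/r}$ with $N_{E/\Q}(a\barre a)=N_{E/\Q}(a)^2\geq 1$) is correct. But the route you say you would actually carry out has a genuine gap precisely at the point you yourself flag as delicate: the zero-divisor case. Decomposing $\Ord=\prod_i e_i\Ord$ into factors with only trivial idempotents does \emph{not} reduce you to a single CM field with $r\in\{0,\dim F\}$, because an order can have only the idempotents $0,1$ while its algebra $E$ still splits as a product of several fields. The paper's own example $\B=\Z[X]/(X^n-1)$ is a counterexample: it is proved there that its only idempotents are $0$ and $1$, yet $E=\prod_{d\mid n}\Q[\zeta_d]$, and for instance $a=X-1$ (with $n=3$, say) gives $a\barre a=2-X-X^2$, so $0<r<n$. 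Thus the idempotent decomposition of $\Ord$ does not remove the possibility that $a$ is a zero divisor, and your claim that the product of the nonzero $|a_j|^2$ is a positive integer is left unproved; the sketch via an idempotent $\pi$ of $E$ leaves the relevant order (your ``$\pi E\cap(\text{something})$'') unspecified, and $\pi$ need not lie in $\Ord$, so integrality of that product is exactly the missing step, not a bookkeeping detail.

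The paper closes this gap with one short observation that your write-up never quite reaches: consider multiplication by $a\barre a$ restricted to the rank-$r$ lattice $a\barre a\,\Ord$ itself. Since $a\barre a\in\Ord$, this map sends the lattice into itself, so its matrix in a $\Z$-basis is integral and its determinant is an integer; it is nonzero because $a\barre a$ is not nilpotent (no nilpotents in a CM order by \autoref{lem:cmnilpot}), i.e.\ the nonzero coordinates $|a_j|^2$ are exactly the eigenvalues on this $r$-dimensional space. Then AM--GM over these positive eigenvalues gives $\tr(a\barre a)\geq r\,(\det)^{1/r}\geq r$, since the zero coordinates contribute nothing to the trace. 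Your $\pi$-projection idea can be repaired along the same lines (e.g.\ by showing $\pi\Ord$, or equivalently $a\barre a\,\Ord$, is a full-rank lattice in $\pi E$ stable under multiplication by $a\barre a$, so the determinant is a nonzero integer), but as written the proposal asserts rather than proves the integrality that makes the AM--GM bound bite, and the reduction you rely on to supply it is false.
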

\begin{proof}
Without loss of generality, $a\neq 0$.
Consider the application $x\mapsto xa\barre a$ over $a\barre a \Ord$.
Its determinant is a non-zero integer since there is $a\barre a$ is not nilpotent, so that the inequality of arithmetic and geometric means over the eigenvalues gives the result.
\end{proof}

\begin{lemma}
If $A$ and $B$ are CM orders, then $A \tens B$ is a CM order and if $\sum_i a_i \tens b_i \in \mu(A \tens B)$, we have $A=\prod_i a_i\barre{a_i}A\tens b_i\barre{b_i}B$.
\end{lemma}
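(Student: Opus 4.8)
First I would establish the tensor product is a CM order. Given the involutions $x\mapsto\barre x$ on $A$ and $y\mapsto\barre y$ on $B$, define on $A\tens B$ the involution $a\tens b\mapsto\barre a\tens\barre b$, extended additively. This is clearly a ring involution; the substantive point is that $\tr_{A\tens B}((a\tens b)\overline{(a\tens b)}) = \tr_A(a\barre a)\tr_B(b\barre b)$ is positive-definite. Here I would use that the trace form on a tensor product of modules is the product of the trace forms (the multiplication-by-$a\tens b$ endomorphism of $A\tens B\cong A\tens B$ factors as a tensor product of the multiplication-by-$a$ and multiplication-by-$b$ endomorphisms), and that the tensor (Kronecker) product of two positive-definite forms is positive-definite. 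So $A\tens B$ is a CM order with $\Ord^+ = (A\tens B)^+$.

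Next, the statement about roots of unity. Let $u=\sum_i a_i\tens b_i\in\mu(A\tens B)$. By \autoref{lem:cmunite} applied to the CM order $A\tens B$, we have $u\barre u=1$. Now I would unwind this identity. Write $u\barre u = \sum_{i,j}(a_i\barre{a_j})\tens(b_i\barre{b_j})$; this equals $1\tens 1$. The goal is to show $A = \prod_i a_i\barre{a_i}A\tens b_i\barre{b_i}B$ — i.e. that the ideals $a_i\barre{a_i}A\tens b_i\barre{b_i}B$ of $A\tens B$ are pairwise comaximal-complementary and sum to everything, equivalently that $\sum_i a_i\barre{a_i}\tens b_i\barre{b_i}$ is $1$ (giving a decomposition into orthogonal idempotents via \autoref{thm:structalg}). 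The cleanest route: since $A\tens B$ is a CM order, by \autoref{lem:cmnilpot} and \autoref{thm:structalg} it embeds in a product of fields, and after tensoring with $\C$ it becomes $\C^N$ with the standard conjugation (the argument in \autoref{lem:cmunite}). Under any such projection to $\C$, $u$ maps to a root of unity $\zeta$ with $|\zeta|=1$, and $a_i\tens b_i$ maps to some complex number $\alpha_i$ with $\sum_i\alpha_i=\zeta$. Then the preceding lemma ("$\tr(a\barre a)\geq r$") combined with the fact that $u\barre u=1$ forces, on each coordinate, $\sum_i|\alpha_i|^2 \le \sum_{i,j}\alpha_i\barre{\alpha_j}\cdot(\text{sign pattern}) $... — here one has to be careful. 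Better: use the Cauchy–Schwarz-type bound $1 = |\sum_i\alpha_i|^2 \le (\sum_i 1)(\sum_i|\alpha_i|^2)$ is the wrong direction; instead note $\sum_i a_i\barre{a_i}\tens b_i\barre{b_i}$ projects to $\sum_i|\alpha_i|^2\ge 0$ on each $\C$-coordinate, while $\tr$ of this element equals $\tr(u\barre u)=\tr(1)=N$ (the dimension), and by the lemma each nonzero summand contributes at least its rank, so the supports are forced to be disjoint and cover everything. This yields $\sum_i a_i\barre{a_i}\tens b_i\barre{b_i}=1$ with the summands orthogonal idempotents, hence the product decomposition of $A$.

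\textbf{Main obstacle.} The delicate step is going from $u\barre u = 1$ to the disjointness/completeness of the family $\{a_i\barre{a_i}\tens b_i\barre{b_i}\}$. The trace-rank lemma gives $\tr(c\barre c)\ge\operatorname{rank}(c\barre c\,\Ord)$ for each $c=a_i\tens b_i$, but one needs the cross terms $a_i\barre{a_j}\tens b_i\barre{b_j}$ for $i\ne j$ to cancel in a controlled way, which is exactly where projecting to $\C^N$ and using that on each coordinate $\sum_i\alpha_i$ is a unit-modulus complex number makes the equality in the AM–GM/rank inequality tight. I would lean on the same $E\tens\C\cong\C^N$ reduction as in \autoref{lem:cmunite} to make this rigorous, then transport the resulting idempotent decomposition back to $A\tens B$ and project to $A$.
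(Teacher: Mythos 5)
Your skeleton is the same as the paper's: positive-definiteness of $x\mapsto\tr(x\barre x)$ on $A\tens B$ via the Kronecker product of the two Gram matrices, then \autoref{lem:cmunite} to get $u\barre u=1$ for $u=\sum_i a_i\tens b_i$, then taking traces and invoking the lemma $\tr(a\barre a)\geq \dim(a\barre a\Ord)$ to convert the identity $\sum_i a_i\barre{a_i}\tens b_i\barre{b_i}=1$ into the product decomposition. The first and last of these steps are fine and match the paper. The genuine gap is exactly at the step you yourself flag as delicate, and your attempted repair does not close it: you claim $\tr\bigl(\sum_i a_i\barre{a_i}\tens b_i\barre{b_i}\bigr)=\tr(u\barre u)=\tr(1)$, but this is equivalent to the cross terms $a_i\barre{a_j}\tens b_i\barre{b_j}$ (for $i\neq j$) having vanishing total trace, which is precisely the cancellation you are trying to establish; the coordinatewise picture in $\C^N$ only gives you that each coordinate of $\sum_i a_i\barre{a_i}\tens b_i\barre{b_i}$ is a sum $\sum_i|\alpha_i|^2\geq 0$, and no inequality between that and $|\sum_i\alpha_i|^2=1$ goes in the direction you need.

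Worse, for an arbitrary finite representation of $u$ the identity $\sum_i a_i\barre{a_i}\tens b_i\barre{b_i}=1$ is simply false, so no representation-free argument of the kind you sketch can succeed. Take $A=B=\Z$ with the trivial involution and write $u=1\tens 1$ as $1\tens 2+(-1)\tens 1$: then $u\barre u=1$, yet $\sum_i a_i\barre{a_i}\tens b_i\barre{b_i}=1\tens 5$, whose trace is $5\tr(1)\neq\tr(1)$. Any correct proof must therefore use a normalization of the chosen $a_i,b_i$. The paper glosses over this with a bare ``Therefore'', but in the only place the lemma is applied, $B=\Z[X]/(X^e-1)$ and the $b_i$ run over the basis $X^i$, so $b_i\barre{b_j}=X^{i-j}$ and $b_i\barre{b_i}=1$; comparing the coefficient of $X^0$ in $u\barre u=1$ then gives $\sum_i a_i\barre{a_i}=1$, hence $\sum_i a_i\barre{a_i}\tens b_i\barre{b_i}=\bigl(\sum_i a_i\barre{a_i}\bigr)\tens 1=1$, after which your (and the paper's) trace-plus-rank argument finishes the proof. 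Since your write-up never invokes any such condition on the representation, the decisive step is missing.
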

\begin{proof}
The only difficult point in the first statement is to show that $x\mapsto \tr(x\barre x)$ is positive-definite.
This comes from the fact that the corresponding Gram matrix is the Kronecker product of the two Gram matrices corresponding to $A$ and $B$, which can be diagonalized thanks to the spectral theorem.

Then, if $\sum_i a_i \tens b_i \in \mu(A \tens B)$ where the sum is finite and $a_i,b_i\neq 0$, we have $(\sum_i a_i \tens b_i)(\sum_i \barre{a_i} \tens \barre{b_i})=1$ with~\autoref{lem:cmunite}.
Therefore, $\sum_i a_i\barre{a_i} \tens b_i\barre{b_i}=1$.
We deduce $\sum_i \tr(a_i\barre{a_i})\tr(b_i\barre{b_i})=\tr(1)$ and using the previous lemma, $A\tens B=\prod_i a_i\barre{a_i}A\tens b_i\barre{b_i}B$ as product of suborders.
\end{proof}

\begin{lemma}
$B=\Z[X]/(X^n-1)$ equipped with $X \mapsto X^{n-1}$ is a CM order and $\mu(B)$ is generated by $X$ and $-1$.
Its idempotents are zero and one.
\end{lemma}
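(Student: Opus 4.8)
The plan is to compute the trace form of $B$ once and for all, and then read off both assertions from it. First I would record that $X\mapsto X^{n-1}$ is a well-defined involution of $B$: it is induced by the substitution $X\mapsto X^{n-1}$ on $\Z[X]$, which kills $X^n-1$, and $(X^{n-1})^{n-1}\equiv X \pmod{X^n-1}$, so it squares to the identity. In the basis $1,X,\dots,X^{n-1}$, multiplication by $X^i$ is a cyclic shift of trace $n$ when $i\equiv 0 \pmod n$ and $0$ otherwise; hence $\tr(f)=nf_0$ for $f=\sum_j f_jX^j$, where $f_0$ is the constant coefficient. In particular $\tr(1)=n$, and $\tr(x\barre x)=n\sum_j x_j^2$, which is manifestly positive-definite, so $B$ is a CM order (this recovers the CM structure mentioned in \autoref{sec:preliminaries}, with $N(x)=x\barre x$ the autocorrelation of $x$).

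For the idempotents I would argue as follows. If $e\in B$ satisfies $e^2=e$, then multiplication by $e$ is an idempotent $\Q$-linear endomorphism $M_e$ of the $n$-dimensional space $B\tens\Q$, hence a projection, so $\tr(M_e)=\mathrm{rank}(M_e)\in\{0,1,\dots,n\}$. On the other hand $\tr(M_e)=\tr(e)=ne_0$ with $e_0\in\Z$, so $ne_0$ is a multiple of $n$ lying in $\{0,\dots,n\}$, forcing $\mathrm{rank}(M_e)\in\{0,n\}$, that is $M_e=0$ or $M_e=\mathrm{id}$, that is $e=M_e(1)\in\{0,1\}$.

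For $\mu(B)$ I would use \autoref{lem:cmunite}: any $w\in\mu(B)$ satisfies $w\barre w=1$, so $n=\tr(1)=\tr(w\barre w)=n\sum_j w_j^2$, giving $\sum_j w_j^2=1$ with all $w_j\in\Z$. Thus exactly one coefficient of $w$ equals $\pm1$ and the others vanish, i.e.\ $w=\pm X^a$ for some $0\le a<n$; conversely every $\pm X^a$ is a root of unity since $X^n=1$. Hence $\mu(B)=\{\pm X^a : 0\le a<n\}$, which is generated by $X$ and $-1$.

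There is no real obstacle here: the one step that needs a moment's attention is the trace computation $\tr(f)=nf_0$ (equivalently, that $\tr(x\barre x)=n\sum_j x_j^2$), since everything else is just specializing this identity together with $\tr(1)=n$ — to $M_e$ in the idempotent case and to $w\barre w$ in the root-of-unity case.
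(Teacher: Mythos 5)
Your proof is correct, and for the roots of unity it is exactly the paper's argument: \autoref{lem:cmunite} gives $w\barre w=1$, and the identity $\tr(x\barre x)=n\sum_j x_j^2$ (which you derive explicitly, while the paper just states it) forces $w=\pm X^a$. For the idempotents you take a slightly different route: the paper notes that $e\barre e$ is again an idempotent, uses $\tr(e\barre e)\geq n$ for $e\neq 0$ to conclude $e\barre e=1$, hence $e$ is invertible and therefore $e=1$; you instead observe directly that multiplication by $e$ is a projection, so its trace equals its rank, while the trace formula shows $\tr(e)=ne_0$ is a multiple of $n$, leaving only rank $0$ or $n$. Your variant never uses the conjugation in that step and rests only on the divisibility of the trace by $n$, whereas the paper's version leans on the preceding lemma bounding $\tr(a\barre a)$ below by the dimension of $a\barre a\Ord$; both ultimately reduce to the same explicit trace computation, so the difference is one of packaging rather than substance.
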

\begin{proof}
Remark that for any $\omega \in \mu(B)$, we have with~\autoref{lem:cmunite} $\omega\barre \omega=1$.
Then, with $\omega=\sum_{i=0}^{n-1} a_iX^i$, $\tr(\omega\barre \omega)=n\sum_{i=0}^{n-1} a_i^2$.
Therefore, $\omega=\pm X^i$ and the converse is clear.

If $e$ is an idempotent, then $e\barre e$ is also an idempotent.
But if $e\neq 0$, $\tr(e\barre e)\geq n$ so that $e\barre e=1$.
Hence $e$ is invertible, so that $e=1$.
\end{proof}

\begin{theorem}
Given a CM order $\Ord$, an ideal $I=\omega\Ord$ with $\omega\in\mu(\Q \tens \Ord)$, we can find $\zeta\in \mu(\Q \tens \Ord)$ such that $\omega\zeta\in \Ord$ in polynomial time.
\end{theorem}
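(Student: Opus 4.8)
The plan is to reduce the statement to the purely ideal‑theoretic task of producing a root of unity inside an explicitly given fractional ideal, then to localise at each prime so that only roots of unity of polynomially bounded order remain, and finally to extract those by a tensor‑product construction in the spirit of~\cite{C:LenSil14}.

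First, by~\autoref{lem:cmnilpot} and~\autoref{thm:structalg} the algebra $E$ is a product of number fields, and exactly as in the proof of~\autoref{lem:cmunite} the involution is the componentwise complex conjugation, so every $\omega\in\mu(E)$ satisfies $\omega\barre\omega=1$; in particular $\omega\in E^{*}$ and $\omega^{-1}=\barre\omega$. For $\zeta\in\mu(E)$ we then have $\omega\zeta\in\Ord$ iff $\zeta\in\barre\omega\Ord=\barre I$, and since $\omega\zeta$ is automatically a root of unity this is the same as $\omega\zeta\in\mu(\Ord)$. Hence $\mu(E)\cap\barre I=\barre\omega\,\mu(\Ord)$ is a nonempty coset of $\mu(\Ord)$, and \emph{any} element of it is an admissible output; in particular the algorithm does not need to know $\omega$, only the computable ideal $\barre I$. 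Using the structure theorem stated at the start of this subsection I would compute generators of $\mu(E)$ and of $\mu(\Ord)$. Note that $\mu(E)$ can be exponentially large (for $E\cong\Q^{n}$ one has $\mu(E)\cong(\Z/2\Z)^{n}$), so enumerating it is not an option.

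Next I would treat each prime $p$ separately. If $\mu(E)_{p}\neq 1$ then $\zeta_{p}\in E$, so $p-1$ divides the degree of some field factor and $p\le n+1$, and the exponent $p^{k}$ of $\mu(E)_{p}$ satisfies $\phi(p^{k})\le n$, hence $p^{k}\le 2n$; thus only polynomially many primes are relevant. Let $m$ be the part of the exponent of $\mu(E)$ coprime to $p$ (a number of polynomial bit length), and compute $I^{m}$ by repeated squaring, reducing the representation at every step by LLL (\autoref{thm:reduc}). Writing $\omega=\omega_{p}\omega'$ with $\omega_{p}$ the $p$‑part, one gets $I^{m}=(\omega_{p}^{m})\Ord$, where $\omega_{p}^{m}$ has $p$‑power order dividing $p^{k}$ and still generates $\langle\omega_{p}\rangle$. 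So $J_{p}:=\overline{I^{m}}=\barre{\omega_{p}}^{\,m}\Ord$ is a computable fractional ideal whose root‑of‑unity generator has order dividing $p^{k}$, and $\mu(E)_{p}\cap J_{p}=\barre{\omega_{p}}^{\,m}\mu(\Ord)_{p}$ is a nonempty coset. Now I would form $\B_{p}=\Z[X]/(X^{p^{k}}-1)$ with $X\mapsto X^{p^{k}-1}$, which by the last two lemmas is a CM order of dimension $p^{k}\le 2n$ with idempotents $0,1$ and $\mu(\B_{p})=\langle X,-1\rangle$, so $\Ord\tens\B_{p}$ is again a CM order, of polynomial dimension. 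Inside $\Ord\tens\B_{p}$ one builds from $J_{p}$ a principal ideal whose root‑of‑unity generator, by the rigidity statement that $\sum a_{i}\tens b_{i}\in\mu(A\tens B)$ forces $A=\prod a_{i}\barre a_{i}A\tens b_{i}\barre b_{i}B$, has its ``$X$‑component'' forced to be a unit times an element of $\mu(\B_{p})$; reading this off, after a bounded search over $\mu(\B_{p})$, yields an element $\zeta'_{p}\in\mu(E)_{p}\cap J_{p}$. (Equivalently, passing to $\Z_{E}/f\Z_{E}$ with $f\Z_{E}\subseteq\Ord$ and $\Z_{E}=\prod_{j}\Z_{F_{j}}$ turns the $p$‑local task into finding, in this finite ring, a unit in the image of $\mu(E)_{p}$ lying in the image of $J_{p}$, which the same construction handles.)

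Finally I would recombine: letting $m^{*}$ be an inverse of $m$ modulo $p^{k}$ and $\zeta_{p}:=(\zeta'_{p})^{m^{*}}$, and writing $\zeta'_{p}=\barre{\omega_{p}}^{\,m}h$ with $h\in\mu(\Ord)_{p}$, one checks $\omega_{p}\zeta_{p}=h^{m^{*}}\in\mu(\Ord)_{p}$ since $\barre{\omega_{p}}$ has order dividing $p^{k}$. Doing this for every relevant prime and setting $\zeta:=\prod_{p}\zeta_{p}\in\mu(E)$ gives $\omega\zeta=\prod_{p}\omega_{p}\zeta_{p}\in\prod_{p}\mu(\Ord)_{p}=\mu(\Ord)\subseteq\Ord$, and every step above runs in polynomial time. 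The hard part is the $p$‑local extraction: because $\mu(E)$ — indeed even a single Sylow subgroup of it — can be exponentially large, one cannot search it directly, and the whole force of the argument (following Lenstra–Silverberg) is that once one has localised so that only roots of unity of polynomially bounded order survive, the rigidity of roots of unity in $\Ord\tens\Z[X]/(X^{p^{k}}-1)$ confines the remaining ambiguity to a group small enough to search.
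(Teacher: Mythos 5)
Your skeleton --- reduce to the $p$-primary part of $\omega$ with $p^{k}\le 2n$, pass to the CM order $\Ord\tens\B_{p}$ with $\B_{p}=\Z[X]/(X^{p^{k}}-1)$, invoke the rigidity lemma for $\mu(A\tens B)$, and recombine over the finitely many relevant primes --- is the same as the paper's, and your $p$-local reduction (powering $I$ by the prime-to-$p$ part $m$ of the exponent) and the recombination via $m^{*}$ are correct. But the step you yourself call the hard part is not actually supplied, and as phrased it is circular: you ``build from $J_{p}$ a principal ideal whose root-of-unity generator'' has a constrained $X$-component and then ``read it off''. Finding a root-of-unity generator of a principal ideal is exactly the problem being solved; the ideal of $\Ord\tens\B_{p}$ naturally generated from $J_{p}$ is $J_{p}\tens\B_{p}$, and extracting its root-of-unity generator is the original instance over again. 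What the paper does instead --- and what is missing from your argument --- is to build an \emph{order}, not an ideal: the ring $C=\bigoplus_{i<p^{k}}J_{p}^{\,i}X^{i}\subset E\tens\Q[X]/(X^{p^{k}}-1)$, obtained by concatenating bases of the powers of the ideal, which is computable from $J_{p}$ alone and is isomorphic to $\Ord\tens\B_{p}$ (twist the degree-$i$ component by the $i$-th power of the unknown generator). One then computes generators of $\mu(C)$ with the polynomial-time algorithm of \cite[Theorem 1.2]{lenstra2015roots}; the rigidity lemma forces every root of unity of $C$ to be a monomial $uX^{i}$ with $u\in\mu(E)\cap J_{p}^{\,i}$, and combining generators so as to reach degree one in $X$ yields the desired element of $\mu(E)\cap J_{p}$. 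Nothing in your proposal produces such an element, and your fallback of reducing modulo a conductor and ``finding a unit in the image of $\mu(E)_{p}$ lying in the image of $J_{p}$'' does not close the gap: $\mu(E)_{p}$ can have exponentially many elements even though its exponent is at most $2n$ (e.g.\ $\{\pm1\}^{n}$ when $E\cong\Q^{n}$), and you give no polynomial-time method for locating an element of that coset in the finite quotient --- that coset-search is precisely what the twisted tensor-order construction is for.

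A second, smaller omission: the monomial form of the roots of unity of $\Ord\tens\B_{p}$ requires $\Ord$ to have only trivial idempotents. For $\Ord=\Z\times\Z$ the element $(1,0)\tens X+(0,1)\tens 1$ is a root of unity of $\Ord\tens\B_{p}$ that is not of the form $w\tens X^{i}$ with $w\in\mu(\Ord)$, so the rigidity lemma cannot be used as you use it. The paper therefore first computes the primitive idempotents $e_{i}$ of $\Ord$ (via the theorem opening this subsection) and solves the problem separately over each $e_{i}\Ord$, recombining afterwards; your proof needs this reduction before the tensor argument, in addition to the extraction step described above.
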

\begin{proof}
Without loss of generality, we can assume $\omega \in \mu(\Q \tens \Ord)_p$ and we know some $e\leq 2n$ such that $\omega^e=1$.
We now compute all the primitive idempotents $e_i$ of $\Ord$ and by combining the results for all $I/e_i\Ord$ over $e_i\Ord$, we can assume $\Ord$ has only trivial idempotents.

We then build the CM order $\Ord \tens \Z[X]/(X^e-1)$, by concatenating the basis of $I^i=(\omega^i)$.
We now use~\cite[Theorem 1.2]{lenstra2015roots} to find the generators of the roots of unity of this order.
Because of the previous lemmata, they are of the form $w \tens X^i$ with $w \in \mu(\Ord)$.
By combining the generators, we can deduce a root of unity of the form $w \tens X$.
Hence, $w \in \omega\mu(\Ord)$ so we can output $1/w$.
\end{proof}

\begin{theorem}
Given $(I,r)$, if $E'$ is a product of cyclotomic field, or GRH is true, or we have access to randomness, we can find $v\omega\in \mu(E)\Ord$ such that $I=(v)$ or prove there is no such $v\omega$ with $N(v)=r$.
We can also do this unconditionnally in time $2^{n^{(1+o(1))/\log\log n}}$.
Furthermore, if $\Ord$ is a CM order, we can find $v\in \Ord$ such that $(I,r)=((v),N(v))$.
\end{theorem}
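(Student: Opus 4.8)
This is essentially \autoref{thm:gsalg}, and the plan is to assemble the constructions of \autoref{sub:reduc}--\autoref{sub:final} in order. Following the convention of this section, we may assume a solution $v$ exists, since the output is cheaply verified; for a CM order we may moreover assume $v$ is invertible by working in $\Ord/(\Q\tens I)$. First, if $(I,r)=((v),N(v))$ then substituting $x=vy$ in $x\mapsto\tr(N(x)/N(v))$ shows the determinant of $(I,r)$ equals that of $(\Ord,N(1))=\Delta$; since the group is poly-reducible we apply \autoref{thm:reduc} to replace $(I,r)$ by an equivalent reduced pair with $\Ord\subset I$ and all heights $(h(\Ord)+h(\Delta))^{O(1)}=h(\Ord)^{O(1)}$, and a generator of the reduced ideal yields one of the original.

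Next I would run the structure algorithms of \autoref{sub:final} (i.e. \cite{lenstra2015algorithms,lenstra2015roots}) to compute the maximal ideals $\Frk{m}_1,\dots,\Frk{m}_t$ with $E'=\prod_i E/\Frk{m}_i$, the nilpotent radical, the primitive idempotents of $\Ord$, and generators of $\mu(E)$ and $\mu(\Ord)$. For each $i$, writing $F_i=E/\Frk{m}_i$, I would apply \autoref{thm:field} to $(I,r)$ and $\Frk{m}_i$; it returns $w_i\in F_i^*$ such that the image of $v$ in $F_i$ differs from $w_i$ by a root of unity of $F_i$. The four regimes of the statement --- $E'$ a product of cyclotomic fields (treated factor by factor via \autoref{lem:cyclo}), GRH, access to randomness, and the unconditional bound $2^{n^{(1+o(1))/\log\log n}}$ --- together with their running times are exactly what \autoref{thm:field} provides (itself built on the chain \autoref{thm:power}$\to$\autoref{thm:retrouvepuiss}$\to$\autoref{lem:pgcd}), invoked at most $t\le n$ times.

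Then I would glue the $w_i$ together by the Chinese remainder theorem into $w'\in(E')^*$; since $\mu(E')=\prod_i\mu(F_i)$ is finite, the image $\barre v$ of $v$ in $E'$ satisfies $\barre v/w'\in\mu(E')$, and as $1+\mathrm{nilrad}$ is torsion-free in characteristic zero the projection $\mu(E)\to\mu(E')$ is an isomorphism, so $\barre v$ is pinned down up to $\mu(E)$. It remains to recover the nilpotent part: lift a chosen representative of $\barre v\mu(E')$ to an element of $I$, which automatically generates $I$ (all primes of $\Ord$ lie over the $\Frk{m}_i$, so local generation there plus Nakayama suffices), and then correct it along the filtration by powers of the nilpotent radical, using the root-of-unity algorithms of \cite{lenstra2015roots} and the known $N(v)=r$ --- routine bookkeeping I would not spell out. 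The resulting lift of $w'$ is an element of $v\mu(E)\subset\mu(E)\Ord$ generating $I$; if at some stage no consistent choice exists, no $v$ with $N(v)=r$ exists and we report failure. Finally, when $\Ord$ is a CM order, \autoref{lem:cmnilpot} kills the nilpotent radical (so $E=E'$ and the lifting step is vacuous), and the theorem immediately preceding this one --- built from $\Ord\tens\Z[X]/(X^e-1)$, \autoref{lem:cmunite} and \cite{lenstra2015roots} --- replaces the generator, known up to $\mu(\Q\tens\Ord)$, by one lying in $\Ord$ with $(I,r)=((v),N(v))$.

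The hard part is already isolated inside \autoref{thm:field}: bounding the greatest common divisor of the exponents $(N(\Frk{p})-1)p^k$ coming from \autoref{thm:retrouvepuiss}, which is precisely what forces the case distinction and the subexponential running time in the unconditional case. With \autoref{thm:field} taken as a black box, the only genuinely new work here is the Chinese-remainder recombination over the $\Frk{m}_i$ and the nilpotent lift, both of which reduce to the cited structure and root-of-unity algorithms.
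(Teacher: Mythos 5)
Your outline matches the paper up to and including the CRT recombination of the outputs of \autoref{thm:field} over the $\Frk{m}_i$, and your final step for CM orders (no nilpotents by \autoref{lem:cmnilpot}, then the preceding theorem built on $\Ord\tens\Z[X]/(X^e-1)$ to fix the root of unity) is exactly the paper's. But the nilpotent recovery, which you compress into ``lift a chosen representative of $\barre v\mu(E')$ to an element of $I$, which automatically generates $I$ \dots then correct it along the filtration by powers of the nilpotent radical \dots routine bookkeeping'', is a genuine gap, and it is precisely the part of the proof that carries real content. The parenthetical justification is not sound: knowing that an element of $I$ is congruent, modulo the nilradical, to $v$ times a root of unity of $E'$ does not make it a generator of $I$, because roots of unity of $E$ need not lie in $\Ord$ (the theorem's output is only $v\omega\in\mu(E)\Ord$ for that very reason), so the cofactor is not a unit of $\Ord$ and Nakayama does not apply; indeed such a lift inside $I$ need not even exist. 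Moreover you give no mechanism for actually computing the nilpotent part: ``using the root-of-unity algorithms of \cite{lenstra2015roots} and the known $N(v)=r$'' is not an algorithm --- $N$ is a black box here, and nothing in the setup lets you invert it on the unipotent fiber $1+\mathrm{nil}$, nor do you control the height of the correction you would need at each layer of the filtration.

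The paper does something quite different at this point, and it is the crux of the proof. Writing $J=I/x=(\omega+a)$ with $\omega\in\mu(E)$ and $a$ nilpotent, and $e$ the order of $\omega$, it forms $J^{e2^k}=((1+b)^{2^k})$ with $b=(\omega+a)^e-1$ nilpotent, and exploits a $2$-adic convergence: in the order $\Ord'$ generated by $\Ord$ and $\mu(E)$ one has $(1+b)^{2^k}\in 1+2^{k-\lfloor\log n\rfloor}\Ord'$. Applying \autoref{thm:reduc} to $J^{e2^k}$ yields an invertible $y$ with $y/(1+b)^{2^k}\in\Ord'$ of height $h(\Ord)^{O(1)}$, so for $k$ large enough that quotient is identified with the lift of $y$ modulo $2^{k-\lfloor\log n\rfloor}\Ord'$; this pins down $(1+b)^{2^k}$ exactly, and the $\log/\exp$ bijection on unipotents (polynomial in $k\log e$) extracts the $2^k$-th root, giving $1+b$ and hence $z=x(1+a/\omega)$ with $v=\omega z$. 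None of this is reconstructible from your sketch; to close the gap you would need either this $2$-adic reduction-and-root-extraction argument or a genuine substitute for it, not a filtration-by-nilradical correction, which as stated has no way to terminate with the correct (and correctly bounded) nilpotent part.
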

\begin{proof}
We first compute all $\Frk{m}_i$, apply~\autoref{thm:field} for each $\Frk{m}_i$ and recover some $x$ using the Chinese remainder theorem.
Then, we compute $J=I/x=(\omega+a)$ where $a^n=0$ and $\omega \in E$ is a root of unity.
From the knowledge of the group of roots of unity of $E$, we can deduce in polynomial time $e$ such that $\omega^e=1$.
We may then compute $J^{e2^k}=(1+b)^{2^k}$ with $b=(\omega+a)^e-1$, which is easily seen to be a nilpotent.
Since $1-x\mapsto \sum_{i=1}^n -x^i/i$ for $x$ nilpotent is a morphism, whose inverse is $x\mapsto \sum_{i=0}^n x^k/k!$ (see~\cite[Proposition 8.1]{lenstra2015algorithms}), this takes time which is polynomial in $k\log(e)$.

We can compute $\Ord'$, the largest order which contains $\Ord$ and all roots of unity of $E$ in polynomial time, and we have $b\in \Ord'$.
Therefore, we have $(1+b)^2=1+b^2$ in $\Ord'/(2)$ so that $(1+b)^{2^{\lceil \log n \rceil}} \in 1+2\Ord'$.
We deduce $(1+b)^{2^k}=1+2^{k-\lfloor \log n \rfloor}\Ord'$.
We apply~\autoref{thm:reduc} to $J^{e2^k}$ to produce an invertible $y$ such that $y/(1+b)^{2^k}\in \Ord'$, with $h(y/(1+b)^{2^k})=h(\Ord)^{O(1)}$.
Hence, we choose $k$ sufficiently large so that $y/(1+b)^{2^k}$ is equal to the lift of $y$ modulo $2^{k-\lfloor \log n \rfloor}\Ord'$.
We can therefore compute $(1+b)^{2^k}$ and using the two morphisms, deduce $z=x(1+a/\omega)$.

We finally apply the previous theorem with $I/(z)$ to recover $\omega$.

\if 0

We now consider the multiplicative group of all $w\Ord$ where $w\in \Ord'$ is a root of unity.
Using Hermite normal form, we can test equality in the group.
This is clearly isomorphic to $\mu(E)/\mu(\Ord)$, and therefore with~\cite[Algorithm 7.8]{lenstra2015roots}, we can compute a presentation of it.
Hence, applying 
\fi

\end{proof}

\subsection{Comments on the algorithm}

One problem with the given algorithm is that it is \textit{not explicit}.
In particular, we need an upper-bound on the constants of the algorithms dealing with the norms ($\tr,N$, multiplication and inversion in $\Cal{N}$).
However, this seems to be an unavoidable consequence of our black-box model, as slower algorithms mean a possibly larger set of solutions.
Also, for any application exposed at the beginning, these constants are explicit.
Furthermore, if we impose that there is a solution, one can simply increase the constant until we reach the solution.

Another difficulty is the sheer complexity of the algorithm, both in term of code length and running time.
However, a large part of this complexity can be removed.
Indeed, in practice, as soon as we combine information given by a couple of exponents (typically two, if we manage to find small primes having a prime ideal of inertia degree one above them), the greatest common divisor becomes tiny.
It can be explained by a heuristic application of Chebotarev's theorem~: if $p^k$ divides the current greatest common divisor but does not divide the number of roots, the probability that $k$ will not decrease is $O(1/p)$.
Hence, we can simply ignore all primes $p$ where $p\Ord$ is not invertible, or $p$ divides the discriminant of the polynomial ; and beside the exponentiation, we only need to factor the polynomial defining the number field to produce the prime ideals.
Also, applications can generally cope with finding the solution up to root of unity of $E$, since they usually work with an order in a number field with a known polynomial, which contains few roots of unity, and no nilpotents.
Root extraction can be efficiently computed if we know an inert prime ; a Newton-Hensel iteration may also work.
Ideal multiplication can be accelerated by compressing the lattice, see~\cite[Section 4]{chen2005blas}.
While the exponent needed might seem to be huge, it is usually fairly small.
For example, when $\Ord=\Z[\zeta_m]$, the precision needed is exactly the size of a typical LLL reduction of a lattice of determinant one, which is in practice $1.022^n$ in dimension $n$~\cite{EC:GamNgu08}.
Finally, we explain in~\autoref{subsec:exposant} that under plausible heuristics, the exponent is bounded by $O(\log(h(\Ord))\log(\log(h(\Ord))))$, so that the resulting complexity is in general a couple of lattice reductions.
We add that we can save an ideal powering using a Hensel iteration :
\begin{lemma}~\label{lem:racine}
Let $s$ be a formal product such that $[s]=v^e$ for some known $e$, and $v\in \Ord'$.
Given a bound on $h(v)$ and a prime invertible ideal $\Frk{p}$ of inertia degree $f$ above the prime $p$, with $d=\gcd(e,p^f-1)$, $p\nmid e$ and $v\not\in \Frk{p}$, we can compute $v$ in time polynomial in the size of the input, $p$ and $d$.
\end{lemma}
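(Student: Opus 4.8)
The plan is to imitate the proof of~\autoref{thm:retrouvepuiss}, but to replace the trick of ``killing the $(N(\Frk p)-1)p^k$-th power'' by a direct $e$-th root extraction in the residue field, which is exactly why it suffices to have $d=\gcd(e,p^f-1)$ small rather than a specially chosen exponent. First I would reduce everything modulo $\Frk p$: since $\Frk p$ is invertible of inertia degree $f$, the residue ring $\Ord'/\Frk p$ is the finite field with $p^f$ elements, and $\bar v:=v\bmod\Frk p$ is nonzero by hypothesis, hence a unit of order dividing $p^f-1$. Evaluating the formal product modulo $\Frk p$ (legitimate since $[s]=v^e$ is a unit there, after, if needed, choosing representatives of the support invertible modulo $\Frk p$ as in the proof of~\autoref{lem:pgcd}) gives $[s]\equiv\bar v^{\,e}\pmod{\Frk p}$. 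Because $d=\gcd(e,p^f-1)$, the extended Euclidean algorithm produces $a\in\Z$ with $ea\equiv d\pmod{p^f-1}$, so $\bar v^{\,d}=([s]\bmod\Frk p)^{a}$ is computable in the allowed time; factoring $X^{d}-\bar v^{\,d}$ over $\Ord'/\Frk p$ then yields a set $S$ of at most $d$ candidates for $\bar v$, and every element of $S$ is in fact an $e$-th root of $[s]$ in $\Ord'/\Frk p$.

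Next, for each $\bar v_0\in S$ I would Hensel-lift. Since $p\nmid e$ and $\bar v_0\neq0$, the derivative $e\bar v_0^{\,e-1}$ is a unit of $\Ord'/\Frk p$, so $\bar v_0$ is a simple root of $Y^{e}-[s]$ modulo $\Frk p$; Newton's iteration therefore lifts it to a \emph{unique} $v_0^{(k)}\in\Ord'/\Frk p^{2^{k}}$ with $(v_0^{(k)})^{e}\equiv[s]\pmod{\Frk p^{2^{k}}}$, each step doubling the precision at the cost of a bounded number of ring operations (including one inversion) modulo $\Frk p^{2^{k}}$ and a reduction of $[s]$ by repeated squaring. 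Hence this lift runs in time polynomial in $k$ and the input size. When $\bar v_0=\bar v$, uniqueness of the lift forces $v\equiv v_0^{(k)}\pmod{\Frk p^{2^{k}}}$.

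Finally I would round and check. Writing $h(v)\le B$ and using that $N(\Frk p^{2^{k}})=p^{f2^{k}}$ — this is the one place the invertibility of $\Frk p$ is genuinely needed — \autoref{lem:idealsvp} shows that for $k$ with $2^{k}f\log p\ge 2B+O(h(\Ord'))$ every nonzero element of $\Frk p^{2^{k}}$ is longer than $2^{n}$ times any $\ell^\infty$-representative of $v$. I then apply~\autoref{lem:lllbabai} to a basis of $\Frk p^{2^{k}}$ with target $v_0^{(k)}$ to recover the unique $w_0\in\Ord'$ of height $O(B)$ congruent to $v_0^{(k)}$ modulo $\Frk p^{2^{k}}$, so $w_0=v$ for the correct candidate; computing $w_0^{e}$ by fast exponentiation and comparing with $[s]$, I output any $w_0$ that matches, which is then an $e$-th root of $[s]$ and hence equals $v$ up to an $e$-th root of unity. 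The main obstacle is the precision bookkeeping: one must pick $k$ simultaneously large enough for the short-vector argument and small enough that $2^{k}$ (hence the whole running time) stays polynomial in the input size, $p$ and $d$, and one must be careful that the hypotheses $p\nmid e$ and $v\notin\Frk p$ are used precisely where the Hensel step and the residue-field extraction require them.
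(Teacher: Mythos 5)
Your proposal is correct and follows essentially the same route as the paper: reduce the $e$-th root extraction to factoring a degree-$d$ polynomial over the residue field $\Ord'/\Frk{p}$ using $d=\gcd(e,p^f-1)$, Hensel-lift the candidate roots of $Y^e-[s]$ modulo high powers of $\Frk{p}$ (which is where $p\nmid e$ is used), and recover $v$ from its residue by combining \autoref{lem:idealsvp} with \autoref{lem:lllbabai}. The only difference is cosmetic — you factor $X^d-\bar v^{\,d}$ with $\bar v^{\,d}=[s]^a$, whereas the paper factors $X^d-[s]$ and raises each root to the inverse of $e/d$ modulo $p^f-1$.
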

\begin{proof}
We select some $k$ and compute $[s] \Mod \Frk{p}^k$ and $c$ the inverse of $e/d$ modulo $p^f-1$.
We can then factor $X^d-[s]$ in the finite field $\Ord'/\Frk{p}$ in time polynomial in $p$, $d$ and the size of the input.
For each root $r$, we have $r^c$ a root of $X^e-[s] \Mod \Frk{p}$, which we can extend (since $p \nmid e$) using Hensel lifting to a root of $X^e-[s] \Mod \Frk{p}^k$.
Using~\autoref{lem:idealsvp} and~\autoref{lem:lllbabai}, for some polynomially large $k$, we can recover $v$ from $v \Mod \Frk{p}^k$.
\end{proof}

\section{Applications}\label{sec:appli}

\subsection{Dimension halving}

In this subsection we define $\Ord$ to be a CM order.
The algorithm was first evoked in Gentry's dissertation~\cite[Section 6.2]{gentry2009fully} before being developped in GGH~\cite[Section 8.8.1]{EC:GarGenHal13}.
We correct here two benign mistakes in the algorithm.
The first is that we should prove the existence of a short \textit{non-zero} vector.
The second is that the Gentry-Szydlo does not give a \textit{unique} solution.

\begin{lemma}\label{lem:reddim}
Given an integral invertible ideal $I=(v)$ of $\Ord$ and some $x\in I-\{0\}$ which minimizes $\tr(x\barre x)$, then there exists $y\neq 0$ in $v \Ord^+$ or $v\Ord^-$ such that $0<\tr(y\barre y)\leq 2\tr(x\barre x)$.
Furthermore, these two lattices are included in $I$.
\end{lemma}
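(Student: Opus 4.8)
The plan is to exploit the decomposition $E = E^+ \oplus E^-$ (as $\Q$-vector spaces) induced by the involution, where $E^+ = \Q\tens\Ord^+$ and $E^-=\Q\tens\Ord^-$, together with the fact that $E$ is a product of fields by \autoref{lem:cmnilpot} and \autoref{thm:structalg}. Any $x\in E$ can be written uniquely as $x = x^+ + x^-$ with $x^+ = (x+\barre x)/2 \in E^+$ and $x^- = (x-\barre x)/2 \in E^-$; note $\barre{x^+}=x^+$ and $\barre{x^-}=-x^-$. Since $I=(v)$, write $x = vz$ for the minimizing $x$, and set $y_+ = v\cdot\tfrac{z+\barre z}{2}$ and $y_- = v\cdot\tfrac{z-\barre z}{2}$, so that $x = y_+ + y_-$ with $y_+ \in v\Ord^+$ and $y_- \in v\Ord^-$ — wait, more carefully, $\tfrac{z\pm\barre z}{2}$ lies in $\Ord^{\pm}$ only after clearing a factor of $2$, so I will instead take $y_+ = x + \barre{(v/\barre v)}\,\barre x$ type combinations, or more cleanly argue with the rational lattices $v\Ord^+$, $v\Ord^-$ which are genuinely sublattices of $I$ (this is the "furthermore" clause, immediate since $\Ord^\pm\subset\Ord$ and $v\in\Ord$).

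First I would prove the inclusion claim: $\Ord^+ \subset \Ord$ and $\Ord^- \subset \Ord$ by definition, hence $v\Ord^+, v\Ord^- \subset v\Ord = I$, and these are full-rank lattices in $E^+$, $E^-$ respectively. Next, the key geometric point: I want to show that the quadratic form $x\mapsto\tr(x\barre x)$ is "block-diagonal" with respect to $E = E^+\oplus E^-$, i.e. $\tr(x\barre x) = \tr(x^+\barre{x^+}) + \tr(x^-\barre{x^-})$ where $x^\pm$ are the real/imaginary parts of $x$. This follows because $\tr((x^++x^-)\overline{(x^++x^-)}) = \tr(x^+\barre{x^+}) + \tr(x^-\barre{x^-}) + \tr(x^+\barre{x^-}) + \tr(x^-\barre{x^+})$, and the cross terms $\tr(x^+\barre{x^-}) = \tr(-x^+x^-)$ and $\tr(x^-\barre{x^+}) = \tr(x^-x^+)$ cancel by commutativity. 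So $\tr(x\barre x) = \tr(x^+\barre{x^+}) + \tr(x^-\barre{x^-})$, and at least one of the two summands is $\geq \tfrac12\tr(x\barre x)$. By symmetry assume $\tr(x^+\barre{x^+}) \geq \tfrac12\tr(x\barre x) > 0$ (it is strictly positive because the form is definite and $x\neq0$ — but I need the summand itself nonzero, which I address below).

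The remaining obstacle — and I expect this to be the main subtlety, exactly the "short non-zero vector" correction the authors flag — is that $x^+$ (or $x^-$) could a priori be zero, or could fail to lie in $v\Ord^+$ (only in $v\Ord^+$ after scaling). For the first issue: if $x^+ = 0$ then $x = x^- \in v\Ord^-$ and $\tr(x\barre x) = \tr(x^-\barre{x^-}) \leq 2\tr(x\barre x)$ trivially, so take $y = x$; symmetrically if $x^- = 0$. If both are nonzero, at least one summand is $\geq\tfrac12\tr(x\barre x)$, and I take $y$ to be a suitable nonzero lattice vector: here I would use that $2x^+ = x + \barre x \in I + \barre I$; since $I = (v)$ one checks $x+\barre x \in v\Ord^+ + $ (something), and rather than fight the factor of $2$ I would simply replace "$x^+$" by "$x+\barre x = vz + \barre v\barre z$" and observe $vz+\barre v\barre z = v(z + (\barre v/v)\barre z)$, with $z + (\barre v/v)\barre z$ fixed by the involution hence in $E^+\cap(\text{something integral})$; cleaning this up, $x + \barre x \in v\Ord^+$ when... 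Actually the cleanest fix: work directly with the lattices $L^+ = v\Ord^+$ and $L^- = v\Ord^-$ inside $I$, note $2x = 2x^+ + 2x^-$ with $2x^+ = x+\barre x \in I \cap \barre I$ and similarly $2x^-$; then $x+\barre x$ and $x - \barre x$ are nonzero unless $x\in E^{\mp}$, and $\tr((x+\barre x)\overline{(x+\barre x)}) = 4\tr(x^+\barre{x^+}) \le 4\tr(x\barre x)$, which gives a factor $4$ not $2$. To recover the factor $2$ claimed, I would instead choose $y$ as the one of $\{x + \barre x,\ x - \barre x\}$ with larger norm — their norms sum to $4\tr(x\barre x)$ so the larger is $\geq 2\tr(x\barre x)$... which is the wrong direction. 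Hence the correct route is the decomposition $\tr(x\barre x) = \tr(x^+\barre{x^+}) + \tr(x^-\barre{x^-})$ with $x^\pm \in \tfrac12 I$, and then argue $x^\pm$ actually lands in $v\Ord^{\pm}$ using that $v\barre v \in \Ord^+$ together with invertibility of $v$ (so $\Ord^+ \ni$ the relevant combination); I expect the author's proof to push exactly this point, clearing denominators via the explicit structure of $\Ord^+\oplus\Ord^-$ inside $\Ord$, which has index a power of $2$. So the main work is the denominator bookkeeping showing one of the two half-integral vectors $x^{\pm}$ can be taken in the honest lattice $v\Ord^{\pm}$ without losing more than a factor $2$ in $\tr(\cdot\,\barre{\cdot})$, and the nonvanishing is handled by the case split above.
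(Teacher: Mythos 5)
Your first sentence actually writes down the right vectors ($v\cdot\frac{z\pm\barre z}{2}$ with $z=x/v$), but you then abandon them and base the whole argument on the decomposition of $x$ itself, $x^{\pm}=(x\pm\barre x)/2\in E^{\pm}$, resp.\ on $x\pm\barre x$. That is the gap: these vectors do not lie in $v\Ord^{+}$ or $v\Ord^{-}$ (the lattices the lemma requires), and the obstruction is not a factor of $2$ that "denominator bookkeeping via the index of $\Ord^{+}\oplus\Ord^{-}$ in $\Ord$" could remove, but the twist by $\barre v/v$: the lattice $v\Ord^{+}$ spans the subspace $vE^{+}$, not $E^{+}$. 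Concretely, take $\Ord=\Z[i]$, $v=1+2i$, $x=v$: then $x+\barre x=2$, while $v\Ord^{+}=(1+2i)\Z$ and $2/v=(2-4i)/5$, so $x+\barre x\notin v\Ord^{+}$ and the denominator appearing is $5$, not a power of $2$. Relatedly, your claim that $z+(\barre v/v)\barre z$ is fixed by the involution is false (its conjugate is $\barre z+(v/\barre v)z$), and your selection rule is inverted: since the two orthogonal components have squared norms summing to (a multiple of) $\tr(x\barre x)$, you must keep the \emph{smaller} nonzero one, not the larger. As written, the proposal never produces a nonzero $y$ in $v\Ord^{+}$ or $v\Ord^{-}$ with $\tr(y\barre y)\le 2\tr(x\barre x)$.

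The paper's proof is exactly the route you started and dropped: symmetrize $z=x/v\in\Ord$ rather than $x$. One has $z+\barre z\in\Ord^{+}$ and $z-\barre z\in\Ord^{-}$ with no denominators at all, hence $v(z\pm\barre z)\in v\Ord^{\pm}\subset I$. The key identity is $\tr\bigl(v\barre z\,\barre{v\barre z}\bigr)=\tr\bigl(vz\,\barre{vz}\bigr)=\tr(x\barre x)$, so the parallelogram law for the quadratic form $w\mapsto\tr(w\barre w)$, applied to $a=vz$, $b=v\barre z$, gives $\tr\bigl(v(z+\barre z)\,\barre{v(z+\barre z)}\bigr)+\tr\bigl(v(z-\barre z)\,\barre{v(z-\barre z)}\bigr)=4\tr(x\barre x)$; if $z\neq\pm\barre z$ both vectors are nonzero ($v$ is invertible) and the smaller one is the desired $y$, while if $z=\barre z$ (resp.\ $z=-\barre z$) then $x=vz$ itself already lies in $v\Ord^{+}$ (resp.\ $v\Ord^{-}$) and serves as $y$. (Equivalently, your cross-term cancellation is valid for the twisted components $v(z\pm\barre z)/2$ as well, and scaling by $2$ gives the same bound; minimality of $x$ is never used.) Only your "furthermore" clause, $v\Ord^{\pm}\subset v\Ord=I$, is proved correctly.
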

\begin{proof}
Let $z=x/v$.
Remark that $\tr(v\barre z\barre{v\barre z})=\tr(vz\barre{vz})=\tr(x\barre x)$.
Since $x\mapsto \tr(x\barre x)$ is a quadratic form, there exists $s\in \{-1,0,1\}$ such that $0<\tr(v(z+s\barre z)\barre{v(z+s\barre z)})\leq 2\tr(x\barre x)$ and $s=0$ only if $z=\barre z$.
If $s=0$, then $vz \in v\Ord^+$.
If $s=1$, then $v(z+\barre z) \in v\Ord^+$.
Else $s=-1$ and $v(z-\barre z) \in v\Ord^-$.
\end{proof}

\begin{theorem}
Given a CM order $\Ord$ included in a product of $k$ fields a principal ideal $I$, using one call to~\autoref{thm:gsalg}, having access to an oracle finding a non-zero vector in a lattice at most $\gamma$ times larger than the shortest non-zero vector, and time polynomial in the size of the input and $2^k$, we can find a non-zero vector at most $\gamma\sqrt{2}$ larger than the shortest non-zero vector of the ideal.
Furthermore, all calls to the oracle are of dimension at most $\max(\dim \Ord^+,\dim \Ord^-)$.
\end{theorem}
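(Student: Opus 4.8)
The plan is to combine one call of Gentry--Szydlo's algorithm with \autoref{lem:reddim}. First I would call \autoref{thm:gsalg} on the polarised ideal $(I,N(v))$ attached to the instance and recover an actual generator $v\in\Ord$ with $I=(v)$ (for a CM order the output is a genuine ring element by \autoref{lem:cmnilpot}, and the ambiguity left by \autoref{thm:racineunite}, namely that $v$ is only determined up to $\mu(\Ord)$, is harmless: only $(v)$ and the two lattices built below from this \emph{fixed} $v$ will be used; if $I$ is fractional, rescale it to an integral ideal first, which does not change an approximate shortest-vector query). Then I would form the two integral lattices $\Lambda^{+}=v\Ord^{+}$ and $\Lambda^{-}=v\Ord^{-}$, each equipped with the positive-definite form $x\mapsto\tr(x\barre x)$, and hand them to the oracle via their Gram matrices. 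Since $\Ord^{\pm}\subset\Ord$ we get $\Lambda^{\pm}\subset(v)=I$; and since $v$ is invertible in $E$, multiplication by $v$ is injective, so $\dim\Lambda^{\pm}=\dim\Ord^{\pm}\le\max(\dim\Ord^{+},\dim\Ord^{-})$, which already gives the statement about the dimension of the oracle calls.

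Next I would run the oracle on $\Lambda^{+}$ and on $\Lambda^{-}$, obtaining non-zero $y^{+}\in\Lambda^{+}$ and $y^{-}\in\Lambda^{-}$ with $\tr(y^{\pm}\barre{y^{\pm}})\le\gamma^{2}\lambda_1(\Lambda^{\pm})^{2}$, and output whichever of $y^{+},y^{-}$ minimises $\tr(y\barre y)$. For the analysis, pick $x\in I\setminus\{0\}$ minimising $\tr(x\barre x)$, so that $\lambda_1(I)^{2}=\tr(x\barre x)$. Applying \autoref{lem:reddim} to $I=(v)$ and $x$ produces a non-zero $y_0$ lying in $\Lambda^{+}$ or in $\Lambda^{-}$ with $0<\tr(y_0\barre{y_0})\le 2\,\tr(x\barre x)$; say $y_0\in\Lambda^{+}$, the other case being symmetric. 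Then $\lambda_1(\Lambda^{+})^{2}\le 2\lambda_1(I)^{2}$, hence $\tr(y^{+}\barre{y^{+}})\le 2\gamma^{2}\lambda_1(I)^{2}$; so $y^{+}$, and a fortiori the vector we actually return, is non-zero, lies in $I$, and has norm at most $\gamma\sqrt2$ times that of a shortest non-zero vector of $I$.

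I do not expect a genuine obstruction: the substance is already in \autoref{lem:reddim} and in \autoref{thm:gsalg}, so what remains is organisation, and the two points that need care are bookkeeping rather than mathematics. First, we cannot decide in advance which of $\Lambda^{+},\Lambda^{-}$ carries the near-optimal vector promised by \autoref{lem:reddim} — this is exactly why Gentry--Szydlo's failure to return a distinguished generator matters here — so both lattices must be queried and the shorter answer kept; this doubles the number of oracle calls but does not touch the approximation factor. Second, when $E$ is a product of $k$ number fields the auxiliary computations (extracting $\Ord^{+}$ and $\Ord^{-}$, the idempotent decomposition of $E$ with the action of conjugation on it, and supplying the polarised datum required by \autoref{thm:gsalg}) are performed factor by factor, which is what introduces the $2^{k}$ into the running time; every other step is polynomial in the input size. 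The loss $\sqrt2$ is precisely the one incurred once in \autoref{lem:reddim}, and is not removable by this reduction.
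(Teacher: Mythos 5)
There is a genuine gap at the very first step. The theorem's input is only the CM order and the principal ideal $I=(v)$: the polarization $r=N(v)=v\barre v$ is \emph{not} part of the instance, and it cannot be computed from $I$ (from $I\barre I$ you only obtain the ideal $(v\barre v)$, not the element $v\barre v$ itself). So you cannot "call \autoref{thm:gsalg} on the polarised ideal $(I,N(v))$" and recover a generator $v$; circumventing exactly this missing polarization is the whole content of the reduction. If $N(v)$ were given, the conclusion would indeed follow as easily as you describe, but then the theorem would be almost vacuous.

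The paper's route is different: it runs \autoref{thm:gsalg} on $(\barre I/I,\,1)$, which is legitimate because the generator $\barre v/v$ has norm $\barre v/v\cdot v/\barre v=1$, a \emph{known} polarization. This returns $\omega\barre v/v$ for an unknown $\omega\in\mu(\Ord)$, and one then enumerates $\zeta\in\mu(\Ord)/\mu(\Ord)^2$ (at most $2^k$ classes, since $\mu(\Ord)$ is a product of at most $k$ cyclic groups); for the correct $\zeta$ one has $\zeta\omega=\barre w^{\,2}$ with $w\in\mu(\Ord)$, so that $J=I(1+\zeta\omega\barre v/v)=(vw+\barre{vw})$ is generated by an element of $\Ord^+$. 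Computing $J\cap\Ord^+=(vw+\barre{vw})\Ord^+$ and dividing yields a basis of $v\Ord^+$, and then $v\Ord^-$ via $v\Ord^+\oplus v\Ord^-=2I$; only at that point does \autoref{lem:reddim} with two oracle calls finish the argument, exactly as in the second half of your write-up (which is correct). Note also that this shows your accounting of the $2^k$ factor is off: it does not come from treating the $k$ field factors separately, but from trying all classes of $\mu(\Ord)/\mu(\Ord)^2$ to fix the square root of the unknown root of unity --- the root-of-unity ambiguity of Gentry--Szydlo is precisely \emph{not} harmless here, contrary to what you assert, because the output is $\omega\barre v/v$ rather than a generator of $I$.
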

\begin{proof}
Without loss of generality, $I$ is invertible and integral.
Let $v\in \Ord$ such that $I=(v)$.
We first compute $\barre I/I$ and run~\autoref{thm:gsalg}, which returns some $\omega \barre v/v$ and $\omega \in \mu(\Ord)$.
Then, for all $\zeta \in \mu(\Ord)/\mu(\Ord)^2$, we deduce $J=I(1+\zeta \omega \barre v/v)=(v+\zeta \omega \barre v)$.
For some $\zeta$, we will have $\zeta \omega=\barre{w}^2$ and $w\in \mu(\Ord)$.
Thus with~\autoref{lem:cmunite}, $J=(vw+\barre{vw})$ and $I=(vw)$.
We can then compute $J\cap \Ord^+=(vw+\barre{vw})\Ord^+$ since $vw+\barre{vw}\in \Ord^+$.
Dividing by $1+\barre{vw^2}/v$, we get a basis of $v\Ord^+$.
Now, the direct sum of $v\Ord^+$ and $v\Ord^-$ is $2I$ so we can compute a basis of $v\Ord^-$.
Using~\autoref{lem:reddim}, we just need to call the oracle on these two lattices.

The complexity is given by the fact that there are at most $2^k$ different $\zeta$.
\end{proof}
Usually $k=1$ so that the algorithm is efficient.
Then, either $\Ord^+=\Ord$ and nothing happens or $\dim \Ord^+=\frac{1}{2} \dim \Ord$ and the algorithm halve the dimension for a moderate cost.

It is easy to show that considering only $\zeta=1$ does not work.
For example, with $I=(1+i)\Z[i]$, we have $\barre I/I=-i\Z[i]=\Z[i]$ so that we may recover $1$ with Gentry-Szydlo's algorithm.
Then $I(1+1)=(2+2i)\Z[i]$ is not generated by an element of $\Z[i]^+=\Z$.

\subsection{Solving the norm equation}

\begin{problem}
We are given a CM order $\Ord$ and $r\in \Ord^+$.
We want to know all $x\in \Ord$ such that $x\barre{x}=r$.
\end{problem}

This is the norm equation problem, in the case of a CM order.
The following algorithm was introduced by Howgrave-Graham and Szydlo~\cite{howgrave2004method}.
See~\cite{simon2002solving} for a more general technique.
Remark that there may be many solutions since $(x\barre y)(\barre x y)=(xy)(\barre x \barre y)$, and possibly more than a polynomial.
Also, this case seems to show that in a way, we are factoring a number, and hence, discovering factors of its algebraic norm.
Hence, it is plausible that the following algorithm is close to optimal.

\begin{theorem}
Let $\Ord$ be a CM order over a number field.
Given the factorisation of the algebraic norm over $\Q$ of $r\in \Ord^+$, with $d$ the number of divisors and $r\in \Ord^+$, we can compute all $x\in \Ord$ such that $x\barre x=r$, in time polynomial in the size of the input, $d$ and calls to~\autoref{thm:gsalg}.
\end{theorem}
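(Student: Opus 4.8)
The plan is to reduce the norm equation to a bounded number of calls to the Gentry--Szydlo algorithm of~\autoref{thm:gsalg}. First observe that any solution $x$ is invertible: from $x\barre x=r$ we get $N_{E/\Q}(x)^2=N_{E/\Q}(r)\neq 0$, hence $x\in E^*$. So $(x)$ is an invertible ideal and, as $x\mapsto\barre x$ is a ring automorphism fixing $\Ord$, $\barre{(x)}=(\barre x)$ and therefore $(x)(\barre x)=(x\barre x)=(r)$. Thus to every solution $x$ is attached an invertible ideal $\Frk a=(x)$ satisfying $\Frk a\barre{\Frk a}=(r)$. Conversely, if $x,x'$ are solutions with $(x)=(x')=\Frk a$ then $x/x'\in\mu(\Ord)$ by~\autoref{thm:racineunite}, and every $w\in\mu(\Ord)$ satisfies $w\barre w=1$ by~\autoref{lem:cmunite}, so $xw$ is again a solution; hence the solutions with a fixed ideal $\Frk a$ form, if non-empty, a single coset $x\mu(\Ord)$.

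I would then proceed as follows. Using the given factorisation of $N_{E/\Q}(r)$ and the techniques of~\autoref{lem:calcord} (localise at each rational prime dividing $N_{E/\Q}(r)$, enlarging $\Ord$ to its $p$-maximal order and extending the conjugation), compute the prime ideal factorisation $(r)=\prod_i\Frk p_i^{a_i}$; the finitely many primes dividing the conductor of $\Ord$ yield only finitely many local divisors and can be treated by exhaustive enumeration. Since $(r)=(\barre r)$ the factorisation is conjugation-stable, so it consists of self-conjugate factors $\Frk p^{a}$ (with $\barre{\Frk p}=\Frk p$) and conjugate pairs of blocks $\Frk p^{a}\barre{\Frk p}^{a}$ (with $\barre{\Frk p}\neq\Frk p$). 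Enumerate all invertible $\Frk a$ with $\Frk a\barre{\Frk a}=(r)$: the exponent of a self-conjugate $\Frk p$ in $\Frk a$ is forced to equal $a/2$, so if any such $a$ is odd there is no solution whatsoever; and for each conjugate pair one freely puts $\Frk p^{b}\barre{\Frk p}^{a-b}$ into $\Frk a$ for $0\le b\le a$. The number of candidate ideals $\Frk a$ is the number of divisor ideals of the split part of $(r)$, hence at most $d$. For each such $\Frk a$, form the polarised ideal $(\Frk a,r)$ of the split CM order attached to $\Ord$ and run~\autoref{thm:gsalg}: either it returns some $v\in\Ord$ with $(v)=\Frk a$ and $v\barre v=r$, and we record the coset $v\mu(\Ord)$, or it certifies that $(\Frk a,r)$ is not a principal polarised ideal and $\Frk a$ is discarded. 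Together with the generators of $\mu(\Ord)$, the recorded $v$'s describe all solutions.

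By the first paragraph this enumeration is complete and each solution is named exactly once, so the output has size polynomial in $d$ and the input, and the whole procedure runs in time polynomial in the input, in $d$, and in $d$ many calls to~\autoref{thm:gsalg}. Two points legitimise the last step. First, the determinant condition in~\autoref{thm:gsalg} is automatic for genuine solutions: if $\Frk a=(v)$ with $v\barre v=r$, the quadratic form $y\mapsto\tr(N(y)/r)$ on $\Frk a$ is carried by $y\mapsto y/v$ to $z\mapsto\tr(z\barre z)$ on $\Ord$, whose determinant is $\Delta$; for spurious $\Frk a$ the algorithm simply reports no solution. Second, Gentry--Szydlo genuinely does more than test principality: even if $\Frk a=(v_0)$ is principal one only obtains $v_0\barre{v_0}=\varepsilon r$ for a unit $\varepsilon\in(\Ord^+)^*$, and deciding whether $\varepsilon$ can be absorbed into the generator is exactly equality in the polarised class group, which is what~\autoref{thm:gsalg} settles. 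The main obstacle is the factorisation: making the prime decomposition of $(r)$ and the enumeration of the $\Frk a$ rigorous for a non-Dedekind $\Ord$, so that one is certain the ideal of every solution occurs among the candidates.
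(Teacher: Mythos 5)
Your proposal is correct and takes essentially the same route as the paper: the paper likewise reduces the problem to enumerating the at most $d$ candidate ideals compatible with conjugation and running \autoref{thm:gsalg} on each, the only cosmetic difference being that it factors $r\Ord^+$ over the real suborder and uses that $\Q\otimes\Ord$ is quadratic (Galois) over $\Q\otimes\Ord^+$, so each prime of $\Ord^+$ is inert or splits as $\Frk{p}\barre{\Frk{p}}$ --- exactly your dichotomy of self-conjugate primes versus conjugate pairs, with the inert/self-conjugate valuations forced and one free parameter per split prime. The non-Dedekind obstacle you flag at the end is dispatched in the paper by the same device you sketch: it assumes without loss of generality, via the $p$-maximal order computations of Cohen (Section 6.1), that all primes involved are invertible in $\Ord$ and $\Ord^+$, after which the valuation argument and the factorisation of $r\Ord^+$ go through.
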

\begin{proof}
Without loss of generality, using~\cite[Section 6.1]{cohen2013course} we can assume that all primes involved are invertible in $\Ord$ and $\Ord^+$.
We may then find the factorisation of $r\Ord^+$ using~\cite[Sections 4.8.3 and 6.2]{cohen2013course}.
Then, $\Q \otimes \Ord$ is a Galois extension of $\Q \otimes \Ord^+$ so that a prime ideal $p$ of $\Ord^+$ is inert or factored into $\Frk{p}\Frk{\barre p}$.
If it is inert, then the $p$ valuation of $x\Ord$ must be half the $p$ valuation of $r\Ord^+$.
Else, the $\Frk{p}$ valuation of $x\Ord$ must be less than the $p$ valuation of $r\Ord^+$, and the $\Frk{\barre p}$ valuation is uniquely determined by it.
Therefore, there are at most $d$ $x\Ord$ distinct, and we can find all of them.
Using~\autoref{thm:gsalg}, we may then obtain $x$.
\end{proof}

\subsection{Lowering the exponent and applications}\label{subsec:exposant}

In many cases, the norm, traces and exponentiation are in fact smooth functions.
We can leverage this property by trying to run Gentry-Szydlo's algorithm with an approximate norm.
Indeed, what we need is that the last reduction in our powering algorithm (\autoref{thm:retrouvepuiss}) gives a meaningful result.
Of course, the quality of the approximation depends on the exponent used.
We show here that heuristically, we can use tiny exponents.
The idea comes from GGH~\cite[Section 8.6]{EC:GarGenHal13}.
Since all algorithms of this subsection use the following strong heuristic, we will also allow them to use randomness and GRH.

\begin{heuristic}
Let $F$ be a number field of $\Q$-dimension $n$ with exactly $m$ roots of unity.
Then, the expected value of the number of prime ideals above $p=am+1$ of inertia degree one is $\Omega(\frac{m\phi(m)}{\log(p)n})$ for a random $a$.
\end{heuristic}
\begin{proof}
Note that a prime $p$ having a prime ideal above it of inertia degree one must be of the form $am+1$.
The density of prime numbers among integers of this form is $\frac{m}{\phi(m)\log(p)}$.
$p$ always factors over $\Q[\zeta_m]$ into $\phi(m)$ ideals of inertia degree one.
The sum of $k$ times the density of prime ideals of $\Q[\zeta_m]$ having $k$ prime ideals above it of inertia degree one is, by Chebotarev theorem, is $r\phi(m)/n$ for some positive integer $r$, which is the average number of fixed point in the Galois group.
Assuming the two results occur somewhat randomly, and independently implies the heuristic.
\end{proof}

\begin{theorem}
Let $\Ord$ be a split CM order with no nilpotents beside zero.
If the heuristic assumption is true for each number fields, we can in polynomial time, given $(I,r)=(v,N(v))$, find some solution $v\in \Ord$ if it exists.
Furthermore, the exponent used in calls to~\autoref{thm:retrouvepuiss} is $e=h(\Ord)^{O(\log(h(\Ord)))}$.
Therefore, if $\tr(N(x)(\tilde{r}/r)^e)/\tr(N(x)) \in [1/2;2]$ for all $x\in \Ord-\{0\}$, we can find some solution $v\in \Ord$ to $(I,r)=(v,N(v))$ given $I$ and $\tilde{r}$.
\end{theorem}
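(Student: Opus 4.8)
The plan is to run the Gentry--Szydlo pipeline of~\autoref{thm:gsalg} essentially as before, the new points being that ``no nilpotents'' trivialises the nilpotent--recovery of~\autoref{sub:final}, that the heuristic is what lets us keep the exponents small, and that one must track how an approximate norm propagates through the LLL steps. Since $\Ord$ has no nonzero nilpotent, \autoref{thm:structalg} gives $E=E'=\prod_i F_i$, a product of number fields, so it suffices to recover the image of $v$ over each $F_i$ up to a root of unity and then glue by the Chinese remainder theorem and the root--of--unity bookkeeping of~\autoref{sub:final}. Fix one factor $F=F_i$, put $m=\#\mu(F)$ and $n_F=\dim_\Q F\le n$, and fix a corresponding maximal ideal $\Frk m$. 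For a prime ideal $\Frk p$ of inertia degree one above a prime $p$ with $v\notin\Frk p+\Frk m$, we use~\autoref{thm:reduc} and~\autoref{thm:power} to build a formal product evaluating to $v^{e}$ (modulo $\Frk m$, up to a small invertible integer), and recover the image of $v$ over $F$, up to an $m$-th root of unity, by~\autoref{lem:racine} --- whose Hensel lifting is precisely what lets us avoid the large $p$-power factor that~\autoref{thm:retrouvepuiss} would otherwise carry --- combined across several such primes by~\autoref{lem:pgcd}.

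The crux is to bound the exponent $e$ and the gcd $g$ fed to~\autoref{lem:pgcd}. Apply the heuristic to $F$: among primes $p\equiv 1\pmod m$ below a bound $B$, the expected number carrying a degree--one prime ideal is $\Omega\!\left(m\phi(m)/(n_F\log p)\right)$, so a proportion $\Omega(\phi(m)^2/n_F)$ of them --- call these \emph{useful}, after discarding the polynomially many $p$ whose prime above divides the relevant norm of $v\bmod\Frk m$ --- have such an ideal. We then select useful primes greedily: start from one $p_1$ with running value $g=p_1-1$, and as long as $g$ has a prime--power factor $\ell^{a}$ with $\ell^{a}\nmid m$ exceeding a fixed polynomial in $h(\Ord)$, pick a useful prime $p$ with $\ell^{a}\nmid p-1$ and replace $g$ by $\gcd(g,p-1)$. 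Such a $p$ exists as long as $B$ is polynomial in $h(\Ord)$: the heuristic, together with the estimate that useful primes in the progression $1\bmod\operatorname{lcm}(m,\ell^{a})$ have density smaller by only the factor $\phi(m)/\phi(\operatorname{lcm}(m,\ell^{a}))$ (the inertia--degree--one condition being governed by the splitting of $p$ in the Galois closure of $F$, hence essentially independent of the congruence mod $\ell^{a}$ for all but polynomially many $\ell^{a}$), leaves useful primes outside that progression. Since $g<p_1-1<B$, it has $O(\log h(\Ord))$ prime factors, so the process stops after $O(\log h(\Ord))$ steps with $g$ polynomial, and the exponent actually used --- essentially $\operatorname{lcm}_i(p_i-1)$, a combination of $O(\log h(\Ord))$ integers each of size $h(\Ord)^{O(1)}$ --- is $e=h(\Ord)^{O(\log h(\Ord))}$. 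Then~\autoref{thm:power} computes $v^{e}$ in time polynomial in $\log e=O(\log h(\Ord)\log\log h(\Ord))$, \autoref{lem:racine} and~\autoref{lem:pgcd} run in time polynomial in the now polynomial $g$, and the whole algorithm is polynomial time.

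Finally, the approximate norm. Ideal multiplication does not see the norm, so running the algorithm with $\tilde r$ in place of $r$ leaves $I$, all the prime ideals $\Frk p_i$, all exponents and all formal products literally unchanged; the only effect is that each LLL call inside~\autoref{thm:reduc} reduces the form $x\mapsto\tr(N(x)/\tilde r)$ instead of $x\mapsto\tr(N(x)/r)$, and that the final lattice recovery of~\autoref{lem:racine}/\autoref{lem:lllbabai} is applied to a perturbed target. Unwinding the recursion of~\autoref{thm:power}, the evaluated formal product differs from $v^{e}$ only by the multiplicative distortion $(\tilde r/r)^{e}$, i.e. the positive--definite form controlling that last recovery differs from the true one by the factor $\tr(N(x)(\tilde r/r)^{e})/\tr(N(x))$, which by hypothesis lies in $[1/2,2]$. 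But the vector to be recovered is, by the choice of precision, separated from every other lattice point by an exponential (roughly $2^{n}$) margin, so distorting all lengths by a factor at most $2$ --- if need be absorbed by enlarging the precision by one --- cannot change which lattice point is returned; the same candidate for $v$ is produced, and since $(v)=I$ is checkable, we output a genuine solution whenever one exists.

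The step I expect to be the main obstacle is the greedy exponent/gcd bound: one must be sure the supply of useful primes promised by the heuristic is dense enough in every arithmetic progression $1\bmod\operatorname{lcm}(m,\ell^{a})$ that matters to knock out every large prime--power factor of the running gcd while keeping all primes polynomially bounded and using only $O(\log h(\Ord))$ of them. The delicate regime is $\mu(F)$ small (say $m\le 2$), where the heuristic density $\phi(m)^{2}/n_F$ is weakest; there one uses that the offending moduli satisfy $\ell^{a}\le g<B$, so a polynomial $B$ still leaves enough useful primes outside each bad progression. Everything else is the routine adaptation of~\autoref{thm:field} and~\autoref{lem:cyclo}, with the effective Chebotarev input replaced by the heuristic.
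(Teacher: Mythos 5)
There is a genuine gap, and it is exactly at the paper's central trick. After the powering step, what you hold is a reduced triple $(J,a,s)$ with $[s]=c\,v^{e}$ for an \emph{unknown} invertible $c\in\Ord$ of height $h(\Ord)^{O(1)}$; before \autoref{lem:racine} or \autoref{lem:pgcd} can be used at all, you must eliminate $c$, i.e.\ decode it as the small representative of $[s]$ modulo an ideal in which $v^{e}\equiv 1$ and whose determinant is $2^{h(\Ord)^{\Theta(1)}}$ (the unknown has polynomially many bits, so the decoding lattice must have exponentially large determinant, as in \autoref{thm:retrouvepuiss} where this is achieved by $\Frk{p}^{k+1}$ at the cost of the huge factor $p^{k}$ in the exponent). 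Your scheme never provides such a modulus: with $O(\log h(\Ord))$ primes of size $h(\Ord)^{O(1)}$ and exponent $\operatorname{lcm}_i(p_i-1)$, the only ideal where $v^{e}\equiv1$ is the product of those few degree-one primes, of determinant $h(\Ord)^{O(\log h(\Ord))}$ --- quasi-polynomial, far too small to pin down $c$ --- and \autoref{lem:racine} cannot be invoked earlier because it requires $[s]=v^{e}$ exactly, not up to an unknown unit of the order. Your greedy gcd reduction addresses the (secondary) issue of ambiguity by roots of unity, not this bottleneck; it is essentially the \autoref{thm:field} argument, which is precisely the regime the theorem is trying to escape.

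The paper's proof takes a different route: it uses a \emph{single} highly composite exponent $e=m\prod_{i<k}p_i$ (primorial times the number of roots of unity), so that $e$ has at least $2^{k}$ divisors $d$ that are multiples of $m$, and every prime of the form $1+d$ with a degree-one ideal above it satisfies $p-1\mid e$ simultaneously. The heuristic then yields $\Omega(2^{k}m\phi(m)/(nk\log k))$ such primes, so with $k=O(\log h(\Ord))$ one gets $h(\Ord)^{O(1)}$ of them; since $v^{e}\equiv1$ modulo the product of all the corresponding prime ideals, that product has exponentially large determinant and Babai/\autoref{lem:lllbabai} recovers the small element, after which one extra degree-one prime and \autoref{lem:racine} give $v$. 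This simultaneous-divisibility structure is the missing idea in your proposal; without it the claimed exponent bound and, consequently, the approximate-norm conclusion (which relies on the decoding gap surviving a factor-$2$ distortion) do not follow. Your treatment of the nilpotent-free decomposition and of how the distortion $(\tilde r/r)^{e}$ propagates through the LLL calls is in the right spirit, but it sits on top of a recovery step that is not available in your setup.
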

\begin{proof}
Using~\autoref{thm:reduc}, we can assume that $h(v)=h(\Ord)^{O(1)}$.
Let $p_i$ be the sequence of prime numbers.
For some $k$, we let $e=m\prod_{i=0}^{k-1} p_i$.
We then proceed just like in~\autoref{thm:retrouvepuiss} with $(I,r,1)^{e}$.
Remark that there are at least $2^k$ divisors of $e$ of the form $km$, and $\log(e)=O(k\log k)$.
We therefore expect $\Omega(\frac{2^km\phi(m)}{nk\log k})$ of the $1+d$ with $d$ divisor of $e$ to be prime with a prime ideal $\Frk{p}_{a,j}$ of inertia degree one above it.
Only $O(\log(h(\Ord)))$ of these can divide $v$.
The inverse of the returned ideal is generated by a small integer in $1+\prod_j \Frk{p}_{j}$.
Hence, if the determinant of $\prod_j \Frk{p}_{a_j}$ is exponential in $h(\Ord)^{O(1)}$, we can proceed.
Then, using another prime of inertia degree one, we can finish in polynomial time with~\autoref{lem:racine}.
Our condition is then $\Omega(\frac{2^km\phi(m)}{n})=h(\Ord)^{O(1)}$, so that some $k=O(\log(h(\Ord)))$ works.
\end{proof}
In case our number field is $\Z[\zeta_n]$, we need only the product of the primes to be above the LLL approximation factor, which in practice is $\approx 1.022^{\phi(n)}$~\cite{EC:GamNgu08}.
For $n=2^{16}$, we can use $k=8$ so that $e=635678883840$ and the product is around $2^{1048}$ which is larger than the required $2^{1029}$.
This implies that only $\approx 50$ lattice reductions are needed.
The sum of $2/\ln(d)$ for all $n\mid d\mid e$ is $\approx 27$, and there are 38 $d+1$ which are primes.
Taking $k=21$ leads to 168076 primes instead of the predicted 98361, and the product has more than 10 million bits while $e<2^{112}$.

This theorem can be used to recover a unit $u$ from its approximation $\tilde{u}$ by calling it with $(\Ord,\tilde{u} \barre{\tilde{u}})$ within the corresponding split CM order.
If $\Ord$ is in a number field, $\R \otimes \Ord \simeq \R^r\C^s$ and by applying some complex logarithm, the image of $\Ord^*$ is a lattice of dimension $r+s-1$.
Now, the precision required in this basis is simpler to express : the error should be at most $n^{-O(\log(\log n))}$ on each coordinate of the image of $\tilde{u} \barre{\tilde{u}}$.

The following theorem can be seen as a way to compute a greatest common divisor.
\begin{theorem}
Let $\Dis$ be a samplable distribution over the number field $E$ such that for all embedding $\psi:\Ord\to \C$, $\log(|\psi(a)|)$ has standard deviation at most $\sigma$.
Given $(v)$ and $k$ samples $s_i=va_i$ where the $a_i$ are independent and sampled from $\Dis$, if the heuristic holds, we can recover $v\mu(\Ord)$ in polynomial time if $k\geq \sigma^2 n^{O(\log \log n)}$ except with negligible probability.
\end{theorem}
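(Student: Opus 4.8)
The plan is to reduce to the exponent-lowering Gentry--Szydlo algorithm of the previous theorem, run inside the split CM order attached to the number field $E$ by~\autoref{thm:structalg}, so that $N(x)=(|\psi(x)|^2)_\psi$ runs over the archimedean places and $\tr(N(x))$ is the standard quadratic form. That algorithm recovers a generator of a principal ideal $I$ from $I$ together with any approximation $\tilde r$ of the true norm with $\tr(N(x)(\tilde r/r)^e)/\tr(N(x))\in[1/2,2]$ for all $x\neq 0$, where $e=h(\Ord)^{O(\log h(\Ord))}$ (which is $n^{O(\log\log n)}$ in the regime of interest); concretely it is enough that $|\log\tilde r_\psi-\log N(v)_\psi|\le(\log 2)/e$ at every place $\psi$. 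So the whole problem becomes: from the ideal $(v)$ and the samples $s_i=va_i$, reconstruct $N(v)$ to archimedean $\log$-precision $1/e=n^{-O(\log\log n)}$.

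First I would record the ``greatest common divisor'' step. Since the $a_i$ are independent, a fixed maximal ideal $\Frk p$ divides all of $a_1,\dots,a_k$ with probability at most $N(\Frk p)^{-k}$, and $\sum_{\Frk p}N(\Frk p)^{-k}$ is negligible for $k\ge 2$; hence with overwhelming probability $(a_1,\dots,a_k)=\Ord$, so $\sum_i(s_i)=(v)$ --- this justifies the gcd reading --- and, from the given ideal $(v)$, we may read off $|N_{E/\Q}(a_i)|=|N_{E/\Q}(s_i)|/N((v))$ for each $i$.

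Next, the norm reconstruction. Take $\tilde r_\psi:=\big(\prod_{i=1}^k|\psi(s_i)|^2\big)^{1/k}$, the per-place geometric mean of the $N(s_i)$. Then $\log\tilde r_\psi-\log N(v)_\psi=\tfrac2k\sum_{i=1}^k\log|\psi(a_i)|$, a sum of independent variables of standard deviation $\le\sigma$, so it has expectation $2\mu_\psi:=2\,\E\log|\psi(a_1)|$ and standard deviation $\le 2\sigma/\sqrt k$; by a union bound over the $n$ places (using concentration of sums of bounded-variance variables, after a mild sub-Gaussian assumption on $\Dis$ or by taking slightly more samples) the fluctuation term is $O(\sigma\sqrt{\log n}/\sqrt k)$ with overwhelming probability. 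Under the normalization of $\Dis$ that this heuristic is meant to be applied with --- $\Dis$ centred near a root of unity, so that $\|(\mu_\psi)_\psi\|$ is itself below the target precision; in general one estimates and subtracts the mean term using $\sum_\psi(\deg\psi)\,\mu_\psi=\tfrac1k\sum_i\log|N_{E/\Q}(a_i)|$ together with a closest-vector computation in the unit log-lattice of $(v)$ --- the residual error at every place is $O(\sigma\sqrt{\log n}/\sqrt k)$. Demanding this be at most $(\log 2)/(2e)$ with $e=n^{O(\log\log n)}$ is exactly the hypothesis $k\ge\sigma^2 n^{O(\log\log n)}$, and one finally rescales $\tilde r$ by the single known scalar placing the polarized ideal $((v),\tilde r)$ at determinant $\Delta$ (as required for~\autoref{thm:reduc} and hence for the powering to stay polynomial).

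Then feed $((v),\tilde r)$ to the exponent-lowering algorithm of the previous theorem, using GRH or randomness so that --- under the heuristic --- for the exponent it uses there are enough primes carrying a prime of inertia degree one for~\autoref{thm:retrouvepuiss} and~\autoref{lem:racine} to run in polynomial time; this returns a generator $v'$ of $(v)$, and since $\tilde r$ approximates $N(v)$ within the stated precision, $v'\in v\,\mu(\Ord)$, so we output $v'\mu(\Ord)$. The step I expect to be the main obstacle is precisely the error analysis of the norm reconstruction: the archimedean sizes of the $s_i$ only ever determine $N(v)$ up to the multiplicative contribution of the $a_i$, hence pin $v$ down only modulo the full unit group, and one must argue carefully that averaging over $k\ge\sigma^2 n^{O(\log\log n)}$ independent samples drives the fluctuation below $1/e$ while the mean term is controlled (by the centring of $\Dis$, the known ideal $(v)$, and --- if one wants to allow a residual bounded unit --- a short-generator enumeration of the small quotient ideal via~\autoref{lem:idealsvp} and~\autoref{lem:lllbabai}), so that the reconstructed $\tilde r$ genuinely meets the $[1/2,2]$ hypothesis of the exponent-lowering theorem.
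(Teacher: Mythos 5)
There is a genuine gap, and it is exactly at the point you flag as ``the main obstacle'': the treatment of the mean term. Your estimator (per-place geometric mean of the $|\psi(s_i)|^2$) concentrates around $\log N(v)_\psi+2\mu_\psi$ with $\mu_\psi=\E\log|\psi(a)|$, and nothing in the hypotheses forces $(\mu_\psi)_\psi$ to be small; your two proposed fixes do not close this. Assuming $\Dis$ is ``centred near a root of unity'' is an extra hypothesis not present in the statement. The ``general'' fix --- recovering only the trace part $\sum_\psi(\deg\psi)\mu_\psi$ from algebraic norms and then doing a closest-vector computation in the unit log-lattice of $(v)$ --- is both unavailable and insufficient: the unit lattice is not given and is not computable in polynomial time (the paper itself notes units need not have polynomial size, and BDD in the unit lattice is an \emph{application} of these techniques, so invoking it here is circular), and more fundamentally the data $\bigl((v),s_1,\dots,s_k\bigr)$ is invariant under replacing $v$ by $vu$ and $\Dis$ by $u^{-1}\Dis$ for any unit $u$, so without actually using the distribution the answer $v\mu(\Ord)$ is not even determined and no lattice decoding of a possibly large bias vector can rescue it.

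The missing idea is to use the stated \emph{samplability} of $\Dis$: the algorithm draws fresh samples $a'_j$ from $\Dis$ and estimates the per-embedding mean of $a\barre a$ empirically. The paper's proof sets $\tilde r$ to be the average of the $s_i\barre{s_i}$ divided by the average of $a'_j\barre{a'_j}$ over fresh samples (inside the split CM order attached to a fixed embedding, with $N(x)=x\barre x$), so the bias cancels identically in expectation; Chebyshev then gives, with $k\geq\sigma^2 n^{O(\log\log n)}$ samples, $\psi(\tilde r)/\psi(N(v))\in[1-n^{-O(\log\log n)},1+n^{-O(\log\log n)}]$ at every embedding with constant probability, and since $\tr(N(x))=\sum_\psi|\psi(x)|^2$ this meets the accuracy hypothesis of the exponent-lowering theorem, which is then invoked exactly as you describe. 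So your reduction to that theorem, the target precision $1/e$ with $e=n^{O(\log\log n)}$, and the sample-count arithmetic are all in line with the paper; what is wrong is the mechanism for eliminating $\E\log|\psi(a)|$, which must come from sampling $\Dis$ rather than from a centring assumption or a unit-lattice CVP. (Your opening gcd remark is harmless but unnecessary, since $(v)$ is given; the paper only mentions it as an aside.)
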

\begin{proof}
We fix an embedding of $E$ into $\C$ and then we can define the split CM order using the norm $x\mapsto x\barre{x}$.
We compute $\tilde{r}$ the average of $s_i\barre{s_i}$ divided by the average of $a_i \barre{a_i}$ computed by sampling from $\Dis$.
Using Chebyshev inequality, we can prove that if we use $\sigma^2 n^{O(\log \log n)}$ samples, then for all embedding $\psi:E\mapsto \C$, $$\psi(\tilde{r})/|\psi(v)| \in [1-n^{-O(\log \log n)};1+n^{-O(\log \log n)}]$$
with probability at least $1/2$.
Since $\tr(N(x))=\sum_{\psi} |\psi(x)|^2$, we can use the previous theorem.
\end{proof}
The original Gentry-Szydlo attack on NTRU signatures~\cite{EC:GenSzy02} is essentially an application of this theorem.
It improves on it by remarking that if $\Ord$ is a CM order we can compute $(v\barre v)$, reduce this basis, and use it to decode $\tilde{r}$ and recover $v\barre v$.
Another possibility which works for any order is to decode $\tilde{r}$ over a basis of $\Norm$ by truncating the coefficents, which has the advantage of being polynomial time.
It gives a proven algorithm which is polynomial, and needs a number of samples which is about the maximum coefficient of $v\barre v$.

Note that taking the ideal generated by all $s_i$ should get $(v)$ for most applications so that the hardest condition to achieve is the possibility of sampling from $\Dis$.

\begin{theorem}
Let $\Ord$ be in a number field, and for some embedding in $\C$, we define $N(x)=x\barre x$.

Given $(I,r)$ a polarized ideal of determinant $\Delta$, if the heuristic holds, we can determine if there exists a $v\in I$ such that for all embedding $\psi$ into $\C$, we have $|\psi(v)|^2/\psi(r) \leq 1+1/e$ and find it, for some $e=\log(h(\Ord))^{O(\log(h(\Ord)))}$.
\end{theorem}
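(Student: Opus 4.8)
The plan is to observe that, once $1/e$ is small enough, any $v\in I$ with $|\psi(v)|^2/\psi(r)\le1+1/e$ for every embedding $\psi$ is automatically an \emph{exact} generator of $I$ whose norm $N(v)$ is pinned extremely close to $r$, and then to solve the problem by invoking the approximate-norm algorithm of~\autoref{subsec:exposant}, run with $r$ in the role of the approximation $\tilde{r}$. First I would record the determinant bookkeeping. Identifying the monoid element $N(v)$ with the tuple of archimedean values $(|\psi(v)|^2)_\psi$, so that $\tr(N(x))=\sum_\psi|\psi(x)|^2$ and $\prod_\psi|\psi(v)|^2=N((v))^2$ for invertible $v$, the quadratic form $x\mapsto\tr(N(x)/r)$ on $I$ is obtained from $x\mapsto\tr(N(x))$ by dividing the $\psi$-coordinate of Minkowski space by $\psi(r)$; hence its determinant equals $N(I)^2\cdot\Delta\cdot\prod_\psi\psi(r)^{-1}$, the factor $N(I)^2$ accounting for $I$ versus $\Ord$ and the $N(1)$ in $\Delta=\det(\Ord,N(1))$ being the all-ones tuple. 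Equating this with $\Delta$ yields the identity $\prod_\psi\psi(r)=N(I)^2$.

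With this in hand I would upgrade an approximate solution to an exact one. If $v\in I$ satisfies $|\psi(v)|^2/\psi(r)\le1+1/e$ for all $\psi$, then $(v)\subseteq I$ makes $[I:(v)]=N((v))/N(I)$ a positive integer, so the identity above gives $\prod_\psi\bigl(|\psi(v)|^2/\psi(r)\bigr)=[I:(v)]^2$, a perfect square that is at most $(1+1/e)^n$. Since our final $e$ will be far larger than $n$, this is strictly below $2$, hence equal to $1$; thus $(v)=I$, and moreover $\prod_\psi\bigl(\psi(N(v))/\psi(r)\bigr)=1$ with each of the $n$ factors at most $1+1/e$, which forces every factor into $[1-n/e,\,1+1/e]$. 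Consequently $|\psi(r)/\psi(N(v))-1|\le2n/e$ for all $\psi$.

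Finally I would reduce to the machinery of~\autoref{subsec:exposant}. That algorithm, given an ideal and an approximation $\tilde{r}$ of a norm $r_0$ for which an exact solution exists, succeeds whenever $\tr\bigl(N(x)(\tilde{r}/r_0)^{E}\bigr)/\tr(N(x))\in[1/2,2]$ for all $x\neq0$ — equivalently $(\psi(\tilde{r})/\psi(r_0))^{E}\in[1/2,2]$ for every $\psi$ — where the exponent $E$ fed to the powering of~\autoref{thm:retrouvepuiss} is a product of $O(\log h(\Ord))$ small primes, so that $\log E=O(\log h(\Ord)\log\log h(\Ord))$ and $E=\log(h(\Ord))^{O(\log h(\Ord))}$. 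I would therefore fix $e=C\,n\,E$ for a large enough absolute constant $C$, still $\log(h(\Ord))^{O(\log h(\Ord))}$ since $n\le h(\Ord)$, run that algorithm on $(I,r)$, and if it returns some $w$ test whether $|\psi(w)|^2\le(1+1/e)\psi(r)$ for all $\psi$; output $w$ if the test passes, and otherwise (in particular if the algorithm fails) report that no such $v$ exists. In the positive direction this is correct: if a valid $v$ exists then the previous paragraph gives $(v)=I$ and $|\psi(r)/\psi(N(v))-1|\le2n/e=O(1/E)$, hence $(\psi(r)/\psi(N(v)))^{E}\in[1/2,2]$, so the hypothesis is met with $\tilde{r}=r$ and $r_0=N(v)$ and the algorithm returns an invertible $w\in\Ord$ with $(w)=I$ and $N(w)=N(v)$; by~\autoref{thm:racineunite}, $w=\zeta v$ with $\zeta\in\mu(\Ord)$, and since $|\psi(\zeta)|=1$ the returned $w$ passes the test. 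In the negative direction correctness is immediate because the test is an explicit certificate, and the running time is that of the algorithm of~\autoref{subsec:exposant}, hence polynomial.

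The step I expect to be the real obstacle is the determinant identity $\prod_\psi\psi(r)=N(I)^2$ together with the argument that the perfect square $[I:(v)]^2$, being strictly below $2$, must equal $1$: one has to fix once and for all the normalisations of $\det(\Ord,N(1))$ and of the monoid element $N(v)$ — in particular how complex places are counted in $\Norm$ and in the Minkowski volume — so that the identity holds exactly and $[I:(v)]$ is genuinely a rational integer. Once that is nailed down, everything else reduces mechanically to the approximate-norm algorithm already established in~\autoref{subsec:exposant}.
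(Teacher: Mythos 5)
Your argument is correct, but it is packaged differently from the paper's own proof. The paper proves the statement directly: it computes $(I,r,1)^e$ in the compactification with $e=m\prod_{i<k}p_i$, notes that $\tr(N(v^e/[s])/p)=\sum_\psi(|\psi(v)|^2/\psi(r))^e\leq 3n$, forces $v^e\equiv 1$ modulo a product $K$ of a random half of the split prime ideals supplied by the heuristic (chosen so that $K\cap I$ has no short vectors), recovers $v^e$ as a formal product by running Babai (\autoref{lem:lllbabai}) on the coset $J\cap(1/[s]+K)$ for the form $x\mapsto\tr(N(x)/p)$, and finally extracts $v$ with \autoref{lem:racine}; note that this route never needs $v$ to generate $I$, so it is genuinely a statement about recovering an almost-shortest vector. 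You instead first use the determinant normalisation to show $\prod_\psi\psi(r)=N(I)^2$, hence $[I:(v)]^2=\prod_\psi|\psi(v)|^2/\psi(r)\leq(1+1/e)^n<2$, so $(v)=I$ and $N(v)$ is embedding-wise within $1\pm O(n/e)$ of $r$, and then invoke the approximate-norm theorem of \autoref{subsec:exposant} with $\tilde r=r$, together with \autoref{thm:racineunite} and an explicit final verification; this is essentially the observation the paper relegates to the remark after its proof (``if $r=N(v)$ then $\det((v))/\det(I)<2$ so $I=(v)$''), promoted to the engine of a clean black-box reduction. What each buys: your reduction is shorter and reuses an already-stated result (and your certificate check makes the negative answer trivially sound), while the paper's direct argument exposes the concrete coset-CVP mechanism, avoids relying on the only-sketched ``Therefore'' clause of the earlier theorem, and covers the non-principal reading of the statement without the determinant detour. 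Minor points you should tighten: per-embedding containment of $(\psi(\tilde r)/\psi(r_0))^{E}$ in $[1/2,2]$ \emph{implies} the trace condition but is not equivalent to it (only the implication is needed); you should say explicitly that \autoref{thm:reduc} (or a scaling of $(I,r)$ by some $x$ and $N(x)$, which leaves the ratios $|\psi(v)|^2/\psi(r)$ unchanged) is used so that the black box is applied to an integral, reduced instance with $v\in\Ord$; and you need an a priori upper bound on the exponent $E$ used internally in order to fix $e=CnE$, which is available since $E$ depends only on $h(\Ord)$, $m$ and $n$ and satisfies $\log E=O(\log h(\Ord)\log\log h(\Ord))$, consistent with the claimed $e=\log(h(\Ord))^{O(\log h(\Ord))}$.
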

\begin{proof}
We use $e=m\prod_{i=0}^{k-1} p_i$ and compute $(I,r,1)^e=(J,p,s)$.
Now, $\det(I^e)=\det(I)^e$ using~\cite[Proposition 4.6.8]{cohen2013course}, so that the determinant of $(I,r)^e$ is also $\Delta$.
Without loss of generality, we can assume $h(I)=h(\Ord)^{O(1)}$.
We select a random subset of half the prime ideals of inertia degree one above $p$ with $p-1 \mid e$, so that with high probability $v^e=1 \Mod K$ where $K \cap I$ has no non-zero vector shorter than $2^n3n$.
We deduce that $v^e/[s] \in J \cap (1/[s]+K)$ and $\tr(N(v^e/[s])/p)=\tr(N(v^e)/r^e)\leq 3n$.
Therefore, we can apply Babai's algorithm~\autoref{lem:lllbabai} on $J \cap (1/[s]+K)$ equipped with the norm $x\mapsto \tr(N(x)/p)$ and recover $v^e$ as a formal product.
We now find another small prime ideal and using~\autoref{lem:racine}, we recover $v$.
\end{proof}
Note that this implies that finding the shortest vector (for some norm) of invertible ideals is easy if it is almost as small as it can be ($1$).
If $r=N(v)$, then $\det((v))/\det(I)<2$ so that $I=(v)$ which is the standard case.

\section*{Acknowledgement}

We thank Pierre-Alain Fouque for his comments allowing to improve a draft of this paper.

\bibliography{cryptobib/crypto,ref}
\bibliographystyle{unsrt}

\appendix

\section{Exploiting roots of unity}

We assume here that $E$ is a number field with $m$ roots of unity in $\Ord$.
We show how their presence allows to accelerate standard lattice algorithms when the geometric norm is $||x||^2=\tr(N(x))=\tr(x\barre x)$.
We define $P$ as $X^{m/2}+1$ if $4\mid m$, $X^m-1$ else ; and let $m'$ be the degree of $P$.
Remark that there is a natural bijection between the roots of unity of $\Ord$ and $\pm X^i$ modulo $P$.

\begin{theorem}
For all $x\in I$, with $I$ an ideal of $\Ord$, we have $x\omega \in I$ and $||x||=||x\omega||$ for any $\omega \in \mu(\Ord)$.
\end{theorem}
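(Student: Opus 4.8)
The plan is to derive both assertions directly from the definitions, the only substantive ingredient being \autoref{lem:cmunite}. First, the membership $x\omega \in I$ is purely formal: by definition an ideal $I$ of $\Ord$ is an $\Ord$-submodule of $E$, and $\omega \in \mu(\Ord) \subseteq \Ord$, so $I$ is stable under multiplication by $\omega$; hence $x\omega \in I$ whenever $x \in I$. Nothing needs to be computed here.

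For the equality of norms I would invoke the running convention of this appendix, namely that the geometric norm is $||x||^2 = \tr(N(x)) = \tr(x\barre x)$, so that $\Ord$ together with the involution $x \mapsto \barre x$ is a CM order. Expanding, $||x\omega||^2 = \tr\!\big(x\omega\,\barre{x\omega}\big)$, and since conjugation is a ring automorphism this equals $\tr\!\big((x\barre x)(\omega\barre\omega)\big)$. The key point is that $\omega\barre\omega = 1$: this is exactly \autoref{lem:cmunite} applied to the root of unity $\omega$, whose hypotheses are met here because \autoref{lem:cmnilpot} guarantees $\Ord$ has no nonzero nilpotents. Substituting gives $||x\omega||^2 = \tr(x\barre x) = ||x||^2$.

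I do not expect any genuine obstacle: the statement is a one-line consequence of $I$ being an ideal and of $\omega$ being a norm-one element, the latter being precisely the content of the earlier lemma. The only care needed is to observe that the appendix's hypotheses ($N(x) = x\barre x$ with $\tr(x\barre x)$ positive definite) indeed place us in the CM setting where that lemma applies, after which both claims are immediate.
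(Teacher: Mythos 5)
Your proof is correct and follows the same route as the paper: membership in $I$ is immediate from $I$ being an $\Ord$-module, and the norm equality comes from $N(\omega)=\omega\barre\omega=1$, which the paper invokes directly and which you justify (a bit more explicitly) via \autoref{lem:cmunite}. No issues.
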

\begin{proof}
The first property stems from $I$ being an ideal, the second from $N(\omega)=\omega\barre \omega=1$.
\end{proof}
This implies that $I$ has at least $|\mu(\Ord)|$ non-zero shortest vectors, making the extreme pruning algorithm~\cite{EC:GamNguReg10} about $|\mu(\Ord)|/2$ times faster than on a "random" lattice, since $x\mu(\Ord)$ is somewhat uniform over the sphere.

Also, sieving algorithms (see~\cite{hanrot2011algorithms,laarhovensearch}\footnote{Beware that the litterature often uses different way for expressing multiplication, multiplication by a root of unity or conjugation, such as (nega)cyclic matrices, rotation and reflex polynomial.} for surveys) can take advantage of this by reducing the size of the list of vectors by a factor of $|\mu(\Ord)|/2$ for the same reason.
A recent algorithm~\cite{EPRINT:BecLaa15} works by introducing a hash function $h$ which for a vector returns the index of the largest coordinate, as well as its sign.
It is then randomized to $h_a(x)=h(ax)$ for a Gaussian $a$ to produce a locality-sensitive hash function $H$ by concatenating outputs of several $h_a$.

We can improve on this by embedding $I$ and $E$ in $\Q[X]/(P(X))[Y]/(Q(Y))$ for some irreducible polynomial $Q\in \Q[\zeta_m][Y]$ of degree $n/\phi(m)$, so that they have the same geometry.
We can now choose $h_a(x)=h(ax)$ and observe that $h_a(x\omega)$ for $\omega$ a root of unity is simply a rotation of $h_a$.
Hence, we can build $H$ as the concatenation of $h_{a_0},h_{a_1},\dots,h_{a_k}$ where the output of $h_{a_0}$ is forced to be on a positive monomial of the form $Y^i$ by considering the unique root of unity which allows this.
The algorithm then has to compute the shortest element among $x+\omega y$ for all $\omega\in \mu(E)$.
We now show that this can be computed efficiently.

\begin{theorem}
Given $x,y\in \Q[X]/(P(X))[Y]/(Q(Y))$ with $P$ and $Q$ defined as above, we can compute $\argmin_{\omega \in \mu(\Ord)} ||x+\omega y||$ in $O((n/\phi(m))^2m\log m)$ arithmetic operations.
\end{theorem}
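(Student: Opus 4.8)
The plan is to turn this minimisation over the $m$-element group $\mu(\Ord)$ into a single linear-time scan of one explicitly computed vector. First I would note that every $\omega\in\mu(\Ord)$ satisfies $\omega\barre\omega=1$, so multiplication by $\omega$ is an isometry and, writing $z:=x\barre y\in A$ with $A:=\Q[X]/(P(X))[Y]/(Q(Y))$,
\[
\|x+\omega y\|^2=\|x\|^2+\|y\|^2+2\,\tr\!\big(x\,\barre{\omega y}\big)=\|x\|^2+\|y\|^2+2\,\tr\!\big(\omega^{-1}z\big).
\]
Hence $\argmin_\omega\|x+\omega y\|$ is the $\omega$ minimising $\tr(\omega^{-1}z)$, and the constants $\|x\|^2,\|y\|^2$ never need to be computed. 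Computing $z$ needs one conjugation (a signed coordinate permutation, $\bigo(n)$ operations) and one product in $A$. Setting $D:=\deg Q=n/\phi(m)$ and $R:=\Q[X]/(P)$, the schoolbook product of two degree-$<D$ polynomials over $R$ reduced modulo the monic $Q$ uses $\bigo(D^2)$ arithmetic operations in $R$, each a cyclic or negacyclic convolution of length $\deg P\le m$, hence $\bigo(m\log m)$ operations; this already accounts for the claimed bound $\bigo((n/\phi(m))^2 m\log m)$, and the remaining steps must stay inside it.

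Next I would push $\tr$ through the tower $\Q\subset R\subset A$, using transitivity of the trace for free extensions: $\tr_{A/\Q}=\tr_{R/\Q}\circ\tr_{A/R}$. Since $\omega^{-1}\in R$ and $\tr_{A/R}$ is $R$-linear, $\tr_{A/\Q}(\omega^{-1}z)=\tr_{R/\Q}(\omega^{-1}u)$ where $u:=\tr_{A/R}(z)\in R$. To obtain $u$, I would expand $z=\sum_{i=0}^{D-1}z_iY^i$, so that $u=\sum_i z_i\,\tr_{A/R}(Y^i)$; the power sums $\tr_{A/R}(Y^i)$ of the roots of $Q$ follow from the coefficients of $Q$ by Newton's identities in $\bigo(D^2)$ operations in $R$, i.e. $\bigo(D^2 m\log m)$, and assembling $u$ costs $\bigo(D)$ further $R$-multiplications --- all within budget.

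Finally I would exploit that in the basis $1,X,\dots,X^{\deg P-1}$ of $R$, multiplication by $X^k$ is a signed cyclic shift, so $\tr_{R/\Q}$ of an element equals $\deg P$ times its constant coordinate; and that, modulo $P$, the $m$ elements of $\mu(\Ord)$ are exactly the monomials $X^j$, $0\le j<m$ (because $-1\equiv X^{m/2}$). Therefore, as $\omega=X^j$ ranges over $\mu(\Ord)$, the value $\tr_{R/\Q}(\omega^{-1}u)$ is, up to the fixed factor $\deg P$ and (when $P=X^{m/2}+1$) a sign, just a coordinate of $u$; so the minimiser is found by an $\bigo(m)$ scan of the coefficient vector of $u$, with a little bookkeeping to translate the winning index back into the exponent $j$ and sign. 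Summing the costs gives $\bigo((n/\phi(m))^2 m\log m)$. The only delicate point is that $R=\Q[X]/(P)$ is not a field but a product of cyclotomic fields, so one must check that the trace-transitivity identity and the ``constant-coordinate'' description of $\tr_{R/\Q}$ remain valid and that the conjugation on $A$ is the intended extension of the conjugation on $\Ord$; everything else is a routine cost accounting.
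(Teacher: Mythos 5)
Your core computational trick is the same one the paper uses: expand $\|x+\omega y\|^2=\|x\|^2+\|y\|^2+2\pr{x}{\omega y}$, note that the pairings against all rotations $X^j$ can be read off (up to the factor $\deg P$ and a sign) as coefficients of a single product in $R=\Q[X]/(P)$ computed by FFT in $O(m\log m)$, and pay $O((n/\phi(m))^2)$ such $R$-products. The genuine gap is in how you identify the bilinear form, and it is exactly the point you defer in your last sentence. The norm in the statement is the one transported from $E$ so that the embedding ``has the same geometry''; the paper encodes it as $\pr{x}{y}=\sum_{i,j}G_{i,j}\pr{x_i}{y_j}$ with Gram data $G$ over the $Y$-coordinates and computes with $G$ directly. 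You instead take the norm to be the natural trace form $\tr_{A/\Q}(x\barre x)$ of the whole algebra $A=R[Y]/(Q)$, which is what licenses both the identity $\pr{x}{\omega y}=\tr_{A/\Q}(\omega^{-1}x\barre y)$ and the trace-transitivity/Newton's-identities computation of $\tr_{A/R}$. These two forms cannot coincide in general: $\dim_\Q A=\deg(P)\cdot n/\phi(m)>n$ whenever $P\neq\Phi_m$ (i.e.\ unless $m$ is a power of two; otherwise $R$ is a product of several cyclotomic fields), so the transported norm is not the trace form of $A$, and your $u=\tr_{A/R}(x\barre y)$ weights the pairs $(i,j)$ by $\tr_{A/R}(Y^i\barre{Y^j})$ rather than by the actual Gram data of the embedded lattice. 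Relatedly, the conjugation on $A$ is a signed coordinate permutation only on the $X$-part; its action on $Y$ (the image of the conjugate of the generator of $E$ over $\Q[\zeta_m]$) is some polynomial in $Y$, so ``one conjugation in $O(n)$ operations'' presupposes an involution on $A$ that you never pin down (the cost is harmless, the definition is not).

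The repair is small and essentially collapses your argument onto the paper's: keep the quadratic expansion and the final coefficient scan, but set $u:=\sum_{i,j}G_{i,j}\,x_i\barre{y_j}\in R$, folding the given Gram data into the $O((n/\phi(m))^2)$ FFT products; then $\pr{x}{\omega y}=\tr_{R/\Q}(\omega^{-1}u)$ because $\barre{\omega}=\omega^{-1}$ and each $\omega\in\mu(\Ord)$ lies in $R$, and the $O(m)$ scan of the coefficients of $u$ finishes as you describe. With this change the Newton's-identities step is unnecessary (the relevant ``power sums'' are part of the norm's description, not something to recompute from $Q$), and your cost accounting stands. As written, your proof is correct in the power-of-two cyclotomic case, where $P=\Phi_m$, $R=\Q[\zeta_m]$ and the transported norm really is the trace form of $A$.
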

\begin{proof}
We denote $x=\sum_{i=0}^{n/\phi(m)-1} x_i Y^i$ for any $x$.
Now $\pr{x}{y}=\sum_{i,j} G_{i,j}\pr{x_i}{y_j}$ for some Gram matrix $G$ with the scalar product $\pr{x_i}{y_j}$ corresponding to the norm over the CM order.
Hence, we only need to show how to compute $\pr{a}{bX^i}$ for $a,b\in \Q[X]/(P(X))$ and all $i$ in $O(m\log m)$ operations.

Since the norm over $\Q[X]/(P(X))$ is $x\mapsto \tr(x\barre x)$, we have $\pr{a}{b\omega}=\tr(a\barre{b \omega})$.
We can therefore compute $a\barre{b}$ with a Fourier transform in time $O(m \log m)$.
Finally, $\tr(aX^{-i})$ is exactly the $i$-th coefficient of $a$.
\end{proof}
This implies an overall speed-up of $m^{1.43+o(1)}$, while~\cite{EPRINT:BecLaa15} gives a speed-up of only $O(m)$, and the ideals were required to be over a ring of the form $X^m\pm 1$.
The use of Fourier transform for accelerating geometric computations was first introduced by~\cite{EPRINT:BosNaePol14}.

\if 0
\section{Lifting with roots of units}\label{app:racines}

In this section we let $A$ be an order, $\Frk{m}_i$ its maximal ideals, $A'=A \cap E'$, and $B=\prod_i A'/(\Frk{m} \cap A')$.
For some prime $p$, we let $C=A'[1/p]\cap B$.
For a set $W$ of maximal ideals of $E$, we define $C_W$ to be the image of $C$ in $\prod_{\Frk{m}\in W} A'/(\Frk{m} \cap A')$

\begin{theorem}\label{thm:decomposeordre}
We can find in polynomial time $C$ and a partition $W_i$ of the maximal ideals such that
\begin{itemize}
\item $C=\prod_i C_{W_i}$
\item $\mu(C_{W_i})_p$ is cyclic
\item for all non empty $W\subset W_i$, the map $\mu(C_{W_i})_p \to \mu(C_W)_p$ is injective.
\end{itemize}
\end{theorem}
\begin{proof}
See~\cite[Algorithm 10.3]{lenstra2015roots} for the computation of $W_i$ and~\cite[Proposition 10.6]{lenstra2015roots} for its properties.
\end{proof}

\begin{theorem}
Given an order $A$, and some $v\in \mu(\Q \tens A)A$, we can determine in polynomial time if there exists $\omega \in \mu(\Q \tens A)^p$ such that $v\omega \in A$.
\end{theorem}
\begin{proof}
We start by computing $B=\mu(\Q \tens A)$ and the conductor $f=\{x\in B; xB \subset A\} \subset A$.
Since the conductor is an ideal of $B$, we can compute $v^e \in B/f$ where $e$ is the cardinal of $\mu(\Q \tens A)^p$, which is prime with $p$.
It only remains to test if $v^e \in A/f$, which can be done in polynomial time.
\end{proof}

\begin{theorem}
Given an order $A'$, and $v\in \mu(E)A'$, we can compute in polynomial time some $\omega \in \mu(E)_p$ such that $v\omega \in \mu(E)^pC$.
\end{theorem}
\begin{proof}
We start by computing the $W_i$ of~\autoref{thm:decomposeordre}.
Then, for each $W_i$, we will compute a $\omega_i \in \mu(C_{W_i})_p$ such that the image of $v$ multiplied by $\omega$ is in $\mu(\Q \tens C_{W_i})^pC$.
Since $C=\prod_i C_{W_i}$, we can return $\prod_i \omega_i$.

Therefore, we consider all $\Frk{m}_j \in W_i$ in sequence, adding them one by one in $W$.
Then, define $W'=W \cup \{\Frk{m}_j\}$.
The cardinal of $G=\mu(C_W)_p \times \mu(C_{W'}/C_W)_p$ is at most $(n+1)^2$ so that for each $v\in \mu(C_W)^pC_W \times \mu(C_{W'}/C_W)C_{W'}/C_W$ we can find one $\omega_j\in G$ such that $v\omega_j \in \mu(C_{W'})^p C_{W'}$.
By induction on the cardinal of $W$, the product of all the $\omega_j$ lifted in $C_{W_i}$ verifies our property.
\end{proof}

\begin{lemma}
Given a finite $\F_p$ algebra, additive generators of a vector space $V$, and multiplicative generators in $1+I$ of a group $G$ where $I$ is a nilpotent ideal, we can determine, if it exists, an element of $V \cap G$.
\end{lemma}
\begin{proof}
Remark that the multiplicative group $(1+I)/(1+I^2)$ is isomorphic to the vector space $I/I^2$ with the map $1+x \mapsto x$.
We can therefore compute $I$ and $I^2$, find a solution in $V/I^2 \cap G/(1+I^2)$ using linear algebra, and continue with generators in $1+I^2$.
Since the dimension of $1+I^2$ is smaller than the dimension of $I$, the process stops with $I=\{0\}$ within at most $n$ iterations.
\end{proof}

\begin{theorem}
Given an order $A'$, and $v\in \mu(E)A'$, we can compute in polynomial time some $\omega \in \mu(E)_p$ such that $v\omega \in \mu(E)^pA'$.
\end{theorem}
\begin{proof}
Using the previous theorem, and working with $v^e$ modulo $\{x\in C;xC\subset A\}$ with $e$ the cardinal of $\mu(E)_p$, we can assume that $v \in \mu(E)^pC$.
We define $C_i=\frac{A'}{p^i} \cap B$ and $f_i=\{x\in C_i;xC_i\subset C_{i-1}\}$ the corresponding conductors.
Because we have either $p|[C_i:C_{i-1}]$ or $C_i=C_{i-1}=C$, there exists a $k$ which is polynomial in the size of the input such that $C_k=C$, and all $C_i$, $f_i$ and $k$ can be computed in polynomial time.

Now, remark that $p\in f_i$ so that $C_i/f_i$ is a $\F_p$-algebra.
Next, remark that for all $\omega \in \mu(C_i)_p$, its order is a power of $p$ but the order of the elements in $(C_i/f_i)^*$ divides some $p^j-1$, so that the image of $\omega-1$ is a nilpotent.
Hence, we can apply the previous lemma with the images of the generators of $\mu(C_i)_p$ and the vector space $(C_{i-1}/f_i)/v$.
We conclude that given some $v\in \mu(C_i)_pC_{i-1}$, we can compute some $\omega \in \mu(C_i)_p$ such that $v\omega \in C_{i-1}$.

Finally, we go from $C_k$ to $C_0=A'$ and return the product of all the $\omega$ found by the algorithm.
\end{proof}

\begin{theorem}
Given an order $A$ and $v\in \mu(E)A$, we can determine $\omega\in \mu(E)$ such that $v\omega \in A$.
\end{theorem}
\begin{proof}
We first reduce $v$ to $A'$, and then return the product for all primes $p\leq n+1$ of the roots of unity returned by the previous theorem.
Since an abelian group is the product of its $p$-Sylow, we have computed $\omega$ such that $v\omega \in A$.
\end{proof}

\fi

\end{document}